\documentclass[onecolumn, 10 pt, journal]{IEEEtran}
\usepackage[utf8]{inputenc}
\usepackage{tikz} 
\usepackage{hyperref}
\usetikzlibrary{automata,arrows,shapes,snakes,automata,backgrounds,petri}
\usepackage{times}
\usepackage{caption}
\captionsetup{font=small}
\usepackage{setspace}
\usepackage{cite}
\usepackage{graphicx}
\usepackage{subcaption}
\usepackage{amsmath}
\usepackage{amssymb}
\usepackage{amsthm}
\usepackage{algorithm}
\usepackage{algorithmic}
\usepackage{booktabs}
\usepackage{multirow}
\usepackage{color}
\usepackage{paralist}
\usepackage{xiaoli}
\usepackage{wrapfig}
\newcommand{\vecx}{\mathbf{x}}

\newcommand{\vecw}{\mathbf{w}}
\newcommand{\poly}{\text{poly}}
\newcommand{\mat}[1]{\mathbf{#1}}
\newcommand{\vect}[1]{\mathbf{#1}}

\newcommand{\ABS}[1]{\left|#1\right|}
\newcommand{\ABSL}[1]{|#1|}
\newcommand{\SGN}[1]{\mathsf{sgn}[#1]}
\newcommand{\TWONL}[1]{\|#1\|_2}
\newcommand{\R}{\mathbb{R}} 
 
\begin{document}
\title{Sub-linear Time Support Recovery for Compressed Sensing using Sparse-Graph Codes}
  
\author{Xiao~Li,
	Dong~Yin,
        Sameer~Pawar,
        Ramtin~Pedarsani,
        and~Kannan~Ramchandran
\thanks{X. Li is with Cubist Systematic Strategies. Email: xiaoli@eecs.berkeley.edu. Part of this work was done when X. Li was a postdoc at UC Berkeley.}
\thanks{D. Yin and K. Ramchandran are with the Department
of EECS at UC Berkeley. Email: \{dongyin, kannanr\}@eecs.berkeley.edu.}
\thanks{S. Pawar is with Intel Corporation. Email: sameeronnet@gmail.com. Part of this work was done when S. Pawar was a graduate student at UC Berkeley.}
\thanks{R. Pedarsani is with the Department of ECE at UC Santa Barbara. Email: ramtin@ece.ucsb.edu.}
\thanks{This work was supported by grants NSF CCF EAGER 1439725, and NSF CCF 1116404 and MURI CHASE Grant No. 556016.}
\thanks{Several parts of this paper were presented in 2015 IEEE International Symposium on Information Theory (ISIT)~\cite{li2015sub}, 2016 IEEE International Conference on Acoustics, Speech and Signal Processing (ICASSP)~\cite{li2016recovering}, and 2016 54th Annual Allerton Conference on Communication, Control, and Computing~\cite{yin2016compressed}.}}

\maketitle

\begin{abstract}
We study the support recovery problem for compressed sensing, where the goal is to reconstruct the sparsity pattern of a high-dimensional $K$-sparse signal $\mathbf{x}\in\mathbb{R}^N$, as well as the corresponding sparse coefficients, from low-dimensional linear measurements with and without noise. Our key contribution is a new compressed sensing framework through a new family of carefully designed {\it sparse measurement matrices} associated with minimal measurement costs and a low-complexity recovery algorithm. Specifically, the measurement matrix in our framework is designed based on the well-crafted {\it sparsification} through capacity-approaching {\it sparse-graph codes}, where the sparse coefficients can be recovered efficiently in a few iterations by performing simple error decoding over the observations. We formally connect this general recovery problem with sparse-graph decoding in packet communication systems, and analyze our framework in terms of the measurement cost, computational complexity and recovery performance. Specifically, we show that in the noiseless setting, our framework can recover any arbitrary $K$-sparse signal in $O(K)$ time using $2K$ measurements asymptotically with a {\it vanishing error probability}. In the noisy setting, when the sparse coefficients take values in a finite and quantized alphabet, our framework can achieve the same goal in time $O(K\log(N/K))$ using $O(K\log(N/K))$ measurements obtained from measurement matrix with elements $\{-1,0,1\}$. When the sparsity $K$ is {\it sub-linear} in the signal dimension $K=O(N^\delta)$ for some $0<\delta<1$, our results are order-optimal in terms of measurement costs and run-time, both of which are sub-linear in the signal dimension $N$. The sub-linear measurement cost and run-time can also be achieved with continuous-valued sparse coefficients, with a slight increment in the logarithmic factors. More specifically, in the continuous alphabet setting, when $K=O(N^\delta)$ and the magnitudes of all the sparse coefficients are bounded below by a positive constant, our algorithm can recover an arbitrarily large $(1-p)$-fraction of the support of the sparse signal using $O(K\log(N/K)\log\log(N/K))$ measurements, and $O(K\log^{1+r}(N/K))$ run-time, where $r$ is an arbitrarily small constant. For each recovered sparse coefficient, we can achieve $O(\epsilon)$ error for an arbitrarily small constant $\epsilon$. In addition, if the magnitudes of all the sparse coefficients are upper bounded by $O(K^c)$ for some constant $c<1$, then we are able to provide a strong $\ell_1$ recovery guarantee for the estimated signal $\widehat{\vecx}$: $\|\widehat{\vecx} - \vecx\|_1 \le \kappa \|\vecx\|_1$, where the constant $\kappa$ can be arbitrarily small. This offers the desired scalability of our framework that can potentially enable real-time or near-real-time processing for massive datasets featuring sparsity, which are relevant to a multitude of practical applications. 
\end{abstract}

\IEEEpeerreviewmaketitle 

 
\section{Introduction}

A classic problem of interest is that of estimating an unknown vector $ \mathbf{x}$ of length $N$ from noisy observations
\begin{align}\label{formulation_koisy}
	\mathbf{y} = \mathbf{A} \mathbf{x} + \mathbf{w},
\end{align}
where $\mathbf{A}$ is an $M\times N$ known matrix typically referred to as the {\it measurement matrix} and $\mathbf{w}$ is an additive noise vector. We refer to $N$ as the {\it signal dimension}. In general, if $\mathbf{x}$ has no additional structure, it is impossible to recover $\mathbf{x}$ from fewer measurements than the signal dimension. However, if the signal is known to be sparse with respect to some basis, wherein only $K$ coefficients are non-zero or significant with $K\ll N$, it is possible to recover the signal from much fewer measurements. This has been studied extensively in the literature under the name of {\it compressed sensing} \cite{donoho2006compressed}. The compressed sensing problem of reconstructing high-dimensional signals from lower dimensional observations arises in diverse fields, such as medical imaging \cite{lustig2007sparse}, optical imaging \cite{candes2013phase}, speech and image processing \cite{elad2010sparse}, data streaming and sketching \cite{gilbert2010sparse}, etc. 

A large variety of measurement designs and reconstruction algorithms have been proposed in the literature to exploit the inherent sparsity of signals to recover them from low-dimensional linear measurements. Clearly, the design of good measurement matrices and efficient reconstruction algorithms are critical (see Section \ref{sec:summary_results} for a brief review of existing methods). The key to achieve this goal boils down to two questions of interest:
\begin{itemize}
	\item[\bf Q1)] Measurement cost: what is the minimum number of measurements $M$ required to guarantee recovery?
	\item[\bf Q2)] Computational cost: how fast can one reconstruct the signal given $M$ measurements from some $\mathbf{A}$?
\end{itemize}
The answer to {\bf Q1} is well understood under information-theoretic settings (e.g. \cite{gastpar2000necessary,wainwright2009information,akccakaya2010shannon}). In the presence of noise, the predominant result indicates a minimum measurement cost of ${O}(K\log (N/K))$ for exact support recovery, here referred to as the {\it order-optimal scaling}. For {\bf Q2}, it is desirable if the computational complexity scales linearly with the measurement cost ${O}(K\log (N/K))$. However, there are no existing schemes that achieve ${O}(K\log (N/K))$ costs in both measurements and run-time in the worst case. More specifically, in existing methods, for any fixed measurement matrix, one can always find a $K$-sparse signal such that the algorithm fails to recover the sparse coefficients using ${O}(K\log (N/K))$ measurements and run-time. To relax this worst-case assumption, an intriguing question is:
\begin{center}
\textit{``Under probabilistic settings, is it possible to achieve the order-optimal scaling in both the measurement cost and the computational run-time?''}
\end{center}
In this work, we answer this question in the affirmative under the \emph{sub-linear} sparsity regime $K={O}(N^\delta)$ for any constant $\delta\in(0, 1)$. To the best of our knowledge, this is the first constructive design for noisy compressed sensing that achieves the same order-optimal costs in both measurements and complexity under probabilistic guarantees. Meanwhile, we note that our algorithm also works in the linear sparsity regime where $K=O(N)$, with $O(K\log(N))$ costs in both measurements and run-time. In this regime, our algorithm brings new insights to the design of measurement matrix for compressed sensing, and the measurement cost and run-time are still order-optimal up to logarithmic factors.

\subsection{Design Philosophy}

\begin{figure}[t]
\begin{center}
\includegraphics[width=0.5\linewidth]{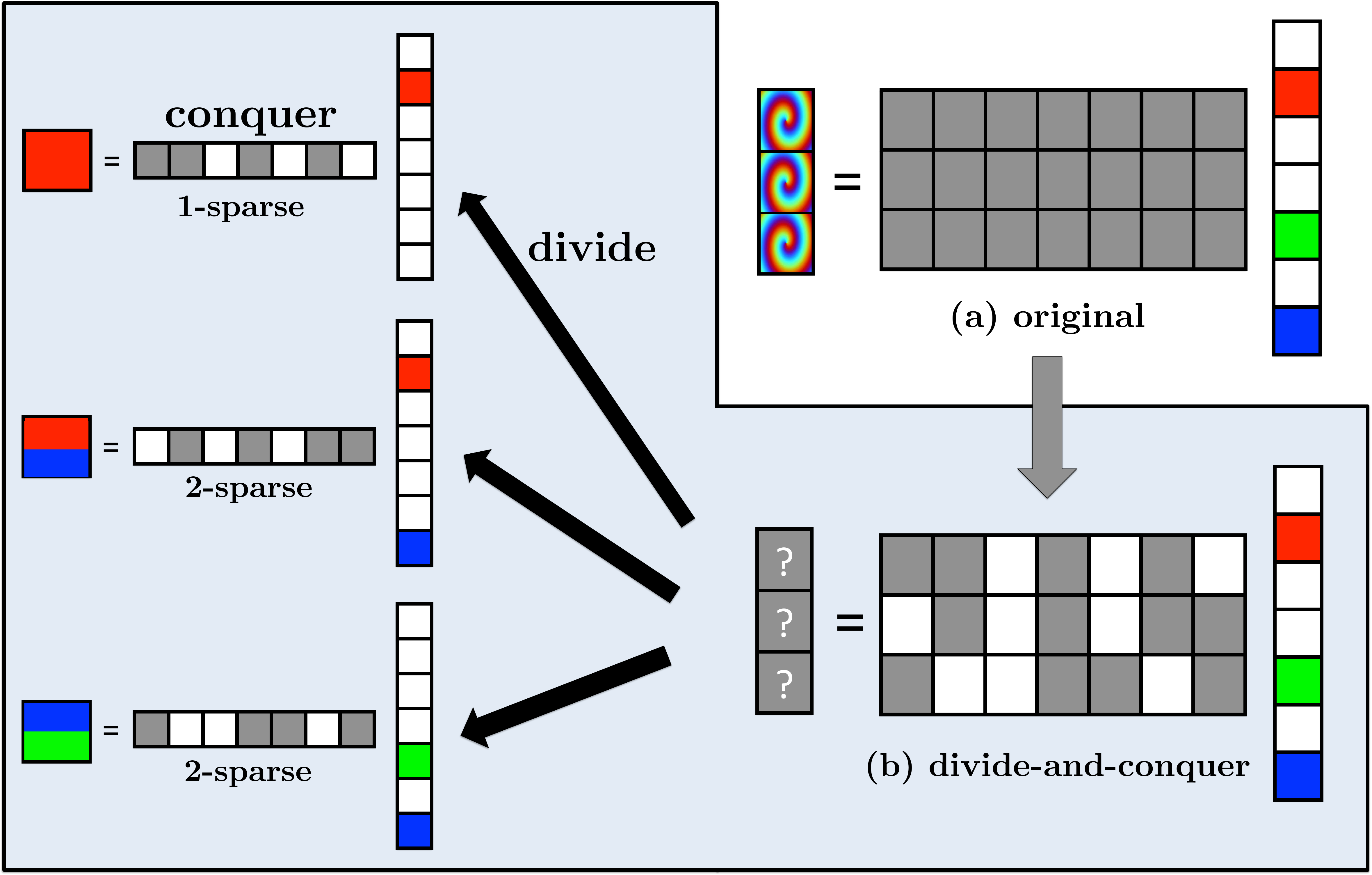}
\caption{\footnotesize A conceptual diagram of the ``divide-and-conquer'' philosophy used in our design. Zero entries are colored in {\it white} and the non-zero entries in the sparse vector are colored in {\it red}, {\it green} and {\it blue} respectively. We have a $3$-sparse recovery problem in sub-figure (a), where the measurement matrix is colored in {\it grey} to indicate an arbitrary design. The resulting measurements are colored as {\it mixtures} because of the arbitrary mixing of different color components (red, green, blue). In sub-figure (b), we sparsify the measurement matrix by placing three zeros in each row shown as the {\it white} spots. The resulting measurement matrix divides the $3$-sparse recovery problem into multiple sub-problems, where one of the sub-problems involves only one color that can be easily identified. In this example, the first measurement contains a single {\it red} color, whereas the second and third measurements contain a {\it mixture or red and blue} and a {\it mixture of blue and green} respectively. If the decoder knows that the first measurement contains a single {\it red} color, it can peel off its contribution from the {\it mixture of red and blue} in the second measurement, which forms a new measurement containing a single {\it blue} color. 
}\label{fig:idea_diagram}
\end{center}
\end{figure}
 
We take a simple but powerful ``divide-and-conquer'' approach to the problem by viewing compressed sensing through a ``sparse-graph coding'' lens. Our design philosophy is depicted in \figref{fig:idea_diagram} as a cartoon illustration, where we use different colors to distinguish the entries in the sparse vector, namely, we choose {\it red, green} and {\it blue} respectively for the non-zero entries, and {\it white} for zero entries. A conventional design in compressed sensing is to generate weighted linear measurements of the sparse vector through a carefully designed {\it measurement matrix} \cite{baraniuk2007compressive}. In this example, all the entries of the measurement matrix are colored in {\it grey} to indicate an arbitrary design and the corresponding measurements are some generic mixtures of {\it red, green} and {\it blue}, as shown in \figref{fig:idea_diagram}-(a). 
 
We design the measurement matrix by sparsifying each row of the measurement matrix with zero patterns guided by sparse-graph codes, indicated by the {\it white} spots in \figref{fig:idea_diagram}-(b). This new measurement matrix leads to a different set of measurements, where some contain single colors and some contain their mixtures. Our design philosophy is to {\it disperse} the signal into multiple single color measurements (e.g., the red color in the first measurement) and {\it peel} them off from color mixtures (e.g., the red-blue mixture in the second measurement and the blue-green mixture in the third measurement) to decode other unknown colors in the spirit of ``divide-and-conquer''. By analogy, the use of sparse-graph codes essentially {\it divides} the general sparse recovery problem into multiple sub-problems that can be easily {\it conquered} and synthesized for reconstructions. Furthermore, by viewing our design from a coding-theoretic lens, our design can further leverage the properties of sparse-graph codes in terms of both measurement cost (capacity-approaching) and computational complexity (fast peeling-based decoding). This leads to a new family of sparse measurement matrices simultaneously featuring low measurement costs and low computational costs.

\subsection{Objective}
We mainly focus on the recovery of the {\it exact support} of any $K$-sparse $N$-length signal and its sparse coefficients. This so-called {\it support recovery} problem arises in an array of applications such as model selection \cite{candes2009near}, sparse approximation \cite{fuchs2005recovery} and subset selection in regression problems \cite{greenshtein2006best}. Given $\widehat{\mathbf{x}}$ generated by some recovery method, a typical metric for {\it support recovery} is the error probability $\Pf$ of failing to recover the {\it exact support} of the signal:
\begin{align}\label{def_Pe}
	\Pf \defn \Prob{\supp{\widehat{\mathbf{x}}}\neq\supp{\mathbf{x}}},
\end{align}
where $\supp{\cdot}$ represents the support of some vector  $\supp{\mathbf{x}} \defn \left\{k:~x[k]\neq 0,~0\le k\le N-1\right\}$. The probability $\Pf$ is evaluated with respect to the randomness associated with the noise $\mathbf{w}$ {\it and} the measurement matrix $\mathbf{A}$. In other words, for any given $K$-sparse signal $\mathbf{x}$, our design generates a measurement matrix $\mathbf{A}$ (from a specific random ensemble\footnote{Note that this is what is known as the ``for-each'' guarantee \cite{gilbert2010sparse} in contrast to the ``for-all'' guarantee in some compressed sensing contributions, where a single measurement matrix is used for all sparse signals once generated.}) and produces an estimate $\widehat{\mathbf{x}}$ whose support matches {\it exactly} that of $\mathbf{x}$ with probability $1-\Pf$ approaching $1$ asymptotically in $K$ and $N$. In addition to support recovery, we also target accurate recovery of the sparse coefficients. In the noiseless setting and the noisy setting where the sparse coefficients take quantized values, we aim to recover the exact values of the sparse coefficients. In the continuous alphabet setting, we aim to get strong $\ell_\infty$ and $\ell_1$ norm recovery guarantees.


\subsection{Contributions}\label{sec:contributions}
Our key contribution is the proposed new compressed sensing design framework for support recovery, with ${O}(K\log (N/K))$ costs for {\it both} measurements and run-time in the presence of noise. The measurement cost and computational complexity are obtained under the assumption that the sparse coefficients take values in a quantized alphabet, which can have arbitrarily fine but finite precision, and is practical in most cases of interest. Moreover, with a slight increment in the logarithmic factor, our results can be extended to the continuous alphabet setting, where we can obtain recovery guarantees in the $\ell_0$ and $\ell_1$ norms: for each recovered sparse coefficient, we can achieve $O(\epsilon)$ error for an arbitrarily small constant $\epsilon$; if the magnitudes of all the sparse coefficients are upper bounded by $O(K^c)$ for some constant $c<1$, then the estimated signal $\widehat{\vecx}$ satisfies $\|\widehat{\vecx} - \vecx\|_1 \le \kappa \|\vecx\|_1$, where the constant $\kappa$ can be arbitrarily small. In the noiseless setting, our measurement cost is can be reduced to $2K$ asymptotically, and run-time is reduced to $O(K)$ accordingly. \emph{When $K$ is sub-linear in $N$, and more specifically $K=O(N^\delta)$ for some $0<\delta<1$, our results are order-optimal and furthermore, sub-linear in the signal dimension $N$.} This offers the desired scalability of the algorithm that can potentially enable real-time or near-real-time processing for massive datasets featuring sparsity, which are relevant to a multitude of practical applications. Here, using the big-O notation\footnote{Recall that a single variable function $f(x)$ is said to be ${O}(g(x))$, if for a sufficiently large $x$ the function $|f(x)|$ is bounded above by $|g(x)|$, i.e., $\lim_{x\rightarrow\infty} |f(x)| < c|g(x)|$ for some constant $c$. Similarly, $f(x) = \Omega(g(x))$ if $\lim_{x\rightarrow\infty} |f(x)| > c|g(x)|$ and $f(x) = o(g(x))$ if the growth rate of $|f(x)|$ as $x\rightarrow\infty$, is negligible as compared to that of $|g(x)|$, i.e. $\lim_{x\rightarrow\infty} |f(x)|/|g(x)| = 0$.}, we briefly summarize our technical result as follows.

\begin{table}[h]
\begin{center} 
\begin{tabular}{|c|c|c|c|c|c|c|c|}
  \hline
   & Measurement  & Complexity & Recovery Guarantee\\
  \hline
  Noiseless & $2(1+\epsilon)K$ &   ${O}(K)$  & Support \& exact value\\
  \hline
  Noisy (quantized alphabet) & $O(K \log (N/K))$ &   ${O}(K\log (N/K))$  &  Support \& exact value\\
  \hline
  Noisy (continuous alphabet) & $O(K \log(N/K)\log\log(N/K))$ &   ${O}(K\log^{1+r} (N/K))$  &  Support \& $\ell_\infty$, $\ell_1$ norm bound\\
  \hline  
\end{tabular}
\end{center}
\caption{Measurement cost and complexity of our framework when $K=O(N^{\delta})$, $\delta\in(0,1)$ ($\epsilon>0$ and $r>0$ are arbitrarily small constants)}
\label{Table_Results}
\end{table}
Here, we also note that one can directly apply our algorithm in the linear sparsity setting, i.e., $K=O(N)$. In this scenario, the $\log(N/K)$ and $\log\log(N/K)$ factors in Table~\ref{Table_Results} are replaced with $\log(N)$ and $\log\log(N)$, respectively. Therefore, the measurement and time costs of our algorithm are still order-optimal up to logarithmic factors.

We now provide some intuition about our results. Recall that the idea is to use sparse-graph codes to structure the measurement matrix in order to generate different measurements containing isolated $1$-sparse coefficients, as well as their mixtures. From \figref{fig:idea_diagram}, these $1$-sparse coefficients (e.g., the red color in the first measurement) can be peeled off from their mixtures (e.g., the red and blue mixture in the second measurement), which forms new $1$-sparse coefficients for further peeling. This divide-and-conquer approach allows us to tackle a $K$-sparse recovery problem by solving a series of $1$-sparse problems of dimension $N$. Therefore, the challenge is to keep this peeling process going until all $1$-sparse components have been recovered. Hence we invoke sparse-graph codes principles to study this ``turbo'' peeling process theoretically to guarantee the success of decoding. As a result, we can focus on solving each $1$-sparse problem. Clearly, depending on the specific measurement matrix used, there are many ways to solve these $1$-sparse problems in $N$ dimension. 

In the noiseless setting, we choose the first two rows of the {\it Discrete Fourier Transform (DFT) matrix} as the measurement matrix before being sparsified by sparse-graph codes, and solve the $1$-sparse problem by leveraging spectral estimation techniques \cite{pawar2013computing}. We have two measurements to estimate the unknown index and the unknown value of the $1$-sparse coefficient, which is equivalent to estimating the frequency and amplitude of a complex discrete sinusoid from the DFT matrix. Therefore, in the noiseless setting, the frequency can be estimated by simply examining the relative phase between the two measurements, which only requires ${O}(1)$ measurements and computations. Then the unknown value of the coefficient can be obtained easily given the frequency.

To motivate our noisy result, we begin with another approach in the noiseless scenario by using a simple $\log_2 N \times N$ {\it binary indexing matrix}, which contains the binary index vector of each column included in the set of $N$ columns divided in the sub-problem. Using this measurement matrix, there are $\log_2 N$ measurements in each sub-problem. By taking the absolute values of the measurements, in the noiseless setting, we can directly obtain the signs of the measurements as the binary index of the $1$-sparse coefficient (assuming that the coefficient is positive\footnote{When the sign of the coefficient is unknown, we can use an extra row consisting of all one's to provide a reference sign.}). In fact, the signs of the measurements can be viewed as a length-$\log_2 N$ message bits for obtaining the unknown location of the $1$-sparse coefficient. Therefore in the noisy setting, according to the channel coding theorem, we can encode the binary indexing matrix using good channel codes with $N$ codewords of block length $O(\log_2 N)$ such that it can still be decoded correctly in the presence of noise with high probability. If the channel code has a {\it linear} decoding time in its block length ${O}(\log N)$, then we can achieve ${O}(\log N)$ costs for both measurements and computations for solving each $1$-sparse problem. Since $K=O(N^\delta)$, our results are order-optimal because $O(\log N) = O(\log (N/K))$, where the big-O constant changes according to $\delta$.

Finally, since there are in total $K$ sparse coefficients to estimate, the overall measurement and computational costs are further multiplied by a factor of $K$, which gives our result.

\subsection{Notation and Organization}

Throughout this paper, we use $\mathbb{R}$ and $\mathbb{C}$ to denote the real and complex fields. For any non-negative integer $n$, we denote by $[n]$ the set $\{0,1,\ldots, n-1\}$. Any boldface lowercase letter such as $\mathbf{x}\in\mathbb{C}^N$ represents a vector containing the complex elements\footnote{Here, we slightly abuse the symbol $[n]$. When attached to a lowercase letter, e.g., $x[n]$, $[n]$ represents the index of elements in a vector; otherwise, $[n]$ represents the set $\{0,\ldots, n-1\}$.} $\mathbf{x}=[x[0],\cdots,x[N-1]]^T$, and a boldface uppercase letter, such as $\mathbf{X}\in\mathbb{C}^{M\times N}$, represents a matrix with elements $X_{i,j}$ for $i\in[M]$ and $j\in[N]$. We denote the support of a vector $\mathbf{x}$ by $\supp{\mathbf{x}}$. For any subset $\Gamma$ of $[N]$, we define $\vecx_\Gamma$ as a vector with elements given by
$$
x_\Gamma[k] = \begin{cases}
x[k] & \text{ if } k\in\Gamma, \\
0 & \text{ otherwise.}
\end{cases}
$$
The inner product between two vectors is defined as $\ip{\mathbf{x}}{\mathbf{y}}=\sum_{k\in[N]}x[k](y[k])^\ast$ with arithmetic over $\mathbb{C}$. Let $\mathcal{A}$ be a set. We denote the cardinality of $\mathcal{A}$ by $|\mathcal{A}|$, and the complement of $\mathcal{A}$ by $\mathcal{A}^c$.


%

This paper is organized as follows. We first summarize our main technical results in Section \ref{sec:summary_results}, followed by a brief overview of existing sparse recovery methods in Section~\ref{sec:related_work}. In Section \ref{sec:framework_overview}, for illustration purpose we provide a concrete example of our design framework using sparse-graph codes, followed by the analysis of the peeling decoder for sparse support recovery. Based on the example, we propose the principle and mathematical formulation of our measurement design in Section \ref{sec:meas_design}. We provide the general framework of the peeling decoding algorithm, and the density evolution analysis in Section~\ref{sec:analysis_peeling_decoder}. Then, we proceed to discuss specific constructions for our noiseless recovery results in Section \ref{sec:noiseless}, and further the noisy recovery results in the quantized alphabet and continuous alphabet settings in Section \ref{sec:noisy} and Section~\ref{sec:noisy_continuous}, respectively. We provide numerical results in Section \ref{sec:numerical} to corroborate our noisy recovery performance, and make conclusions in Section~\ref{sec:conclusion}.
 

\section{Main Results}\label{sec:summary_results}
In this section, we summarize the main results in this paper. We consider the problem of recovering the sparse\footnote{More generally, we also allow the signal to be sparse in any linear transform domain. If the signal is sparse in the transform domain, one can pre-multiply the measurement matrix $\mathbf{A}$ on right by the appropriate inverse transform.} signal $\mathbf{x}$ from the measurements
obtained in \eqref{formulation_koisy}. In particular, we are interested in support recovery for both the noiseless and noisy settings. Our design is characterized by the triplet $(M, T, \Pf)$, where $M$ is the measurement cost, $T$ is the computational complexity in terms of arithmetic operations, and $\Pf$ is the failure probability defined in \eqref{def_Pe}. 

\begin{thm}[\bf Noiseless Recovery]\label{thm_koiseless_recovery}
For any $\epsilon>0$, with probability at least $1-O(1/K)$, our framework can recover any $K$-sparse signal $\mathbf{x}$ in time $T=O(K)$ with $M=2(1+\epsilon)K$ measurements if $\mathbf{w}=\mathbf{0}$.
\end{thm}
Details of the noiseless recovery algorithm is provided in Section \ref{sec:noiseless}.


When it comes to the noisy settings, we assume that the elements in the noise vector $\vecw$ are i.i.d. Gaussian distributed with mean $0$ and variance $\sigma^2$. We further consider two cases in the noisy setting: the quantized alphabet setting and the continuous alphabet setting. In the quantized alphabet setting, all the non-zero coefficients belong to a finite set $\mathcal{X}=\{\pm \rho, \pm 2\rho, \ldots, \pm B\rho\}$, and the minimum signal-to-noise ratio (SNR) is denoted by $\SNRmin \defn \rho^2/\sigma^2$. Our main result is as follows.

\begin{thm}[\bf Noisy Recovery, Quantized Alphabet]\label{thm_sub-linear_recovery}
Let $K=O(N^\delta)$ for some $\delta\in(0,1)$. With probability at least $1- O(1/K)$, our framework can recover any $K$-sparse signal with quantized alphabet $\mathcal{X}$ in time $T=O(K \log (N/K))$ with $M= O(K \log(N/K))$ measurements, where the big-O constant depends on $\SNRmin$ and the sparsity regime $\delta$.
\end{thm}
We provide the details in Section \ref{sec:noisy} and Appendix \ref{sec:noisy_recovery_perf_analysis}. In addition, when $K=O(N)$, the run-time and measurement cost become $M = O(K\log(N))$ and $T= O(K\log(N))$, respectively.

In the continuous alphabet setting, we assume that all the sparse coefficients have absolute values at least $\beta>0$, i.e., for any $k\in\supp{\vecx}$, we have $| x[k] | \ge \beta$. We provide the performance guarantee for recovering an arbitrarily large fraction of the support, as well as the $\ell_\infty$ and $\ell_1$ norm recovery guarantees. 

\begin{thm}[\bf Noisy Recovery, Continuous Alphabet]\label{thm_continuous_recovery}
Let $K=O(N^\delta)$ for some $\delta\in(0,1)$. Let $\Gamma$ be the support of $\vecx$, and $\widehat{\vecx}$ be the recovered signal with support $\widehat{\Gamma}$. Suppose that for some $\epsilon>0$, $\beta = \Omega(\max\{\epsilon, (\sigma+\epsilon)^2\})$, and that $\|\vecx\|_\infty \le O(K^c)$ for some constant $c\in(0,1)$. Then, using $M= O(K\log(N/K)\log\log(N/K))$ measurements, our algorithm satisfies: 
\begin{itemize}
\item $\widehat{\Gamma} \subset \Gamma$ (no false discovery)
\item $| \widehat{\Gamma} | \ge (1-p) K$, for arbitrarily small constant $p>0$ (recovering an arbitrarily large fraction of the support)
\item $\|\widehat{\vecx}_{\widehat{\Gamma}} - \vecx_{\widehat{\Gamma}} \|_\infty \le O(\epsilon)$ ($\ell_\infty$ norm recovery guarantee) 
\item $\|\widehat{\vecx} - \vecx\|_1 \le \kappa \|\vecx\|_1$, for an arbitrarily small constant $\kappa > 0$ ($\ell_1$ norm recovery guarantee)
\end{itemize}
with probability at least $1-O(1/\poly(N))$. Further, our algorithm runs in time $T=O(K\log^{1+r}(N/K))$ with an arbitrary small constant $r>0$.
\end{thm}

The details of the continuous alphabet setting are provided in Section~\ref{sec:noisy_continuous}. Again, we mention that in the linear sparsity regime where $K=O(N)$, the measurement cost and run-time become $M= O(K\log(N)\log\log(N))$ and $T=O(K\log^{1+r}(N))$, respectively. In the following discussion, we focus on the sub-linear sparsity regime where $K=O(N^\delta)$. In the continuous alphabet setting, the definition of minimum signal-to-noise ratio $\SNRmin$ is changed to $\SNRmin := \frac{\epsilon^2}{\sigma^2}$, where $\epsilon$ is the accuracy in the $\ell_\infty$ norm in Theorem~\ref{thm_continuous_recovery}. In the $\ell_1$ recovery guarantee, the constant $\kappa$ depends on $\epsilon$, $\beta$, and $p$, and can be made arbitrarily small by tuning the design parameters in the algorithm. Here, since we focus on the regime where $K$ and $N$ approach infinity, we hide the dependence on $\epsilon$, $p$, $\delta$, $\SNRmin$ in the big-O notation in the measurement cost and run-time. As one can see, the continuous alphabet setting is more complicated than the quantized alphabet setting, and in Theorem~\ref{thm_continuous_recovery}, we only guarantee to recover an arbitrarily large fraction of the support of $\vecx$. However, recovering the full support is indeed possible by running the algorithm $O(\log K)$ times independently, and collecting all the recovered sparse coefficients. In this case, we can recover the full support with $M= O(K\log^2(N/K)\log\log(N/K))$ measurements and time $T=O(K\log^{2+r}(N/K))$.
Furthermore, the reason that the $\log\log(N/K)$ term appears in the measurement cost is that, we design a concatenated code in order to solve the $1$-sparse problem. We would like to mention that the use of this code is mainly for theoretical reason. Under a mild conjecture on the existence of a code with universal decoding algorithm and linear complexity, we can further eliminate the $\log\log(N/K)$ factor. With this conjecture, our measurement cost for large fraction recovery becomes $M= O(K\log(N/K))$ and computational complexity becomes $T = O(K\log(N/K))$; and the measurement cost and computational complexity for full support recovery become $M= O(K\log^2(N/K))$ and $T = O(K\log^2(N/K))$, respectively. For comparison, we list the results for the continuous alphabet setting in Table~\ref{tab:compare_continuous}.

\begin{table}[!h]
\label{tab:result}
\centering
\begin{tabular}{|c|c|c|}
\hline
Recovery & Measurement & Complexity\\
\hline
Large fraction & $ O(K\log(N/K)\log\log(N/K)) $ & $ O(K\log^{1+r}(N/K)) $ \\
\hline
Large fraction with conjecture & $O(K\log(N/K))$ & $O(K\log(N/K))$ \\
\hline
Full recovery & $O(K\log^2(N/K)\log\log(N/K))$ & $O(K\log^{2+r}(N/K))$ \\
\hline 
Full recovery with conjecture & $O(K\log^2(N/K))$ & $O(K\log^2(N/K))$ \\
\hline
\end{tabular}
\caption{Measurement cost and computational complexity in the continuous alphabet setting, $K=O(N^\delta)$}
\label{tab:compare_continuous}
\end{table}

%

\section{Related Works}\label{sec:related_work}
In this section, we review the relevant works in the literature.
It is worth noting that only with a few exceptions, most of the existing compressed sensing and sparse recovery results have been predominantly developed for sparse approximation under the $\ell_2/\ell_1$-norm or $\ell_1/\ell_1$-norm approximation error metrics\footnote{$\ell_p/\ell_q$-norm guarantees refer to the error metrics measured with respect to the best $K$-term approximation error $\left\|\mathbf{x}_K-\mathbf{x}\right\|$ (i.e., the vector $\mathbf{x}_K$ is the best $K$-term approximation containing the $K$ most significant entries in the sparse vector $\mathbf{x}$), where the recovered sparse signal $\widehat{\mathbf{x}}$ satisfies $\left\|\widehat{\mathbf{x}}-\mathbf{x}\right\|_p\leq \kappa\left\|\mathbf{x}_K-\mathbf{x}\right\|_q$ for some absolute constant $\kappa>0$.}, with a relatively much lower coverage of support recovery  \cite{candes2006robust,candes2007dantzig,wu2012optimal,davenport2011introduction,gilbert2010sparse}. Meanwhile, necessary and sufficient conditions for support recovery have been studied in different regimes under various distortion measures using optimal decoders \cite{reeves2008sampling,gastpar2000necessary,wainwright2009information,aeron2010information,akccakaya2010shannon}, $\ell_1$-minimization methods \cite{wainwright2009sharp,candes2009near} and greedy methods \cite{cai2011orthogonal}. For example, it is shown in \cite{wainwright2009information} that ${O}(K\log(N/K))$ measurements are sufficient and necessary for support recovery when the measurement matrix consists of independent identically distributed (i.i.d.) Gaussian entries under Gaussian noise. Similar conditions under other signal and measurement models are also reported in \cite{fletcher2009necessary,wang2010information,jin2011limits}. Nonetheless, constructive recovery schemes that specifically target {\it support recovery} are relatively scarce \cite{hormati2009estimation,haupt2011robust,wang2010information}, especially those that come with order-optimal measurement costs and low computational complexities (see~\cite{sarvotham2006sudocodes,khajehnejad2011summary,gilbert2012approximate,gilbert2006algorithmic}). 
In the following, we categorize and briefly review the relevant works.

\subsection{Convex Relaxation Approach} The classic formulation for sparse recovery from linear measurements is through an $\ell_0$-norm minimization, which is a non-convex optimization problem. This problem has been known to be notoriously hard to solve. Convex optimization techniques relax the original combinatorial problem to a convex $\ell_1$-norm minimization problem, where computationally efficient algorithms are designed to solve this relaxed problem.
It has been shown that as long as the measurement matrices satisfy the Restricted Isometry Property (RIP) or mutual coherence (MC) conditions, the $\ell_1$-relaxation of the original problem has exactly the same sparse solution as the original combinatorial problem. This class of methods is known to provide a high level of robustness against the measurement noise, and furthermore, do not depend on the structure of measurement matrices. Popular algorithms in this class include LASSO \cite{tibshirani1996regression}, Iterative Hard Thresholding (IHT) \cite{blumensath2009iterative}, fast iterative shrinkage-thresholding algorithm (FISTA) \cite{beck2009fast}, message passing \cite{donoho2009message}, Dantzig selector \cite{candes2007dantzig} and so on. Most of the existing results along this line measurement matrices that are characterized by a measurement cost of ${O}(K\log(N/K))$ and a computational complexity $O(\mathrm{poly}(N))$. 

\subsection{Greedy Methods} Another class of methods, referred to as greedy iterative algorithms, attempts to solve the original $\ell_0$-minimization problem directly using successive approximations of the sparse signal through various heuristics. Examples include Orthogonal Matching Pursuit (OMP) \cite{tropp2007signal}, CoSaMP \cite{needell2009cosamp}, Regularized OMP (ROMP) \cite{needell2009uniform}, Stagewise OMP (StOMP) \cite{donoho2012sparse} and so on. Similar to convex relaxation approaches, this class also does not depend on the structure of the measurement. Although greedy algorithms are generally faster in practical implementations than the techniques based on convex relaxations, the common computational cost still scales as ${O}(\mathrm{poly}(N))$ for both noiseless and noisy settings, with a few exceptions that incur near-linear run-time ${O}(N\log N)$ (e.g., StOMP algorithm \cite{donoho2012sparse}). Besides, the measurement matrix is typically stated in terms of MC conditions\footnote{The measurement scaling of ${O}(K\log (N/K))$ for greedy pursuit methods exists under relaxed settings (e.g. bounded noise scenarios or probabilistic guarantees \cite{tropp2005signal}). While there are some results on OMP based on the RIP, it is still ongoing work (see \cite{davenport2011introduction}).} which require ${O}(K^2)$ measurements. This phenomenon is commonly referred to as the square-root bottleneck, where the limit of sparsity for successful recovery is on the order of $K={O}(\sqrt{N})$ even if measurement matrices achieving the MC lower bound are used (i.e. the Welch bound \cite{welch1974lower}). 

\subsection{Coding-theoretic Approach} This class of methods borrows the insights from modern coding theory to facilitate measurement designs and recovery algorithms. Compressed sensing measurement designs have been extensively studied from a coding-theoretic lens. For instance, \cite{howard2008fast,applebaum2009chirp} exploit the algebraic properties of Reed-Muller codes and Delsarte Goethals codes, \cite{akccakaya2008frame} uses a generalization of Reed-Solomon codes, and \cite{dimakis2012ldpc} establishes the connection between the channel decoding problem and the convex relaxation approach. Meanwhile, a multitude of work has emerged based on {\it expander graphs} \cite{xu2007efficient,jafarpour2009efficient}, a popular design element in modern coding theory, which achieves near-linear time\footnote{Using the same measurement design based on expanders, $\ell_1$-minimziation can also be shown to achieve similar performance in polynomial time\cite{berinde2008combining}.} recovery ${O}(N\log (N/K))$ using ${O}(K\log (N/K))$ measurements in the noiseless setting. Motivated by expander-based designs, researchers have proposed greedy approximation schemes that achieve similar costs, such as Expander Matching Pursuit (EMP) \cite{indyk2008near} and Sparse Matching Pursuit (SMP) \cite{berinde2008practical}. Last but not least, there is a wide range of recovery algorithms using modern decoding principles such as list decoding \cite{parvaresh2008explicit,pham2009sublinear}, efficient error-correcting codes via message passing \cite{sarvotham2006sudocodes,bakshi2012sho,zhang2008compressed}. Recently, \cite{donoho2012information}
uses spatially-coupled LDPC codes in the measurement design and an approximate message passing decoding algorithm for recovery, which achieves the information-theoretically optimal measurement cost ${O}(K)$ given by \cite{wu2012optimal} under a source coding setting. However, the decoding complexity remains polynomial time in $N$. Particularly relevant to our work are those based on fast verification-based decoding  \cite{zhang2012verification,sarvotham2006sudocodes,finiasz2012private}, where the sparse coefficients are solved by verifying and correcting each symbol iteratively. The Sudocodes design \cite{sarvotham2006sudocodes} introduces a noiseless scheme with ${O}(K\log N)$ measurements and sub-linear time computations ${O}(K\log K \log N)$ through a two-part verification decoding procedure. Further, \cite{zhang2012verification} proposes a general high rate LDPC design with applications in compressed sensing, which provably provides guarantees for a broad class of measurement matrices under verification-based decoding, where the Sudocodes \cite{sarvotham2006sudocodes} is mentioned as a special case therein. Further, \cite{khajehnejad2011summary} proposed an algorithm that achieves a sample complexity of ${O}(K\log N \log\log N)$ and run-time ${O}(\mathrm{poly}(K\log N))$ using a well-designed measurement matrix based on the proposed ``summary-based'' structure. Although our design shares certain elements in terms of the code properties being used, our approach differs significantly in designing the verification decoding schemes to achieve {\it sub-linear} time both {\it in the absence} and {\it presence} of noise, as well as the associated performance analysis. 

\subsection{Group Testing and Data Stream Computing}
This class of methods exploit linear ``sketches'' of data for sparsity pattern recovery in {\it group testing}~\cite{du1993combinatorial} and {\it data stream computing} \cite{charikar2004finding}. 
The major difference in this class of methods is that it mostly deals with noiseless measurements and that the measurement matrix can be freely designed to facilitate recovery. 
In group testing, the common scenario is that we need to devise a collection of tests to find $K$ anomalous items from $N$ total items, where the typical goal is to recover the {\it support} of the underlying sparse vector and minimize the number of tests performed (measurements taken) \cite{indyk2010efficiently}. In particular, \cite{cormode2006combinatorial} develops a compressed sensing design using group testing principle with ${O}(K\log^2N)$ measurements and ${O}(K\log^2N)$ operations. On the other hand, the goal of data stream computing is to maintain a short linear sketch of the network flows for approximating the sparse vector with some distortion measure. Examples include the count-min/count-sketch methods \cite{haupt2011robust} and so on. Typical results in this bulk of literature require ${O}(K\log (N/K))$ measurements and near-linear time ${O}(N\log N)$ (see \cite{gilbert2010sparse}). While there is a subset of sketching algorithms that achieve sub-linear time with ${O}(K\log(N/K))$ and ${O}(K\log^{{O}(1)} N)$ operations \cite{gilbert2007one,gilbert2012approximate,gilbert2006algorithmic}, these results typically provide {\it constant} failure probability guarantees for noiseless\footnote{Although sketching algorithms are not derived specifically to address noisy measurements, they could potentially be quite robust to various forms of noise.} measurements and sparse approximation instead of support recovery. 
%
%
 

\section{Main Idea of Compressed Sensing using Sparse-Graph Codes}\label{sec:framework_overview}



In this section, we present our design philosophy depicted in \figref{fig:idea_diagram} with more details, and describe the main idea of our measurement design and recovery algorithm through a simple example in the noiseless setting. 
We illustrate the principle of our recovery algorithm by connecting support recovery with sparse-graph decoding using an ``oracle'' (described below). Then, using the insights gathered from the oracle-based decoding algorithm, we explain how we can get rid of the ``oracle'' using the same example.

\subsection{Oracle-based Sparse-Graph Decoding}\label{sec:simple_example}

Consider a simple illustration consisting of a sparse signal $\mathbf{x}$ of length $N=16$ with $K=5$ non-zero coefficients $x[1]=1$, $x[3]=4$, $x[5]=2$, $x[10]=3$ and $x[13]=7$. To illustrate the principle of our recovery algorithm, we construct a bipartite graph with $16$ left nodes and $9$ right nodes. The graph has the following properties:
\begin{itemize}
	\item Each {\it left node} labeled with $k$ is assigned a value $x[k]$ for $k\in[N]$;
	\item Each {\it left node} is connected to the {\it right nodes} according to the  {\it sparse} bipartite graph\footnote{Since the values of the right nodes are not affected by the left nodes carrying zero coefficients, we show only the edges from the left nodes with non-zero values $x[k]\neq 0$.} in \figref{fig:example_bipartite};
	\item Each {\it right node} labeled with $r$ is assigned a value $y_r$ equal to the complex sum of its left neighbors, similar to the parity-check constraints of the LDPC codes. 
\end{itemize}

\begin{figure}[t]
\begin{center}
\centering
\includegraphics[width=0.3\linewidth]{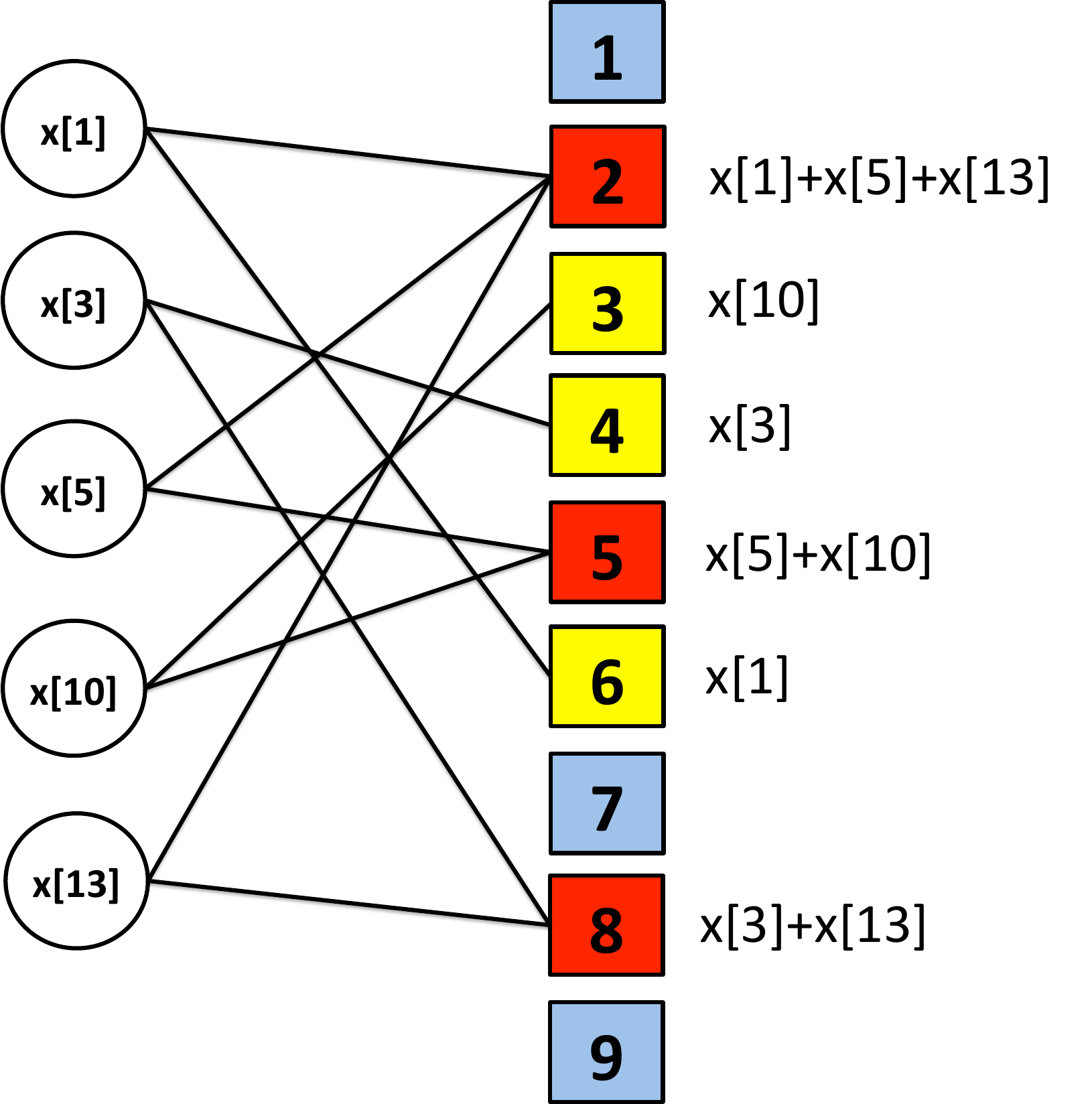}
\caption{Example consisting of $5$ left nodes with $2$ edges randomly connected to the right nodes. Blue represents ``zero-ton'', yellow represents ``single-ton'' and red represents ``multi-ton''.}\label{fig:example_bipartite}
\end{center}
\end{figure}

Now we briefly introduce how this bipartite graph helps us recover the $20$-length sparse signal $\mathbf{x}$ on the left nodes from the $9$ measurements associated with the right nodes:
\begin{align*}
	y_1&=y_7=y_9=0,\\
	y_2&=x[1]+x[5]+x[13],\\
	y_3&=x[10],\\
	y_4&=x[3],\\
	y_5&=x[5]+x[10],\\
	y_6&=x[1],\\
	y_8&=x[3]+x[13].
\end{align*}
Depending on the connectivity of the sparse bipartite graph, we categorize the measurements associated with the right nodes into the following types:
\begin{enumerate}
	\item {\bf Zero-ton}: a right node is a zero-ton if it does not involve any non-zero coefficient (e.g., {\it blue} in \figref{fig:example_bipartite}). 
	\item {\bf Single-ton}: a right node is a single-ton if it involves only one non-zero coefficient (e.g., {\it yellow} in \figref{fig:example_bipartite}).  More specifically, we refer to the index $k$ of the non-zero coefficient $x[k]$ and its associated value $x[k]$ as the {\bf index-value pair} $(k,x[k])$ for that single-ton.
	\item {\bf Multi-ton}: a right node is a multi-ton if contains more than one non-zero coefficient (e.g., {\it red} in \figref{fig:example_bipartite}). 
\end{enumerate}

To help illustrate our decoding algorithm, we assume that there exists an ``oracle''  that informs the decoder exactly which right nodes are {\it single-tons}. More importantly, the oracle further provides the index-value pair for that single-ton. In this example, the oracle informs the decoder that right nodes labeled $3$, $4$ and $6$ are single-tons with index-value pairs $(10, x[10])$, $(3,x[3])$ and $(1,x[1])$ respectively. Then the decoder can subtract their contributions from other right nodes, forming new single-tons. Therefore generally speaking, with the oracle information, the peeling decoder repeats the following steps similar to \cite{richardson2001capacity,finiasz2012private}:
\begin{itemize}
	\item[\bf Step (1)] select all the edges in the bipartite graph with right degree $1$ (identify single-ton bins);
	\item[\bf Step (2)] remove (peel off) these edges and the corresponding pair of variable and right nodes on these edges.
	\item[\bf Step (3)] remove (peel off) all other edges connected to the left nodes that have been  removed in {\bf Step (2)}. 
	\item[\bf Step (4)] subtract the contributions of the left nodes from right nodes removed in {\bf Step (3)}.
\end{itemize}
Finally, decoding is successful if all the edges are removed from the graph.

\subsection{Getting Rid of the Oracle}\label{sec:ratio_test}
Since the oracle information is critical in the peeling process, we proceed with our example and explain briefly how to obtain such information without an oracle. Clearly, we need more measurements to obtain such oracle information in its absence. Therefore, instead of simply assigning the simple {\it sum} to each right node, we assign a {\it vector-weighted sum} to the right nodes, where each left node (say $k$) is weighted by the $k$-th column of a {\bf bin detection matrix} $\mathbf{S}$. For example, we can choose the bin detection matrix $\mathbf{S}$ as
\begin{align*}
	\mathbf{S}
	&=
	\begin{bmatrix}
		1 & 1 & 1 & 1 & 1 & \cdots & 1 \\
		1 & W & W^2 & W^3 & W^4 & \cdots & W^{15}
	\end{bmatrix},
\end{align*}	
where $W=e^{\mathrm{i}\frac{2\pi}{N}}$ is the $N$-th root of unit with $N=16$. Note that this is simply the first two rows of the $20\times 20$ DFT matrix. In this way, each right node (say $r$) is assigned a $2$-dimensional vector $\mathbf{y}_r=[y_r[0],y_r[1]]^T$ and we call each vector a {\bf measurement bin}. For example, the measurements at right node $1$, $2$ and $3$ become
\begin{align*}
	\mathbf{y}_1&=\mathbf{0},\\
	\mathbf{y}_2&=
	x[1]\times
	\begin{bmatrix}
	1\\
	W
	\end{bmatrix}
	+
	x[5]\times	
	\begin{bmatrix}
	1\\
	W^5
	\end{bmatrix}
	+
	x[13]\times
	\begin{bmatrix}
	1\\
	W^{13}
	\end{bmatrix},\\
	\mathbf{y}_3&=
	x[10]	\times
	\begin{bmatrix}
	1\\
	W^{10}
	\end{bmatrix}.
\end{align*}

Now with these bin measurements, one can effectively determine if a right node is a zero-ton, a single-ton or a multi-ton. Although this procedure is formally stated in Section \ref{sec:noiseless} in our noiseless recovery results, here as an illustration, we go through the procedures for right nodes $1$, $2$ and $3$:
\begin{itemize}
	\item {\bf zero-ton bin}: consider the zero-ton right node $1$. A zero-ton right node can be identified easily since the measurements are all zero
	\begin{align}
		\mathbf{y}_1&=\mathbf{0}.
	\end{align}
	\item {\bf single-ton bin}: consider the single-ton right node $3$. A single-ton can be verified by performing a simple ``ratio test'' of the two dimensional vector:
	\begin{align*}
		\widehat{k} &= \frac{\angle{y_3[1]}/{y_3[0]}}{2\pi/16} = 10,\\
		\widehat{x}[\widehat{k}] &= y_3[0] = 3.
	\end{align*}
	Another unique feature is that the measurements would have identical magnitudes $|y_3[0]|=|y_3[1]|$. Both the ratio test and the magnitude constraints are easy to verify for all right nodes such that the index-value pair is obtained for peeling.
	\item {\bf multi-ton bin}: consider the multi-ton right node $2$. A multi-ton can be easily identified by the ratio test
	\begin{align*}
		\widehat{k} &= \frac{\angle{y_2[1]}/{y_2[0]}}{2\pi/16} = 12.59.
	\end{align*}
	Furthermore, the magnitudes are not identical $|y_2[0]|\neq |y_2[1]|$. Therefore, if the ratio test does not produce a non-zero integer and the magnitudes are not identical, we can conclude that this right node is a multi-ton.
\end{itemize}

\begin{algorithm}[H]
  \caption{Peeling Decoder}\label{alg:peeling}
  \begin{algorithmic}
    \FOR{$i=1$ to $I$}
	\FOR{$r =1$ to $R$}
			\STATE {\bf identify} if $\mathbf{y}_r^{(i)}$ is a single-ton bin;
			\IF{$\mathbf{y}_r^{(i)}$ is a single-ton}
				\STATE {\bf mark} the index-value pair $(\widehat{k}, \widehat{x}[\widehat{k}])$;	
			\FOR{$r'=1$ to $R$}
				\STATE {\bf locate right nodes} $r'$ connected to $\widehat{k}$ in the graph;
				\STATE {\bf peel off} $\mathbf{y}_{r'}^{(i+1)} = \mathbf{y}_{r'}^{(i)} - \widehat{x}[\widehat{k}]\mathbf{s}_{\widehat{k}}$, where $\mathbf{s}_{\widehat{k}}$ is the $\widehat{k}$-th column of the bin detection matrix $\mathbf{S}$;
			\ENDFOR
			\ELSE 
				 \STATE continue to next bin $r$.
			\ENDIF			
	\ENDFOR
    \ENDFOR
  \end{algorithmic}
\end{algorithm}

This simple example shows how the problem of recovering the $K$-sparse signal $\mathbf{x}$ can be cast as an instance of sparse-graph decoding, as briefly summarized in \algref{alg:peeling}. Note that the sparse bipartite graph in this example only shows the idea of peeling decoding, but does not guarantee successful recovery for an arbitrary signal. Furthermore, this example also suggests that it is possible to obtain the index-value pair of any single-ton without the help of an ``oracle'' through a properly chosen bin detection matrix. We will address later how to construct sparse bipartite graphs to guarantee successful decoding (Section \ref{sec:analysis_peeling_decoder}) and how to choose appropriate bin detection matrices for different schemes. In the following, we first present our general measurement design in Section \ref{sec:meas_design}, which is the cornerstone of our compressed sensing framework.

\section{Measurement Matrix Design}\label{sec:meas_design}
Before delving into specifics, we define the {\it row-tensor} operator $\boxtimes$ to help explain our measurement design. Given a matrix $\mathbf{S}=[\mathbf{s}_0,\cdots,\mathbf{s}_{N-1}]\in\mathbb{C}^{M_2\times N}$ and a matrix $\mathbf{H}=[\mathbf{h}_0,\cdots,\mathbf{h}_{N-1}]\in\mathbb{C}^{M_1\times N}$, the row-tensor operation $\mathbf{H}\boxtimes\mathbf{S}$ is defined such that each row of $\mathbf{H}$ is augmented element-wise by performing a tensor product with each corresponding column in the matrix $\mathbf{S}$. Mathematically, the {\it row-tensor product} is a $M_1M_2\times N$ matrix given as
\begin{align*}
	\mathbf{H}\boxtimes\mathbf{S}	
	\defn
	\begin{bmatrix}	
		\mathbf{h}_0\otimes\mathbf{s}_0 &
		\cdots &
		\mathbf{h}_{N-1}\otimes\mathbf{s}_{N-1}
	\end{bmatrix},
\end{align*}
where $\otimes$ is the standard Kronecker product. For example, let $\mathbf{H}$ be a sparse matrix with random coding patterns of $\{0,1\}$ and $\mathbf{S}$ be chosen as the first two rows of a DFT matrix as in the simple example
\begin{align}
	\mathbf{H} 
	&=
	\begin{bmatrix}
		1 & 1 &  0 & 1 & 0 & 1 & 0\\
		0 & 1 & 0 & 1 & 0 & 0 & 1\\
		1 & 0 & 0 & 1 & 1 & 1 & 1
	\end{bmatrix},\quad
	\mathbf{S}
	=
	\begin{bmatrix}
		1 & 1 & 1 & 1 & 1 & 1 & 1\\
		1 & W & W^2 & W^3 & W^4 & W^5 & W^6
	\end{bmatrix}
\end{align}
with $W=e^{\mathrm{i}\frac{2\pi}{7}}$. Then the row-tensor product is given by
\begin{align}
	\mathbf{H}\boxtimes\mathbf{S} 
	&=
	\begin{bmatrix}
		1 & 1 & 0 & 1 & 0 & 1 & 0\\
		1 & W & 0 & W^3 & 0 & W^5 & 0\\
		0 & 1 & 0 & 1 & 0 & 0 & 1\\
		0 & W & 0 & W^3 & 0 & 0 & W^6\\
		1 & 0 & 0 & 1 & 1 & 1 & 1\\
		1 & 0 & 0 & W^3 & W^4 & W^5 & W^6\\				
	\end{bmatrix}.
\end{align}
Since $\mathbf{H}$ has three rows of coding patterns, the product $\mathbf{H}\boxtimes\mathbf{S}$ contains three blocks of matrices, where each block is the corresponding sparsified version of $\mathbf{S}$ by the coding pattern in each row of $\mathbf{H}$. 

\begin{defi}[\bf Measurement Matrix]\label{def_hybrid_sensing_matrix}
Let $M=RP$ for some positive integers $R$ and $P$. Given a $R \times N$ coding matrix $\mathbf{H}$ and a $P\times N$ {\it bin detection matrix} $\mathbf{S}$, the $M\times N$ measurement matrix $\mathbf{A}$ is designed as
\begin{align}\label{meas_matrix}
	\mathbf{A} 
	&= \mathbf{H}\boxtimes\mathbf{S},
\end{align}
where $\boxtimes$ is the row-tensor product, and the coding matrix and {\it bin detection matrix} are specified below.
\begin{itemize}
	\item The {\bf coding matrix} $\mathbf{H}=[H_{r,n}]_{R\times N}$ is the $R\times N$ adjacency matrix of a bipartite graph $\mathcal{G}$ consisting of $N$ left nodes $V_1\defn [N]$ and $R$ right nodes $V_2 \defn [R]$ with an edge set $\mathcal{E} \defn V_1\times V_2$;	
	\item The {\bf bin detection matrix} $\mathbf{S}\defn [\mathbf{s}_0,\cdots,\mathbf{s}_{N-1}]$ is a $P\times N$ matrix explicitly given in Sections \ref{sec:noiseless} and \ref{sec:noisy}.
\end{itemize}
\end{defi}

%


\begin{prop}\label{prop_divide-and-conquer}
The measurement $\mathbf{y}=\mathbf{A}\mathbf{x}+\mathbf{w}$ is divided into $R$  measurement bins as $\mathbf{y} = [\mathbf{y}_1^T,\cdots,\mathbf{y}_R^T]^T$ with
\begin{align}\label{divide-and-conquer}
	\mathbf{y}_r = \mathbf{S}\mathbf{z}_r+\mathbf{w}_r,\quad r =1,\cdots,R
\end{align}
where $\mathbf{w}_r$ is the noise in the $r$-th measurement bin and $\mathbf{z}_r=[z_r[0],\cdots,z_r[N-1]]^T$ is a reduced sparse vector
\begin{align}
	z_r[k] = 
	\begin{cases}
		x[k], & k\in\mathcal{N}(r)\\
		0, &k\notin\mathcal{N}(r)
	\end{cases},
\end{align}
and $\mathcal{N}(r)$ is the set of left nodes connected to right node $r=1,\cdots,R$ in the graph $\mathcal{G}$. 
\end{prop}
\begin{proof}
The proof is straightforward and hence omitted.
\end{proof}

Since the vector $\mathbf{x}$ is by itself sparse on a support that may or may not overlap with the coding pattern given by the graph $\mathcal{G}$, the resulting equivalent sparse vector $\mathbf{z}_r$ in each bin $r$ is even sparser with a reduced support $\supp{\mathbf{x}}\cap\mathcal{N}(r)$. If the coding pattern happens to make $\mathbf{z}_r$ a $1$-sparse vector, we have a much easier problem to solve. Then we can use the recovered $1$-sparse coefficient to recover other coefficients iteratively. Therefore, we need to distinguish the type of each bin in order to determine if $\mathbf{z}_r$ is $1$-sparse, which can be regarded as a separate hypothesis in the presence of noise $\mathbf{w}_r$:
\begin{enumerate}
	\item $\mathbf{y}_r$ is a {\bf zero-ton} bin if $\supp{\mathbf{z}_r}=\varnothing$, denoted by $\mathbf{y}_r \sim \mathcal{H}_{\textrm{Z}}$;
	\item $\mathbf{y}_r$ is a {\bf single-ton} bin with the index-value pair $(k,x[k])$ if $\supp{\mathbf{z}_r}=\{k\}$ for some $k\in[N]$ and $z_r[k]=x[k]$, denoted by $\mathbf{y}_r \sim \mathcal{H}_{\textrm{S}}(k,x[k])$;
	\item $\mathbf{y}_r$ is a {\bf multi-ton} bin if $\left|\supp{\mathbf{z}_r}\right| \geq 2$, denoted by $\mathbf{y}_r \sim \mathcal{H}_{\textrm{M}}$.
\end{enumerate}

The spirit of divide-and-conquer is also manifested in this general design since the design of coding matrix ensures fast decoding by peeling, while the bin detection matrix ensures the correct detection of various bin hypotheses. These two designs are completely modular and can be designed independently depending on the applications. Now, given the above general measurement design, the following questions are of particular interests: 
\begin{enumerate}
	\item Given $N$ left nodes and $R$ right nodes, how to construct a bipartite graph that guarantees a ``friendly'' distribution of single-tons, zero-tons and multi-tons for successful peeling?
	\item Given the sparsity $K$ of the bipartite graph, what is the minimum number of right nodes $R$ to guarantee successful peeling?
	\item How to choose the {\it bin detection matrix} $\mathbf{S}$ in general for providing the oracle information, especially when the measurements are noisy?
\end{enumerate} 

In the following, we answer these questions in details and discuss the specific constructions for $\mathbf{H}$ and $\mathbf{S}$. In Section \ref{sec:analysis_peeling_decoder}, we first present the peeling decoder analysis that guides the design of the bipartite graphs and the associated coding matrix $\mathbf{H}$, and then discuss the constructions of the bin detection matrix $\mathbf{S}$ for both noiseless and noisy scenarios in Section \ref{sec:noiseless}, \ref{sec:noisy}, and Section~\ref{sec:noisy_continuous}.

\section{Sparse Graph Design and Peeling Decoder}\label{sec:analysis_peeling_decoder}

As mentioned above, the design of the coding matrix, or namely the sparse bipartite graph, is independent of the design of the bin detection matrix since they target different architectural objectives of the decoding algorithm. Simply put, the coding matrix (i.e. the sparse graph) can be designed assuming that there is an oracle present at decoding, while the bin detection matrix helps replace the oracle, which can be designed independently. Therefore, in this section we focus on the design of the coding matrix and study the sparse bipartite graphs that guarantee successful oracle-based decoding.

\subsection{Sparse Graph Design for Compressed Sensing}
The design of sparse bipartite graphs for peeling decoders has been studied extensively in the context of erasure-correcting sparse-graph codes \cite{luby2001efficient,richardson2001capacity}. In this section, for simplicity we consider {\it the ensemble of left ${d}$-regular bipartite graphs} $\mathcal{G}_{\rm reg}^N(R, {d})$ consisting of $N$ left nodes (unknown coefficients $x[k]$ for $k\in[N]$) and $R$ right nodes (compressed measurements $\mathbf{y}_r$ for $r=1,\cdots,R$), where each left node $k\in[N]$ is connected to ${d}$ right nodes $r=1,\cdots,R$ uniformly at random and the number of right nodes is linear in the sparsity $R=\eta K$. We call $\eta$ the {\it redundancy parameter}. 

\begin{figure}[h]
\begin{center}
\includegraphics[width=0.32\linewidth]{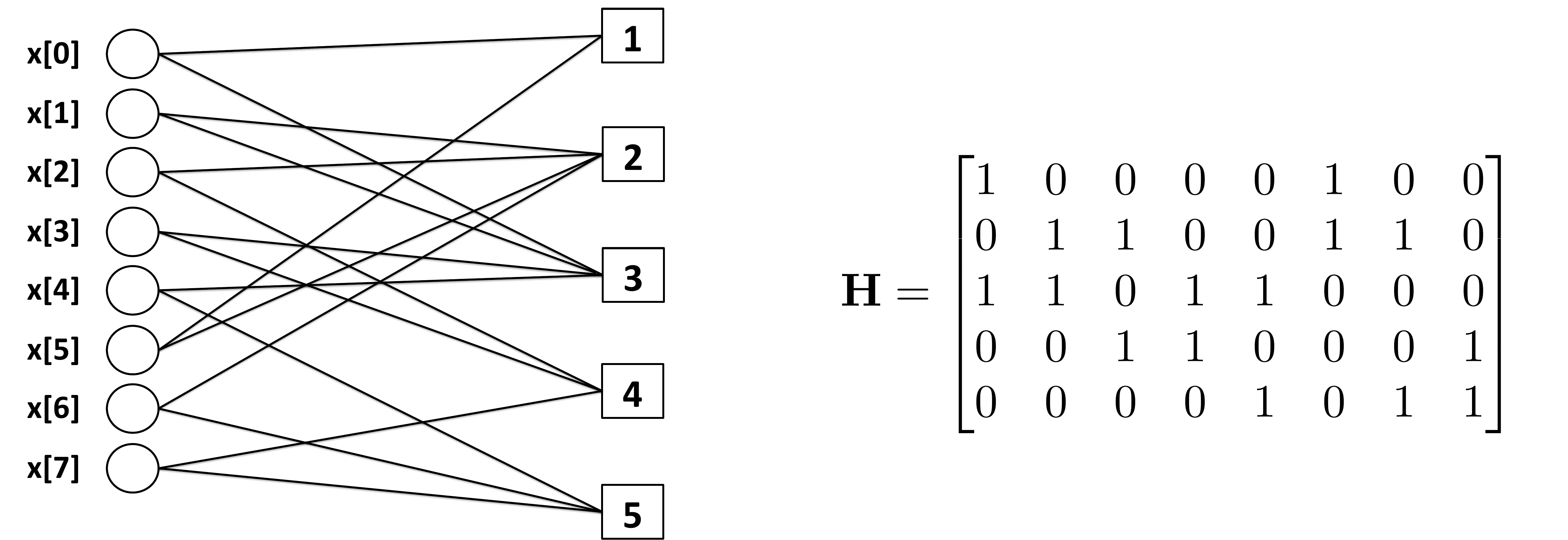}\\
\includegraphics[width=0.28\linewidth]{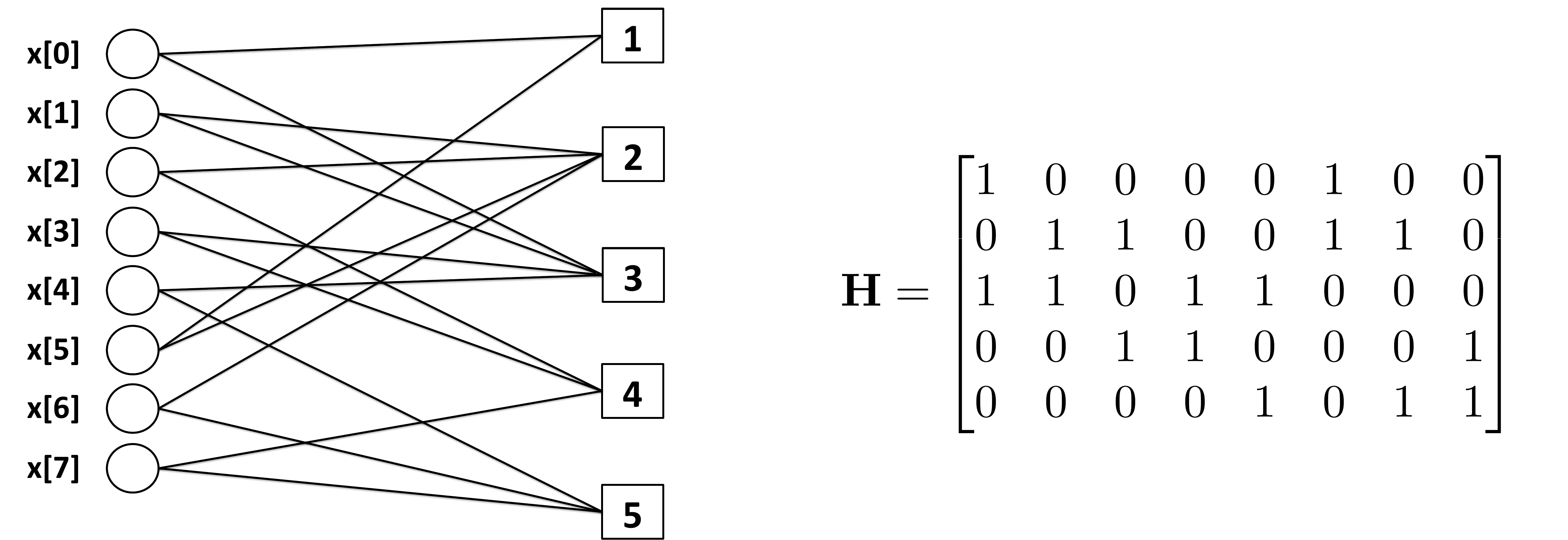}
\caption{An example of the bipartite graph from the regular graph ensemble with ${d}=2$ left degrees, consisting of $N=8$ left nodes and $R=5$ nodes, where the left nodes are labeled by the signal $\mathbf{x}=[x[0],\cdots,x[7]]^T$.}\label{fig:example_coding_matrix}
\end{center}
\end{figure}


The coding matrix $\mathbf{H}$ constructed from the regular graph ensemble conforms with a random ``balls-and-bins'' model, where each row of $\mathbf{H}$ corresponds to a ``bin'' (i.e., right node) and each column of $\mathbf{H}$ corresponds to a ``ball'' (i.e., left node). If the $(r,k)$-th entry $H_{r,k}=1$, then we say that the $k$-th ball is thrown into the $r$-th bin. In the ``balls-and-bins'' model associated with the regular ensemble $\mathcal{G}_{\rm reg}^N(R, {d})$, each ball $k\in[N]$ is thrown uniformly at random to ${d}$ bins. In the context of LDPC codes, the $k$-th coefficient $x[k]$ (variable node) appears in the parity check constraints in $d$ right nodes (check nodes) chosen uniformly at random. For example, consider a smaller example with $N=8$ left nodes and $R=5$ nodes, where $\mathbf{x}=[x[0],\cdots,x[7]]^T$ is some generic signal vector. Then, an instance from the $2$-regular ensemble $\mathcal{G}_{\rm reg}^8(5, 2)$ and the associated coding matrix $\mathbf{H}$ are shown in \figref{fig:example_coding_matrix}.

In our compressed sensing design, the sparse bipartite graph for peeling is the ``pruned'' graph after removing the left nodes with zero values. For example, if the signal is $4$-sparse with non-zero coefficients $x[1]$, $x[4]$, $x[5]$ and $x[6]$, then the {\it ``pruned'' graph} is reduced to that in \figref{fig:example_coding_matrix_pruned} on the right from the {\it full graph} on the left. Another example of a ``pruned'' graph  has been shown in \figref{fig:example_bipartite}, which is associated with a $5$-sparse signal and a left $2$-regular graph with $N=20$ left nodes and $R=9$ right nodes. 

\begin{figure}[h]
\centering
\includegraphics[width=0.6\linewidth]{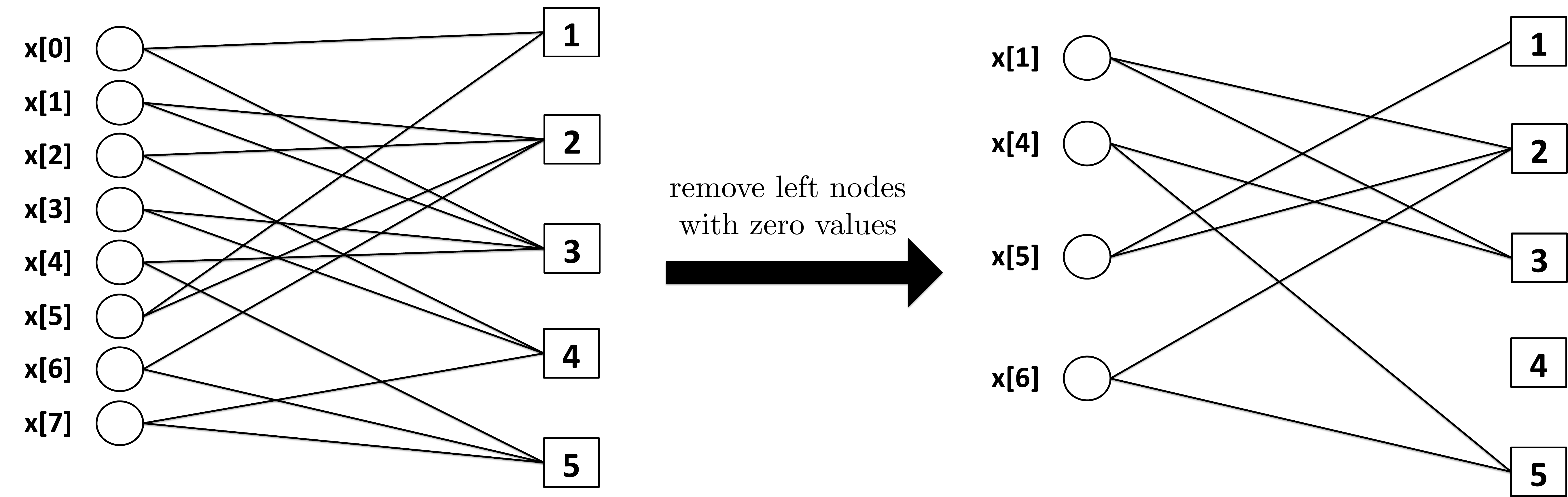}
\caption{The ``pruned'' bipartite graph when the signal $\mathbf{x}=[x[0],\cdots,x[7]]^T$ is $4$-sparse with non-zero coefficients $x[1]$, $x[4]$, $x[5]$ and $x[6]$.}\label{fig:example_coding_matrix_pruned}
\end{figure}
%

Given some $K$-sparse signal $\mathbf{x}$, the {\it pruned} graph in \figref{fig:example_coding_matrix_pruned}, instead of the {\it full graph} in \figref{fig:example_coding_matrix}, determines the peeling decoder performance. However, the pruned graph depicted in \figref{fig:example_coding_matrix_pruned} does not lead to successful decoding since the peeling is stuck with all multi-tons after removing the single-ton from right node $\# 1$. The intuition is that there are $4$ nodes on the left with degree $2$ but only $5$ nodes on the right. Therefore there is a high probability for each right node to connect to more than one left node (i.e., in this case only one right node has degree $1$). In general, given the left degree ${d}$ of the ensemble and the sparsity $K$, the graph needs to contain a sufficient number of right nodes to guarantee the success of the peeling decoder by choosing the redundancy parameter $\eta$ properly. In the following, we study the peeling decoder performance over the pruned graphs from the regular ensemble $\mathcal{G}_{\rm reg}^N(R, {d})$ and shed light on how to specify the parameter $\eta$ appropriately.

\subsection{Oracle-based Peeling Decoder Analysis using the Regular Ensemble $\mathcal{G}_{\rm reg}^N(R, {d})$}\label{sec:peeling_decoder_analysis}
In this section, we show that for the compressed sensing problem, if the redundancy parameter $\eta=R/K$ and the left regular degree ${d}$ are chosen properly for the regular graph ensemble $\mathcal{G}_{\rm reg}^N(R, {d})$, then for an arbitrary $K$-sparse signal $\mathbf{x}$, all the edges of the  {\it pruned graph} can be peeled off in ${O}(K)$ peeling iterations {\it with high probability}. The formal statement is given in Theorem \ref{thm_peeling_decoder}. In other words, we show that as long as the {\it full graph} is chosen properly, the {\it pruned graph} can lead to successful decoding with high probability for any given sparse signal. Our analysis is similar to the arguments in \cite{luby2001efficient,richardson2001capacity} using the {\it density evolution} analysis from modern coding theory, which tracks the average density\footnote{The density here refers to fraction of the remaining edges, or namely, the number of remaining edges divided by the total number of edges in the graph.} of the remaining edges in the pruned graph at each peeling iteration of the algorithm. 

The proof techniques to analyze the peeling decoder in our framework are similar to those from \cite{luby2001efficient} and \cite{richardson2001capacity}, except that the graph we have is the ``pruned'' version with a sub-linear fraction $K$ left nodes given adversarially by the input. Hence, this leads to some differences in the analysis from those in \cite{richardson2001capacity,luby2001efficient}, such as the degree distributions of the graphs (explained later) and the expansion properties of the graphs. As a result, we present an independent analysis here for our peeling decoder. In the following, we provide a brief outline of the proof elements highlighting the main technical components.
\begin{itemize}
	\item {\bf Density evolution}: 
	We analyze the performance of our peeling decoder over a {\it typical graph} (i.e., cycle-free) of the
ensemble $\mathcal{G}_{\rm reg}^N(R, {d})$ for a fixed number of peeling iterations $i$. We assume that a local neighborhood of every edge in the graph is cycle-free (tree-like) and derive a recursive equation that represents the average density of remaining edges in the pruned graph at iteration $i$.
	\item {\bf Convergence to density evolution}: 
	Using a Doob martingale argument as in \cite{richardson2001capacity} and \cite{pedarsani2017phasecode}, we show that the local neighborhood of most edges of a randomly chosen graph from the ensemble $\mathcal{G}_{\rm reg}^N(R, {d})$ is cycle-free with high probability. This proves that with high probability, our peeling decoder removes all but an arbitrarily small fraction of the edges in the pruned graph (i.e., the left nodes are removed at the same time after being decoded) in a constant number of iterations $i$. 
	\item {\bf Graph expansion property} for complete decoding: 
	We show that if the sub-graph consisting of the remaining edges is an ``expander'' (as will be defined later in this section), and if our peeling decoder successfully removes all but a sufficiently small fraction of the left nodes from the pruned graph, then it removes all the remaining edges of the ``pruned'' graph successfully. This completes the decoding of all the non-zero coefficients in $\mathbf{x}$.
\end{itemize}

\subsubsection*{Density Evolution}
Density evolution, a powerful tool in modern coding theory, tracks the average density of remaining edges that are not decoded after a fixed number of peeling iteration $i>0$. We describe the concept of {\it directed neighborhood} of a certain edge in the pruned graph up to depth $\ell=2i$. This concept is important in the density evolution analysis since the peeling of an edge in the $i$-th iteration depends solely on the removal of the edges from this neighborhood in the previous $i-1$ iterations. The {\it directed neighborhood} $\mathcal{N}_{\textrm{e}}^\ell$ at depth $\ell$ of a certain edge $e = (v, c)$ is defined as the induced sub-graph containing all the edges and nodes on paths $e_1,\cdots, e_\ell$ starting at a variable node $v$ (left node) such that $e_1 \neq e$. An example of a directed neighborhood of depth $\ell=2$ is given in Fig. \ref{fig:localtree}. 

\begin{figure}[h]
\begin{center}
\includegraphics[scale=0.16]{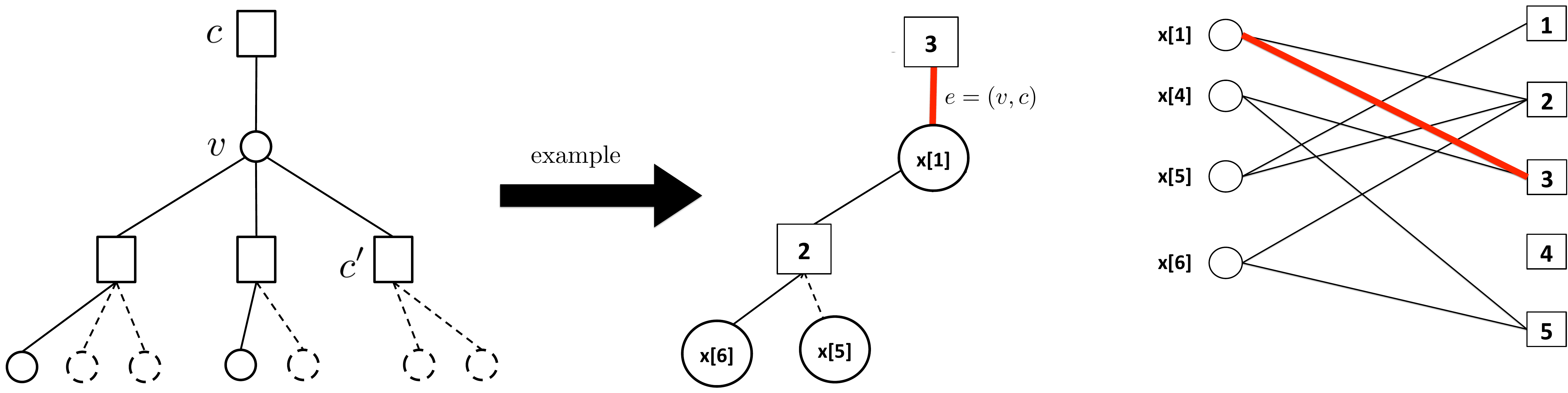}
\caption{On the left sub-figure, we illustrate the directed neighborhood of depth $2$ of an edge $e = (v,c)$, namely $\mathcal{N}_{\textrm{e}}^{2}$, while on the right we show this neighborhood for our example depicted in \figref{fig:example_coding_matrix_pruned}. The dashed lines on the left correspond to nodes/edges removed at the end of iteration $i-1$. The edge between $v$ and $c$ can be potentially removed at iteration $i$ as one of the check nodes (right nodes) $c'$ is a single-ton (it has no more variable nodes remaining at the end of iteration $i-1$). In our example, unlike the check node $c'$ on the left, the edge $e=(x[1] , 3)$ cannot be removed since the  check node is still a multi-ton (i.e., $x[6]$ and $x[1]$ are still attached).}
\label{fig:localtree}
\end{center}
\end{figure}

To analyze the performance of the peeling decoder over the pruned graph, we need to understand the edge degree distributions on the left and right for the pruned graph. Let $\rho_{j}$ be the fraction of edges in the pruned graph connecting to right nodes with degree ${j}$. Clearly, the total number of edges is $K {d}$ in the pruned graph since there are $K$ left nodes in the pruned graph and each left node has degree ${d}$. Therefore, since the expected number of edges connected to right nodes with degree ${j}$ can be obtained as $\Prob{\textrm{a right node has degree}~{j}} R {j}$, the fraction $\rho_{j}$ can be obtained as
\begin{align}
	\rho_{j} 
	= \frac{\Prob{\textrm{a right node has degree}~{j}} R{j} }{K{d}} =\frac{{j} \eta }{{d}} \Prob{\textrm{a right node has degree}~{j}},
\end{align}
where we have used $R=\eta K$ and $\eta$ is the redundancy parameter. According to the ``balls-and-bins'' model, the degree of a right node follows the binomial distribution $B({d}/(\eta K), K)$, and as $K$ approaches infinity can be well approximated by a Poisson variable as
\begin{align}
	\Prob{\textrm{a right node has degree}~{j}} \approx \frac{({d}/\eta)^{j} e^{-{d}/\eta}}{{j}!}.
\end{align}
As a result, the fraction $\rho_{j}$ of edges connected to right nodes having degree ${j}$ is 
\begin{align}\label{rho_d}
	\rho_{j} 
	= \frac{({d}/\eta)^{{j}-1} e^{-{d}/\eta}}{({j}-1)!}.
\end{align}

Now let us consider the local neighborhood $\mathcal{N}_{\textrm{e}}^{2i}$ of an arbitrary edge $e=(v,c)$  with a left regular degree ${d}$ and right degree distribution given by $\{\rho_{j}\}_{j=1}^{K}$. If the sub-graph corresponding to the neighborhood $\mathcal{N}_{\textrm{e}}^{2i}$ of the edge $e=(v,c)$ is a {\it tree} or namely {\it cycle-free}, then the peeling procedures over different bins in the first $i$ iterations (see Section \ref{sec:simple_example}) are independent, which can greatly simplify our analysis. Density evolution analysis is based on the assumption that this neighborhood is cycle-free (tree-like), and we will prove later (in the next subsection) that all graphs in the regular ensemble behave like a tree when $N$ and $K$ are large and hence the actual density evolution concentrates well around the density evolution result.

Let $p_i$ be the probability of this edge being present in the pruned graph after $i>0$ peeling iterations. If the neighborhood is a tree as in \figref{fig:tree_DE}, the probability $p_i$ can be written with respect to the probability $p_{i-1}$ recursively. 
\begin{align}
	p_i 
	&= \left(1-\sum_{j} \rho_{j} (1-{p}_{i-1})^{{j}-1}\right)^{{d}-1},\quad i = 1,2,3,\cdots.
\end{align}
The term $\sum_{j} \rho_{j} (1-{p}_{i-1})^{{j}-1}$ can be simplified using the right degree generating polynomial
\begin{align}
	\rho(x) \defn \sum_{j} \rho_{j} x^{{j}-1} = e^{-(1-x)\frac{{d}}{\eta}},
\end{align}
where we have used \eqref{rho_d} to derive the second expression. 

\begin{figure}[h]
\begin{center}
\includegraphics[width=0.3\linewidth]{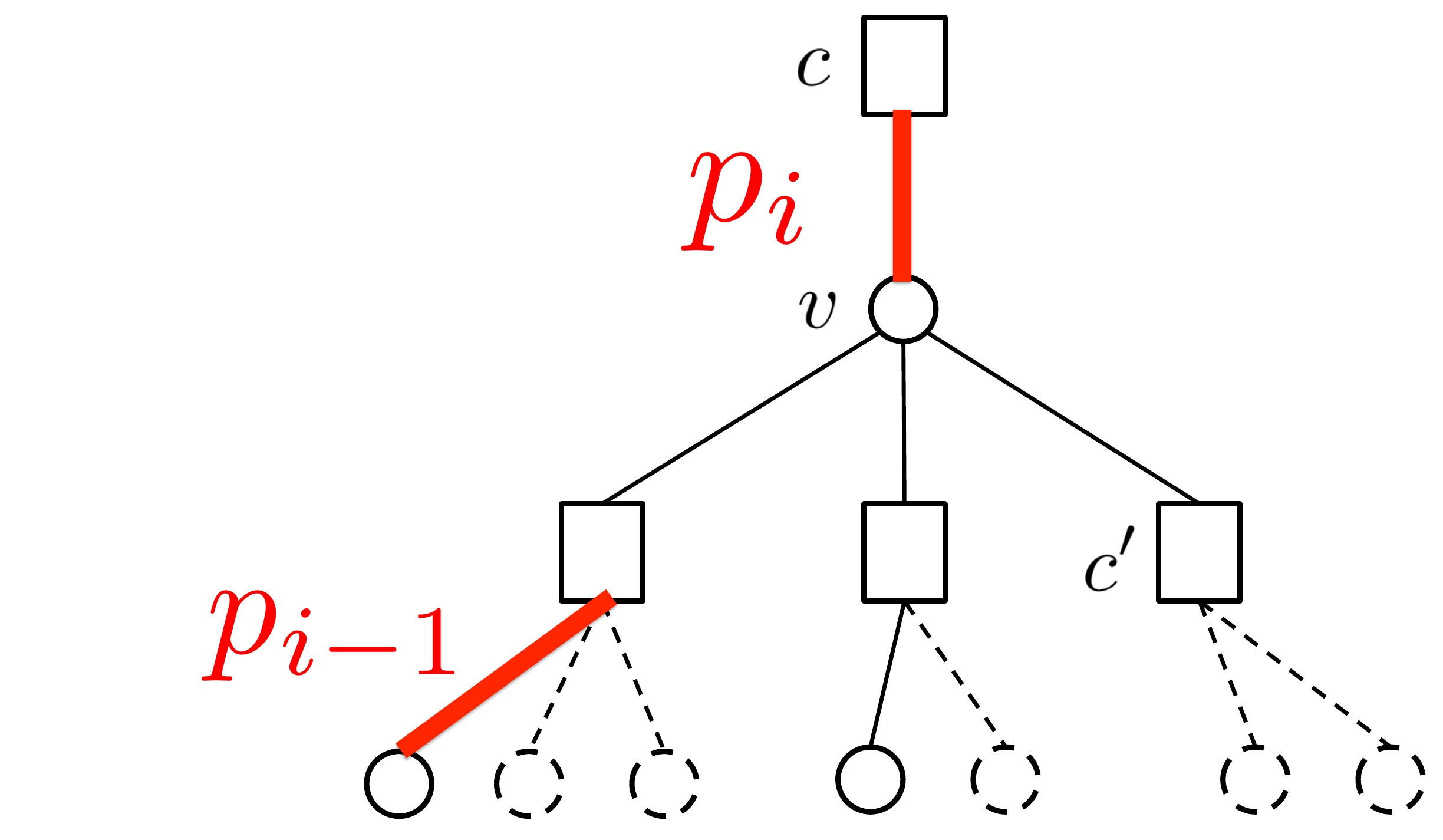}
\caption{The schematic of density evolution in a local tree-like neighborhood.}\label{fig:tree_DE}
\end{center}
\end{figure}

Therefore, the density evolution equation for our peeling decoder can be obtained as
\begin{align}\label{density_evolution}
p_i &= f({p}_{i-1}) = \left(1- e^{-\frac{{d}}{\eta}{p}_{i-1}}\right)^{{d}-1},\quad i = 1,2,3,\cdots.
\end{align}
An example of the density evolution with ${d}=3$ and different values of $\eta$ is given in \figref{fig:DE}. 
\begin{figure}[p]
\centering
\begin{subfigure}{0.4\textwidth}
\includegraphics[width=1\linewidth]{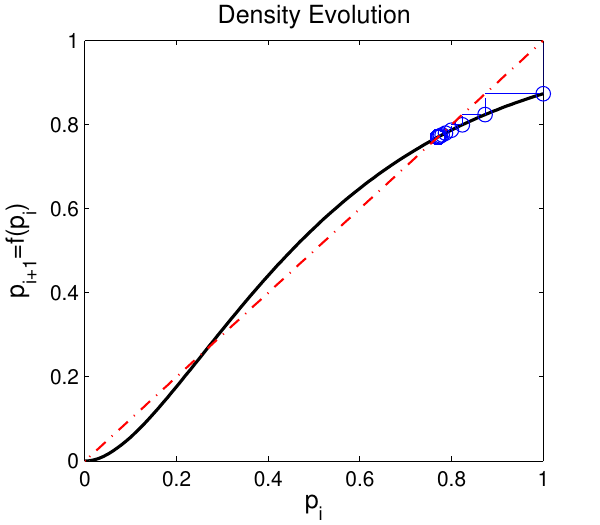}
\caption{$\eta=1.1$}
\end{subfigure}
 \begin{subfigure}{0.4\textwidth}
\includegraphics[width=1\linewidth]{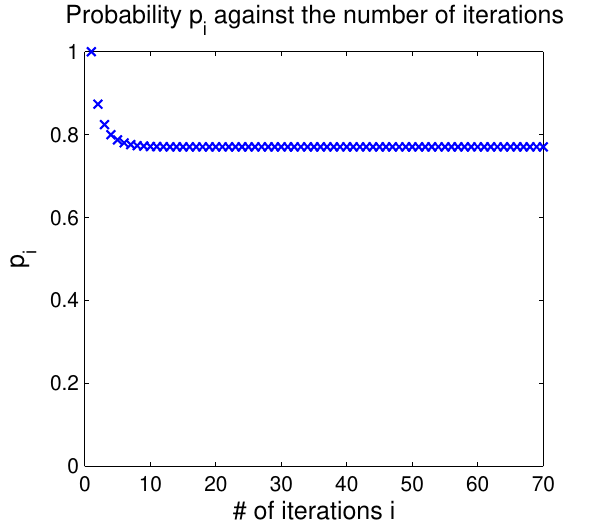}
\caption{$\eta=1.1$}
\end{subfigure}
\begin{subfigure}{0.4\textwidth}
\includegraphics[width=1\linewidth]{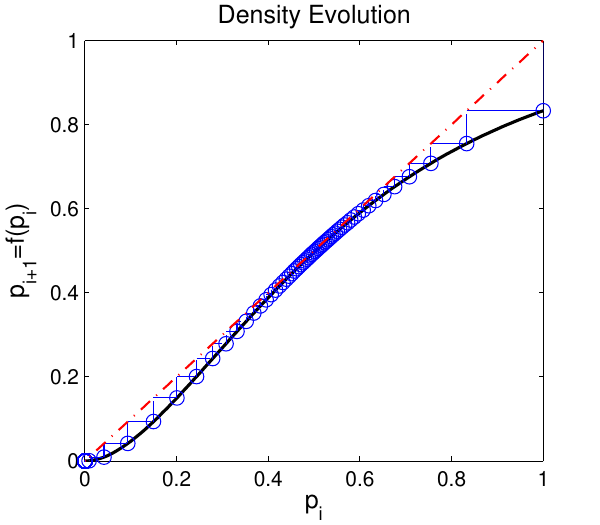}
\caption{$\eta=1.23$}
\end{subfigure}
\begin{subfigure}{0.4\textwidth}
\includegraphics[width=1\linewidth]{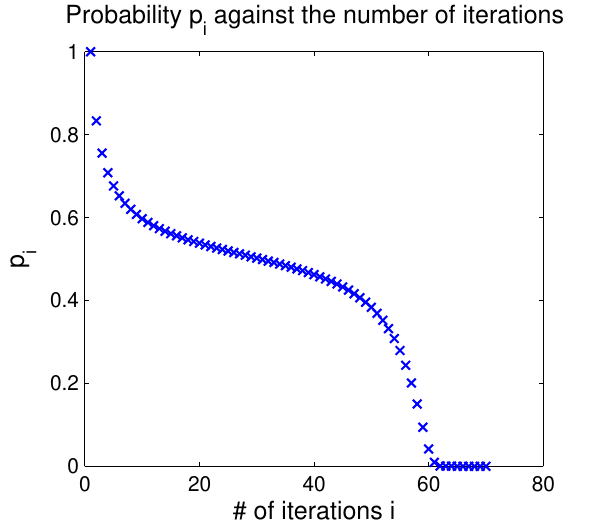}
\caption{$\eta=1.23$}
\end{subfigure}
\begin{subfigure}{0.4\textwidth}
\includegraphics[width=1\linewidth]{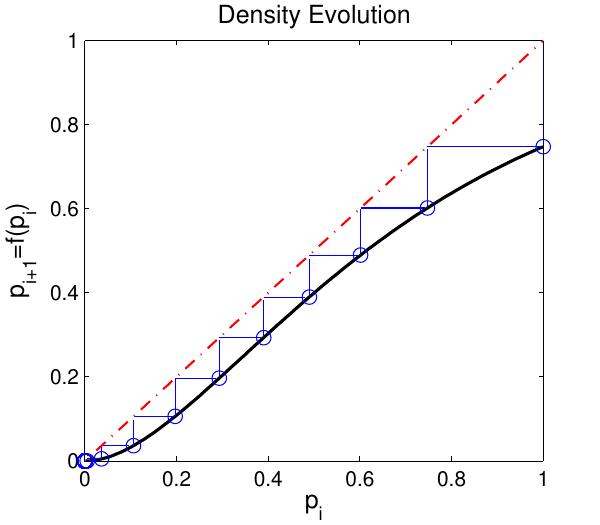}
\caption{$\eta=1.5$}
\end{subfigure}
\begin{subfigure}{0.4\textwidth}
\includegraphics[width=1\linewidth]{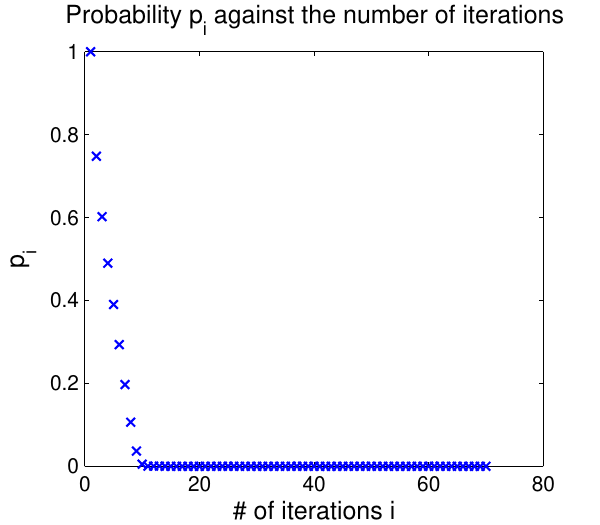}
\caption{$\eta=1.5$}
\end{subfigure}
\caption{The density evolution $f(p_i)$ and the probability $p_i$ at each iteration $i$, where we have shown the case with ${d}=3$ and $\eta=1.1$, $\eta=1.23$, $\eta=1.5$. In the density evolution figures (a)-(c)-(e), the red line is the line $p_{i+1}=p_i$ while the black line is the actual density evolution recursion $f(p_i)$ against $p_i$. The blue circles that ``zig-zag'' between the red line and the black line are the specific $p_i$'s that are achieved at each peeling iteration. It can be seen from (a) that when $\eta$ is small (i.e. $\eta =1.1$), the density evolution reaches a fixed point at around $p_i\approx 0.8$. On the other hand, when $\eta$ is greater than the threshold $1.23$ given by Table \ref{Table_beta}, the density $p_i$ reaches $0$ very quickly in (a) when $\eta=1.5$. The values of $p_i$ marked by the blue circles in (a)-(c)-(e) are further plotted against the peeling iterations $i$ in (b)-(d)-(f), where in the case with $\eta=1.5$ the density $p_i$ approaches $0$ after less than $10$ iterations.}\label{fig:DE}
\end{figure}
Clearly, the probability $p_i$ can be made arbitrarily small for a sufficiently large but finite $i>0$ as long as ${d}$ and $\eta$ are chosen properly. One can find the minimum value $\eta$ for a given ${d}$ to guarantee $p_i<p_{i-1}$, which is shown in Table \ref{Table_beta}. Due to lack of space we only show up to ${d}=6$.

\begin{lem}[Density evolution]
Denote by $\mathcal{T}_{i}$ the event where the local $2i$-neighorhood $\mathcal{N}_{\textrm{e}}^{2{i}}$ of every edge in the graph is tree-like and let $Z_{i}$ be the total number of edges that are not decoded after $i$ (an arbitrarily large but fixed) peeling iterations. For any $\varepsilon>0$, there exists a finite number of iteration $i>0$ such that 
	\begin{align}\label{mean_analysis_result}
		\mathbb{E}[Z_{i}|\mathcal{T}_{i}] = K {d} \varepsilon/4,
	\end{align}
where the expectation is taken with respect to the random graph ensemble $\mathcal{G}_{\rm reg}^N(R, {d})$ with the left regular degree ${d}$ and the redundancy parameter $\eta=R/K$ chosen from Table \ref{Table_beta} below.
\begin{table}[h]
\begin{center} 
\begin{tabular}{|c|c|c|c|c|c|c|c|c|}
  \hline
  ${d}$ & 2 & 3& 4 & 5 & 6\\
  \hline
 minimum $\eta$ & 2.0000 &   1.2219 &   1.2948 &   1.4250  &  1.5696\\ 
  \hline
\end{tabular}
\end{center}
\caption{Minimum value for $\eta$ given the regular degree ${d}$ according to density evolution.
}\label{Table_beta}
\end{table}
\end{lem}
Based on this lemma, we can see that if the pruned bipartite graph has a local neighborhood that is tree-like up to depth $2i$ for every edge, the peeling decoder on average peels off all but an arbitrarily small fraction of the edges in the graph. We prove this lemma below.
	
\begin{proof}
Let $Z_{i}^{(\textrm{e})}\in\{0,1\}$ be the random variable denoting the presence of edge $e$ after $i$  iterations, thus 
\begin{align}\label{sum_edges}
	Z_{i} = \sum_{e=1}^{K{d}} Z_{i}^{(\textrm{e})}.
\end{align}
The expected number of remaining edges over cycle-free graphs can be obtained as
\begin{align}\label{mean_analysis}
	\mathbb{E}\left[Z_{i}|\mathcal{T}_{i}\right] = \sum_{e=1}^{K{d}} \mathbb{E}\left[Z_{i}^{(\textrm{e})}|\mathcal{T}_{i}\right]  = K {d} p_i,
\end{align}
where by definition $p_i = \Prob{Z_{i}^{(\textrm{e})}=1|\mathcal{T}_{i}}$ is the {\it conditional probability} of an edge in the $i$-th peeling iteration conditioned on the event $\mathcal{T}_{i}$ studied in the density evolution equation \eqref{density_evolution}. We are interested in the evolution of such probability $p_i$. In the following, we prove that for any given $\varepsilon>0$, there exists a finite number of iterations $i>0$ such that $p_i \leq \varepsilon/4$, which leads to our desired result in \eqref{mean_analysis_result}.	
\end{proof}

\subsubsection*{Convergence to Density Evolution}	
Given the mean performance analysis (in terms of the number of undecoded edges) over cycle-free graphs through density evolution, now we provide a {\it concentration analysis} on the number of the undecoded edges $Z_{i}$ for {\it any graph from the regular ensemble} at the $i$-th iteration, by showing that $Z_{i}$ converges to the density evolution result.
\begin{lem}
Over the probability space of all graphs from $\mathcal{G}_{\rm reg}^N(R, {d})$,  let $p_i$ be as given in the density evolution \eqref{density_evolution}. Given any $\varepsilon>0$ and a sufficiently large $K$, there exists a constant $c_4>0$ such that
\begin{align}
	&{\tt (i)}~\quad\quad \mathbb{E}[Z_{i}] < K {d} \varepsilon/2 \label{mean_on_general_graph}\\
	&{\tt (ii)} \quad ~~~  \Prob{\left|Z_{i}-\mathbb{E}[Z_{i}]\right|>K {d} \varepsilon/2} 
	\leq
	2\exp\left(-c_4 \varepsilon^2 K^{\frac{1}{4i+1}}\right)\\
	&{\tt (iii)} \quad ~ \Prob{\left|Z_{i}- K {d} \varepsilon/2  \right|>K {d} \varepsilon/2} 
	\leq
	2\exp\left(-c_4 \varepsilon^2 K^{\frac{1}{4i+1}}\right)\label{concentration_DE}
\end{align}
\end{lem}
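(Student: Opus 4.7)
The plan is: for part (i), I would bound $\mathbb{E}[Z_i]$ by linearity of expectation, writing $\mathbb{E}[Z_i] = \sum_{e=1}^{Kd} \Prob{Z_i^{\textrm{e}}=1}$, and for each edge $e$ split on the per-edge tree-like event $\mathcal{T}_i^e$ (the depth-$2i$ directed neighborhood $\mathcal{N}_{\textrm{e}}^{2i}$ is cycle-free). By the preceding density-evolution lemma, $\Prob{Z_i^{\textrm{e}}=1 \mid \mathcal{T}_i^e} = p_i$, while a standard path-counting argument in the ``balls-and-bins'' regular ensemble gives $\Prob{(\mathcal{T}_i^e)^c} = O(d^{4i}/K)$: any cycle inside $\mathcal{N}_e^{2i}$ corresponds to two length-$\le 2i$ paths out of $e$ colliding at a common right node, and there are $O(d^{4i})$ such ordered path-pairs, each closing up with probability $O(1/K)$ under the uniform random edge placement. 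Summing over all $Kd$ edges yields $\mathbb{E}[Z_i] \le Kd\,p_i + O(d^{4i+1})$. Invoking the previous lemma to pick $i$ so that $p_i \le \varepsilon/4$ and taking $K$ large enough that the $O(d^{4i+1})$ additive constant is below $Kd\varepsilon/4$ establishes (i).

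For part (ii), I would use a Doob (vertex-exposure) martingale combined with Azuma--Hoeffding. Exposing the edge configurations of the $K$ non-zero left vertices one at a time in a fixed order, define $M_t = \mathbb{E}[Z_i \mid \mathcal{F}_t]$, where $\mathcal{F}_t$ is the $\sigma$-algebra generated by the first $t$ vertex-edge assignments, so that $M_0=\mathbb{E}[Z_i]$ and $M_K=Z_i$. The central step is controlling the bounded-difference constant. I would condition on the high-probability event $\mathcal{D}=\{\textrm{every right node has degree} \le D\}$ for a threshold $D \defn K^{1/(4i+1)}$. On $\mathcal{D}$, the depth-$2i$ directed neighborhood of any edge contains at most $O(D^{2i})$ edges, and since the peeling decoder's fate for an edge after $i$ iterations is a deterministic function of its depth-$2i$ neighborhood, changing one left vertex's $d$ incident edges can alter the indicators $Z_i^{\textrm{e}}$ of at most $C'D^{2i}$ edges. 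Hence $|M_{t+1}-M_t| \le C' D^{2i}$ almost surely on $\mathcal{D}$.

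Azuma--Hoeffding with $K$ steps and per-step bound $C' D^{2i}$ gives, using $1 - 4i/(4i+1) = 1/(4i+1)$,
\begin{align*}
\Prob{|Z_i - \mathbb{E}[Z_i]| > Kd\varepsilon/2,\ \mathcal{D}}
\le 2\exp\!\left(-\frac{(Kd\varepsilon/2)^2}{2K\,(C'D^{2i})^2}\right)
= 2\exp\!\left(-c_1\varepsilon^2 K^{1/(4i+1)}\right).
\end{align*}
The off-event is controlled by a Chernoff tail bound on the (approximately Poisson) right-degree distribution: $\Prob{\mathcal{D}^c} \le R\cdot \exp(-c_2 D\log D) = \exp(-\Omega(K^{1/(4i+1)}\log K))$, which is absorbed into the Azuma bound for large $K$. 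A union bound over $\mathcal{D}$ and $\mathcal{D}^c$ then yields (ii) with some constant $c_4>0$.

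The main obstacle, and the heart of the argument, is the choice $D=K^{1/(4i+1)}$: this balances the Azuma exponent, which scales like $K/D^{4i}$ and favors small $D$, against the max-degree tail $\Prob{\mathcal{D}^c}$, which favors large $D$. Optimizing makes both contributions scale as $K^{1/(4i+1)}$, yielding precisely the exponent in the statement. A secondary technical point is a careful locality argument: one must show that the peeling decoder's behavior at an edge after $i$ rounds depends \emph{only} on edges within $\mathcal{N}_e^{2i}$, since the recursive nature of peeling could in principle propagate far; this is handled by unrolling the $i$-step peeling process into a depth-$2i$ computation tree rooted at $e$.
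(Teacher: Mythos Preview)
Your proposal is correct and takes essentially the same approach the paper indicates: the paper defers the full proof to an external appendix but explicitly states it uses ``a Doob martingale argument as in \cite{richardson2001capacity} and \cite{pedarsani2014phasecode},'' which is precisely your vertex-exposure martingale plus Azuma--Hoeffding, combined with a max-right-degree truncation to handle the unbounded Poisson check degrees. Your treatment of part~(i) via the per-edge tree-like split and path counting also matches the paper's density-evolution lemma and its standing tree assumption.

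One remark on tightness: your neighborhood-size bound $O(D^{2i})$ is valid but conservative; the depth-$2i$ computation tree with left branching $d-1$ and right branching at most $D-1$ actually has $O((dD)^i)=O(D^{i})$ edges for fixed $d$. Using that sharper Lipschitz constant and rebalancing $D$ would yield an exponent $K^{1/(2i+1)}$ rather than $K^{1/(4i+1)}$. Your looser bound is nevertheless the one that reproduces the paper's stated exponent exactly, so for the purpose of matching the lemma as written it is the right choice. The only technical wrinkle worth tightening in a full write-up is the interaction between conditioning on $\mathcal{D}$ and the martingale increments (since $\mathcal{D}$ is not $\mathcal{F}_t$-measurable); the standard fix---replacing $Z_i$ by a truncated surrogate that agrees with $Z_i$ on $\mathcal{D}$ and has deterministic Lipschitz bound $C'D^{2i}$ everywhere---goes through without difficulty.
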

\begin{proof}  
	The details of the proof are given in Appendix \ref{sec:concentration_analysis}, but here we provide an outline of the proof. The {\it concentration analysis} is performed with respect to the number of the remaining edges for an {\it arbitrary graph from the ensemble} by showing that $Z_{i}$ converges to the mean analysis result. This proof is done in two steps:
\begin{itemize}
	\item {\bf Mean analysis on general graphs from ensembles}: first, we use a counting argument similar to \cite{pedarsani2017phasecode} to show that any random graph from the ensemble $\mathcal{G}_{\rm reg}^N(R,{d})$ behaves like a {\it tree} with high probability. Therefore, the expected number of remaining edges over all graphs can be made arbitrarily close to the mean analysis $|\mathbb{E}[Z_{i}]-\mathbb{E}[Z_{i}|\mathcal{T}_{i}] |<K {d} \varepsilon/4$ such that
	\begin{align}\label{mean_on_general_graph}
		\mathbb{E}[Z_{i}]<K {d} \varepsilon/2
	\end{align}
	as long as $N$ and $K$ are greater than some constants.
	\item {\bf Concentration to mean by large deviation analysis}: we use a Doob martingale argument as in \cite{richardson2001capacity} to show that the actual number of remaining edges $Z_{i}$ concentrates well around its mean $\mathbb{E}[Z_{i}]$ with an exponential tail in $K$ such that $\Prob{\left|Z_{i}-\mathbb{E}[Z_{i}]\right|>K {d} \varepsilon/2}  \leq 2\exp\left(-c_4 \varepsilon^2 K^{\frac{1}{4i+1}}\right)$ for some constant $c_4>0$.
\end{itemize}
Then finally, it follows that $\Prob{\left|Z_{i}- K {d} \varepsilon/2  \right|>K {d} \varepsilon/2} \leq2\exp\left(-c_4 \varepsilon^2 K^{\frac{1}{4i+1}}\right)$.
\end{proof}

\subsubsection*{Graph Expansion for Complete Decoding}
From previous analyses, it has already been established that with high probability, our peeling decoder terminates with an arbitrarily small fraction of edges undecoded
	\begin{align}\label{undecoded_edges}
		Z_{i} &< K {d} \varepsilon,\quad \forall \varepsilon>0,
	\end{align}
where ${d}$ is the left degree. In this section, we show that all the undecoded edges can be completely decoded if the sub-graph consisting of the remaining undecoded edges is a ``good-expander''. First, we introduce the concept of graph expanders.
\begin{defi}[Expander Graph]
A bipartite graph with $K$ left nodes and regular left degree ${d}$ is called a $(\varepsilon,1/2)$-expander if for all subsets $\mathcal{S}$ of left nodes with $|\mathcal{S}|\leq \varepsilon K$, there exists a right neighborhood of $\mathcal{S}$ in the graph, denoted by $\mathcal{N}(\mathcal{S})$, that satisfies $|\mathcal{N}(\mathcal{S})| > {d} |\mathcal{S}|/2$.
\end{defi}

\begin{lem}\label{lem_graph_expander}
For a sufficiently small constant $\varepsilon>0$ and $d\geq 3$, the pruned graph of $\mathcal{G}_{\rm reg}^N(R,{d})$ resulting from any given $K$-sparse signal $\mathbf{x}$ is an $(\varepsilon,1/2)$-expander with probability at least $1-{O}(1/K)$.
\end{lem}
\begin{proof}
	See Appendix \ref{sec:expander_graph}.
\end{proof}

Without loss of generality, let the $Z_{i}$ undecoded edges be connected to a set of left nodes $\mathcal{S}$. Since each left node has degree ${d}$, it is obvious from \eqref{undecoded_edges} that $|\mathcal{S}| = Z_{i}/d < K\varepsilon$ with high probability. Note that our peeling decoder fails to decode the set $\mathcal{S}$ of left nodes if and only if there are no more single-ton right nodes in the neighborhood of $\mathcal{S}$. A sufficient condition for all the right nodes in $\mathcal{N}(\mathcal{S})$ to have at least one single-ton is that the average degree of the right nodes in the set $\mathcal{N}(\mathcal{S})$ is strictly less than $2$, which implies that $|\mathcal{S}|{d}/|\mathcal{N}(\mathcal{S})| < 2$ and hence $|\mathcal{N}(\mathcal{S})| > |\mathcal{S}|{d}/2$. Since we have shown in Lemma \ref{lem_graph_expander} that any pruned graph from the regular ensemble $\mathcal{G}_{\rm reg}^N(R, {d})$ is a $(\varepsilon,1/2)$-expander with high probability such that $|\mathcal{N}(\mathcal{S})| > {d} |\mathcal{S}|/2$, there will be sufficient single-tons to peel off all the remaining edges. 
\begin{thm}\label{thm_peeling_decoder}
Given the ensemble $\mathcal{G}_{\rm reg}^N(\eta K, {d})$ with $d\geq 3$ and $\eta$ chosen based on Table \ref{Table_beta}, the oracle-based peeling decoder peels off all the edges in the pruned graph in ${O}(K)$ iterations with probability at least $1-{O}(1/K)$.
\end{thm}
\begin{proof} 
The oracle-based peeling decoder fails when: (1) the number of remaining edges in the $i$-th iteration cannot be upper bounded as $Z_i < K {d} \epsilon$ as in \eqref{concentration_DE}, or (2) the number of remaining edges can be upper bounded by $Z_i < K {d} \epsilon$ as in \eqref{undecoded_edges} but the remaining sub-graph is not a $(\varepsilon,1/2)$-expander. Event (1) occurs with an exponentially small probability so the total error probability is dominated by event (2). From Lemma \ref{lem_graph_expander}, we have that event (2) occurs with probability ${O}(1/K)$, which approaches $0$ asymptotically. Last but not least, since there are a total of ${O}(K)$ edges in the pruned graph, and there is at least one edge being peeled off in each iteration with high probability, the total number of iterations required to peel of the graph is ${O}(K)$.
\end{proof}

\section{Noiseless Recovery}\label{sec:noiseless}

In the noiseless setting, we consider a different graph ensemble to construct the coding matrix $\mathbf{H}$. If we use the regular graph ensemble $\mathcal{G}_{\rm reg}^N(R,{d})$ mentioned earlier to construct the coding matrix $\mathbf{H}$, the measurement cost is $M=RP$ with $R=\eta K$. Since each node has at least $P=2$ measurements from the bin detection matrix $\mathbf{S}$, the measurement cost would be at least $2\eta K$. According to Table \ref{Table_beta}, given sufficiently large $N$ and $K$, the minimum achievable $\eta$ for successful decoding is $\eta=1.23$ when ${d}=3$, and hence the minimum measurement cost is at least $M\geq 2.46K$ if the regular ensemble is used. In order to achieve the minimum redundancy parameter $\eta \rightarrow 1$, bipartite graphs with {\it irregular} left degrees need to be considered. 

\subsection{Measurement Design}

For the noiseless setting particularly, we construct the coding matrix $\mathbf{H}$ using an irregular graph ensemble rather than the regular graph ensemble $\mathcal{G}_{\rm reg}^N(R,{d})$ with better constants in our measurement costs. In the irregular graph ensemble $\mathcal{G}_{\rm irreg}^N(R,D)$, each left node has irregular left degrees ${j}=2,\cdots,D+1$, where $D+1$ is the maximum left degree. To describe the construction of the irregular graph ensemble, we use the left degree sequence $\{\lambda_{j}\}_{{j}=2}^{D+1}$, where $\lambda_{j}$ is the fraction of edges\footnote{The graph is specified in terms of fractions of edges of each degree due to its notational convenience later on.} of degree ${j}$ on the left\footnote{An edge of degree ${j}$ on the left (right) is an edge connecting to a left (right) node with degree ${j}$.}. For instance, the left degree sequence for the regular ensemble $\mathcal{G}_{\rm reg}^N(R,{d})$ is $\lambda_{j}=1$ for ${j}={d}$ and $0$ if $j\neq d$.

\begin{defi}[Irregular Graph Ensemble $\mathcal{G}_{\rm irreg}^N(R,D)$ for Noiseless Recovery]\label{lem_graph_ensemble_irregular}
Given $N$ left nodes and $R=(1+\epsilon)K$ right nodes for an arbitrary $\epsilon>0$, the edge set in the irregular graph ensemble $\mathcal{G}_{\rm irreg}^N(R,D)$ is characterized by the degree sequence
\begin{align}
	\lambda_{j} = \frac{1}{H(D)({j}-1)},\quad {j}= 2,\cdots,D+1
\end{align}
where $D >1/\epsilon$ and $H(D)=\sum_{{j}=1}^{D} {1}/{j}$ is chosen such that $\sum_{{j}\geq 2} \lambda_{j}=1$.
\end{defi}

\begin{thm}\label{thm_peeling_decoder_irregular}
Consider the ensemble $\mathcal{G}_{\rm irreg}^N(R, D)$ for our construction. The oracle-based peeling decoder peels off all the edges in the pruned graph in ${O}(K)$ iterations with probability at least $1-{O}(1/K)$.
\end{thm}
\begin{proof}
    See Appendix \ref{sec:proof_thm_peeling_irregular}.
\end{proof}

Given the coding matrix $\mathbf{H}$ constructed from the irregular ensemble, we choose the {\it bin detection matrix} $\mathbf{S}$ as 
\begin{align}\label{Fourier_sensing_koiseless}
	\mathbf{S}
	\defn
	\begin{bmatrix}
		1 & \cdots & 1 & \cdots & 1\\
		1 & \cdots & W^n & \cdots & W^{N-1}
	\end{bmatrix}
	\times
	\diag{{F}_0,{F}_1,\cdots,{F}_{N-1}}, 
\end{align}
where $W=e^{\mathrm{i}\frac{2\pi}{N}}$ is the $N$-th root of unity and ${F}_k$ for $k\in[N]$ is a random variable drawn from some continuous distribution. The bin detection matrix is therefore the first $2$ rows of the $N\times N$ DFT matrix with each column scaled by a random variable. This is similar to the example we used in Section \ref{sec:ratio_test}, except for the random scaling on each column. We have briefly shown in Section \ref{sec:ratio_test} how to obtain the oracle information in the noiseless setting using a similar bin detection matrix. In the following, we restate the procedures more formally to be self-contained. 

Using the two measurements in each bin $\mathbf{y}_r=[y_r[0],y_r[1]]^T$ for $r=1,\cdots,R$, we perform the following tests to reliably identify the single-ton bins and obtain the correct index-value pair for any single-ton:

\begin{itemize}
	\item {\bf Zero-ton Test}: since there is no noise, it is clear that the bin is a zero-ton if $\left\|\mathbf{y}_r\right\|^2=0$.
	\item {\bf Multi-ton Test}:
The measurement bin is a multi-ton as long as $|y_r[1]|\neq |y_r[0]|$ and/or $\angle{y_r[1]}/{y_r[0]} \neq {0\mod 2\pi/N}$. The multi-ton test fails when the relative phase is a multiple of $2\pi/N$, which corresponds to the following condition according to the measurement model in \eqref{divide-and-conquer}
\begin{align}
	\frac{y_r[1]}{y_r[0]} = \frac{\sum_{k\in[N]} H_{r,k} x[k] {F}_k e^{\mathrm{i}\frac{2\pi n}{N}}}{\sum_{k\in[N]} H_{r,k} x[k] {F}_k} = e^{\mathrm{i}\frac{2\pi \ell}{N}},\quad \textrm{for some $\ell\in[N]$}
\end{align}
where $H_{r,k}$ is the $(r,k)$-th entry in the coding matrix $\mathbf{H}$. Clearly,  this event is measure zero under the continuous distribution of ${F}_k$ for $k\in[N]$.
	\item {\bf Single-ton Test}:
	After the zero-ton and multi-ton tests, if $|y_r[1]|=|y_r[0]|$ and $\angle {y_r[1]}/{y_r[0]} = {0\mod 2\pi/N}$, the measurement bin is detected as a single-ton with the index-value pair:
	\begin{align}
		\widehat{k}_r &= \frac{N}{2\pi}\angle\frac{y_r[1]}{y_r[0]},\quad
		\widehat{x}[\widehat{k}_r] = y_r[0]/F_{\widehat{k}_r}.
	\end{align}
This gives us the index-value pair of the single-ton for peeling. 

\end{itemize}

\subsection{Some Numerical Examples}

\subsubsection*{Density Evolution Threshold}

\begin{figure}[h]
\begin{center}
\includegraphics[width=0.5\linewidth]{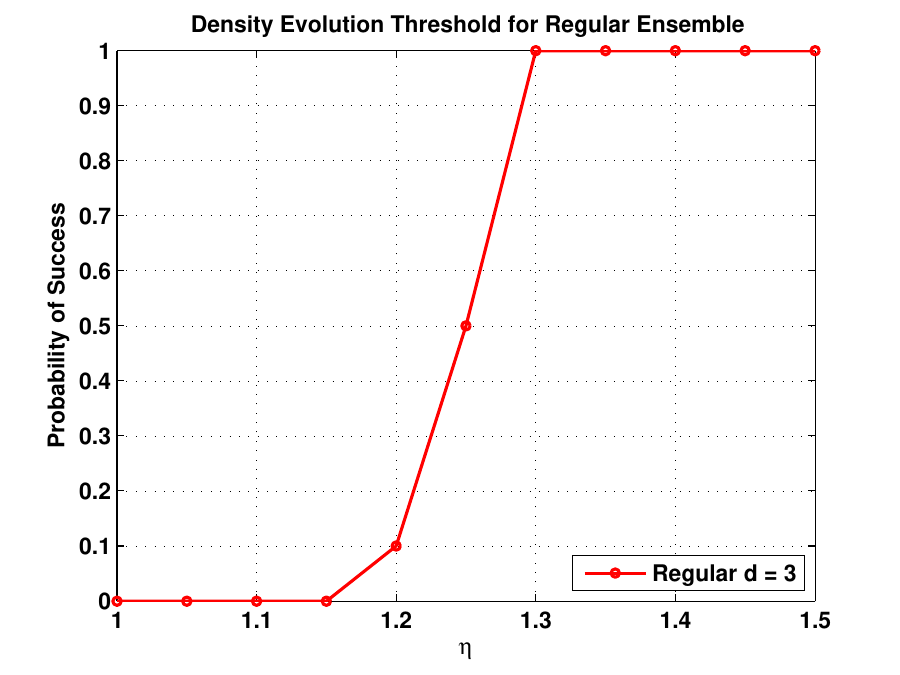}
\vspace{-0.6cm}
\caption{Probability of success against the redundancy parameter $\eta$ for the regular ensemble $\mathcal{G}_{\rm reg}^N(\eta K,3)$ with $N=0.1~\mathrm{million}$.}
\label{fig:density_evolution}
\end{center}
\end{figure}
We examine the density evolution result using the noiseless design in Section \ref{sec:noiseless} in the absence of noise. We generate a sparse vector $\mathbf{x}$ with $K=500$ and $N=10^5$ for all the experiments. To understand the effects of the graph ensemble on density evolution, we numerically trace the probability of success $1-\Pf$ against the redundancy parameter $\eta=R/K$ of the regular graph ensemble $\mathcal{G}_{\rm reg}^N(R,{d})$. For simplicity, we fix the left node degree ${d}=3$ and vary the redundancy parameter $\eta=R/K$ from $1$ to $1.5$. It can be seen that the threshold for $R/K=\eta$ empirically matches with the density evolution analysis for regular graphs in Section \ref{sec:peeling_decoder_analysis}, where the algorithm succeeds with some probability from $\eta = 1.2$ and reaches probability one after $\eta=1.3$. 

\subsubsection*{Illustration of Density Evolution}
We demonstrate the density evolution process by showing the peeling iterations of recovering a $280\times 280$ grayscale ``Cal'' image consisting of pixels taking values within $[0,1]$. In this setting, we have the input dimension $N=280\times 280=78400$ and the sparsity $K=3600$, and the image in \figref{fig:original_image} is free from noise. To recover this Cal image using our framework, we exploit the noiseless design in Section \ref{sec:noiseless}. In particular, the coding matrix $\mathbf{H}$ is constructed using the regular graph ensemble $\mathcal{G}_{\rm reg}^N(R, d)$ with a regular degree ${d}=3$ and a redundancy $R=1.5K$, while the bin detection is the first two rows of an $N$-point DFT matrix such that $P=2$. Therefore, the total measurement cost is $M=RP=3K=10800\approx N\times 13.7\%$. It can be seen from \figref{fig:density_evolution_peeling} that when the density evolution threshold is met $\eta = 1.5 > 1.23$, the image is quickly recovered from a few iterations, where the first $3$ iterations almost capture most of the sparse coefficients while iteration $4$ and $5$ are cleaning up the very few remaining coefficients.

\begin{figure}[h]
\begin{center}
\begin{subfigure}{0.32\textwidth}
\includegraphics[width=1\linewidth]{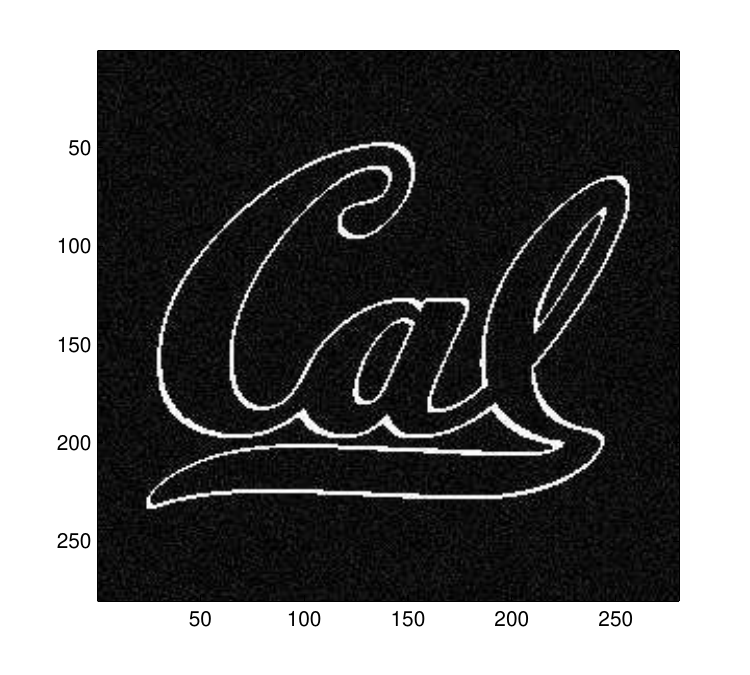}
\caption{Original Image}\label{fig:original_image}
\end{subfigure}
\begin{subfigure}{0.32\textwidth}
\includegraphics[width=1\linewidth]{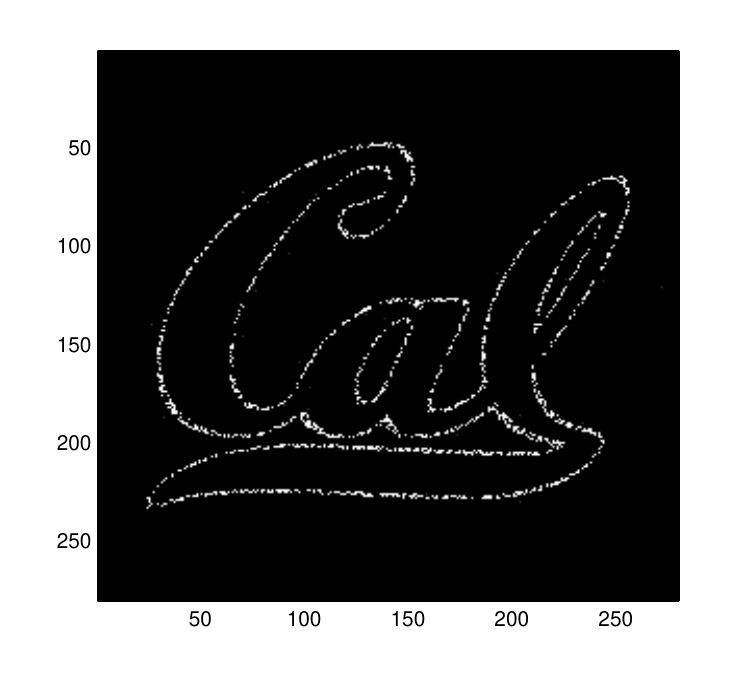}
\caption{Peeling Iteration 1}
\end{subfigure}
\begin{subfigure}{0.32\textwidth}
\includegraphics[width=1\linewidth]{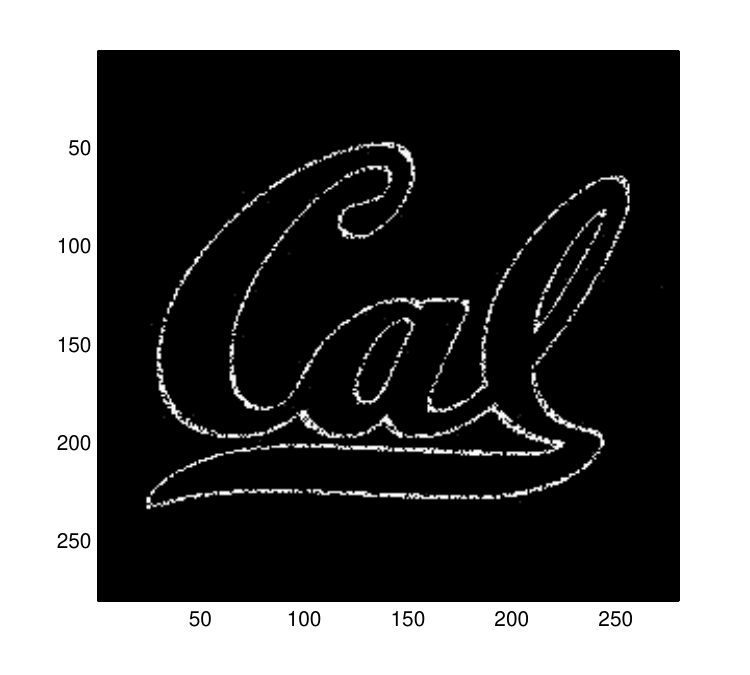}
\caption{Peeling Iteration 2}
\end{subfigure}
\begin{subfigure}{0.32\textwidth}
\includegraphics[width=1\linewidth]{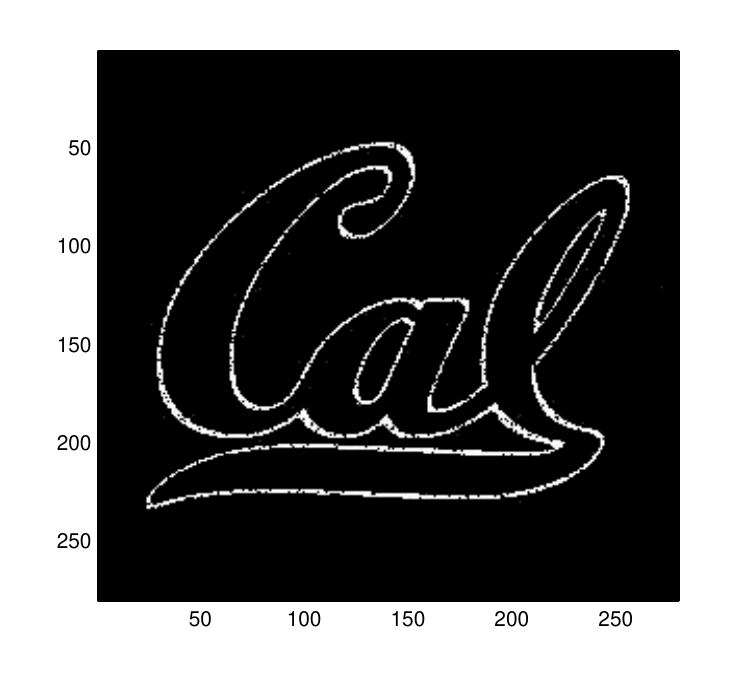}
\caption{Peeling Iteration 3}
\end{subfigure}
\begin{subfigure}{0.32\textwidth}
\includegraphics[width=1\linewidth]{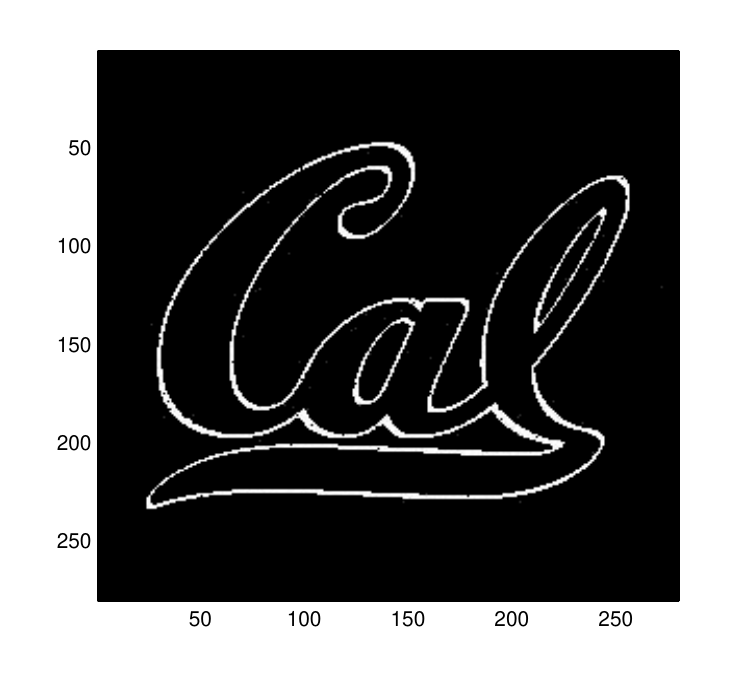}
\caption{Peeling Iteration 4}
\end{subfigure}
\begin{subfigure}{0.32\textwidth}
\includegraphics[width=1\linewidth]{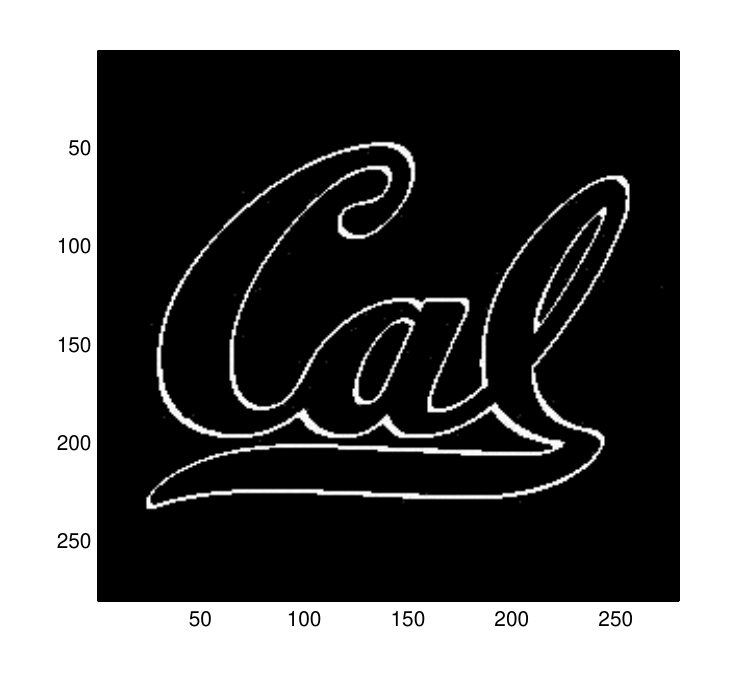}
\caption{Peeling Iteration 5}
\end{subfigure}
\caption{Illustration of density evolution through peeling iterations over the recovery of the ``
Cal'' image}
\label{fig:density_evolution_peeling}
\end{center}
\end{figure}	

\section{Noisy Recovery in the Quantized Alphabet Setting}\label{sec:noisy}
In this section, we extend the noiseless design to the noisy design in the quantized alphabet setting. More specifically, we assume that all the sparse coefficients in $\vecx$ are elements in a finite set $\mathcal{X}=\{\pm \rho, \pm 2\rho, \ldots, \pm B\rho\}$. We first discuss the construction of the coding matrix $\mathbf{H}$. Note that we can certainly use the irregular graph ensemble as in the noiseless case to design our coding matrix $\mathbf{H}$ for the noisy case as well, because it gives sharper measurement bounds. However, since we are providing order-wise results for the measurement costs, we consider the regular graph ensemble $\mathcal{G}_{\rm reg}^N(R, {d})$ for constructing $\mathbf{H}$ because of its simplicity. In the following, we discuss the constructions of the {\it bin detection matrix} $\mathbf{S}$ in the noisy setting. 

Since the procedures are the same for any measurement bin at any iteration, we drop the bin index $r$ in \eqref{divide-and-conquer} and use the italic font $\bdsb{y}$ to denote a generic bin measurement $\mathbf{y}_r$ using the following model
\begin{align}\label{equiv.model.bin}
	\bdsb{y}  
	= \mathbf{S}\mathbf{z} + \mathbf{w}
\end{align}
for some bin detection matrix $\mathbf{S}=[\mathbf{s}_0,\cdots,\mathbf{s}_{N-1}]$ and some sparse vector $\mathbf{z}$. For example, in the first iteration at bin $r$, the sparse vector equals $\mathbf{z}=\mathbf{z}_r$ given in \eqref{divide-and-conquer}. As the peeling iterations proceed, the non-zero coefficients in $\mathbf{z}$ will be peeled off and potentially left with a $1$-sparse coefficient. Therefore, at each iteration, we perform the bin detection routine to verify if $\mathbf{z}$ has become a $1$-sparse signal (i.e. resolve the bin hypothesis) and obtain the associated index-value pair $(\widehat{k},\widehat{x}[\widehat{k}])$. In the presence of noise, we propose the following robust detection scheme for each bin.

\begin{defi}[\bf Robust Bin Detection Algorithm]\label{defi_RBI_Algorithm}
The detection is performed in a ``guess-and-check'' manner as:
\begin{itemize}
%
%
%
	\item[\bf Step 1)] {\bf single-ton search} $\psi: \bdsb{y} \rightarrow (\widehat{k}, \widehat{x}[\widehat{k}])$ estimates the index-value pair $(\widehat{k},\widehat{x}[\widehat{k}])$ assuming that the underlying bin is a single-ton. This procedure depends on the bin detection matrix $\mathbf{S}$, and is explained in the next section.
	\item[\bf Step 2)] {\bf single-ton verification} determines whether the single-ton assumption is valid using the estimates $(\widehat{k},\widehat{x}[\widehat{k}])$: 
\begin{align}\label{singleton-verification}
	\bdsb{y}\sim\mathcal{H}_{\textrm{S}}(\widehat{k},\widehat{x}[\widehat{k}])\quad
	&\textrm{if}~\frac{1}{P}\left\|\bdsb{y} - \widehat{x}[\widehat{k}] \mathbf{s}_{\widehat{k}} \right\|^2 
	\leq \left(1+ \gamma \SNRmin\right) \times \sigma^2,
\end{align}
where $\gamma \in (0,1)$ is some constant, and $\SNRmin = \rho^2/\sigma^2$.	
\end{itemize}

\end{defi}

This ``guess-and-check'' procedure is already manifested in the noiseless design, where the bin detection matrix $\mathbf{S}$ leads to a simple ratio test to accomplish both the single-ton search and verification. More specifically, the matrix $\mathbf{S}$ from the noiseless design is a properly chosen {\it codebook} for encoding the unknown value and location of the $1$-sparse coefficient, where each column of $\mathbf{S}$ is a {\it codeword}. On one hand, the first row of both designs is an all-one vector, which captures directly the unknown value (but not the index). On the other hand, the noiseless design encodes the index information into a single $N$-PSK symbol (i.e. $W^k=e^{-\mathrm{i}\frac{2\pi k}{N}}$ for $k\in[N]$). The perspective of treating $\mathbf{S}$ as a codebook is very insightful for designing the single-ton search for the noisy scenario, where the goal is to decode the index-value pair (i.e. the codeword transmitted $\mathbf{s}_k$) from its noisy observation $\bdsb{y}$ through a Gaussian channel with an unknown channel gain $x[k]$ (see \figref{fig:single-ton_search}).
%
\begin{figure}[h]
\begin{center}
\includegraphics[scale=0.26]{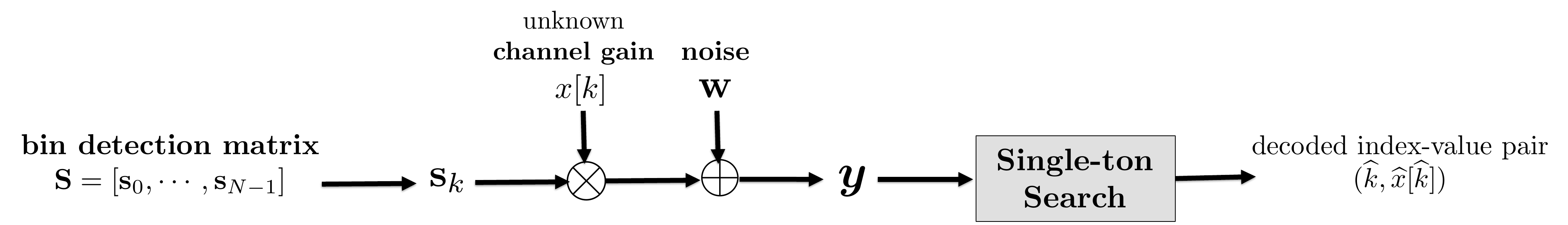}
\caption{An illustration of the single-ton search.}
\label{fig:single-ton_search}
\end{center}
\end{figure}
%

%
%
%

To guarantee the success of peeling in the presence of noise, the codebook needs to be designed differently from the noiseless case such that it can be robustly decoded. In the following, we first introduce a simple randomized construction for this purpose with no computational constraints, and then explain how to derive a low complexity scheme based on the randomized construction.

\subsection{A Simple Random Construction}\label{sec:random_family}

In the presence of noise, the randomized design exploits fully randomized linear codes to resolve different bin hypotheses and obtain the index-value pair.

\begin{defi}[]\label{def_random_sensing}
The $P\times N$ {\it bin detection matrix} $\mathbf{S} = \left[S_{i,j}\right]_{P \times N}$ consists of i.i.d. Gaussian entries $\mathcal{N}(0,1)$.
\end{defi}

Using this randomized construction, the single-ton search can be performed as follows. 
For each possible coefficient index $k$, we obtain the maximum likelihood (ML) of the coefficient as:
\begin{align}\label{X_MLE}
	\alpha_k= \frac{\mathbf{s}_k^T \bdsb{y}}{\left\|\mathbf{s}_k\right\|^2}.
\end{align}
Substituting the estimate of the coefficient $\alpha_k$ into the likelihood of the single-ton hypothesis in Proposition \ref{prop_divide-and-conquer}, we choose the index $k$ that minimizes the residual energy:
\begin{align}\label{residual_min}
	  \widehat{k} = \arg\min_{k\in\mathcal{N}(r)}~\left\|\bdsb{y} -  \alpha_k \mathbf{s}_k \right\|^2.
\end{align}
The search is over the coding pattern in the $r$-th bin $k\in\mathcal{N}(r)$, which is known a priori. With the estimated index $\widehat{k}$, the coefficient is obtained by aligning it to the closest alphabet symbol in $\mathcal{X}$
\begin{align}\label{x_detection}
	\widehat{x}[\widehat{k}] = \min_{x\in\mathcal{X}} \left\|\alpha_{\widehat{k}}-x\right\|^2.
\end{align}

\begin{lem}[]\label{lem_random_sensing}
Using the $P\times N$ bin detection matrix $\mathbf{S}$ in Definition \ref{def_random_sensing}, the algorithm in Definition \ref{defi_RBI_Algorithm} succeeds in identifying the presence of a single-ton and its index-value pair correctly in time $O((N/K) \log (N/K))$, with probability at least $1-O(1/K^2)$ as long as $K=O(N^\delta)$ for some $\delta\in(0,1)$ and
\begin{align}
	\begin{cases}
		P \geq 16(1+\SNRmin^{-1}) \frac{(1+2\delta) }{(1- \delta)} \log \left(\frac{N}{K}\right), & \SNRmin \gg 1\\
		P \geq 16 \SNRmin^{-2}\frac{(1+2\delta) }{(1- \delta)} \log \left(\frac{N}{K}\right), & \SNRmin \ll 1
	\end{cases}.
\end{align}
\end{lem}

\begin{proof}
	See Appendix \ref{proof_lem_random_sensing}. 
\end{proof}
Since the detection scheme incurs an error with probability at most $O(1/K^2)$, the overall probability of making an error throughout the peeling iterations across $K$ bins is at most $O(1/K)$, which is on par with the error probability of the oracle-based peeling decoder. Therefore, our scheme achieves an overall failure probability of $\Pf = O(1/K)$, which approaches zero asymptotically. Now let us briefly comment on the measurement cost and computational complexity. There are a total of $R=\eta K$ bins and each bin has $P=O(\log (N/K) )$ measurements, the randomized construction leads to a measurement cost of $M=\eta K P = {O}(K\log (N/K))$. 
In terms of computations, this scheme requires an exhaustive search over the entire codebook in each peeling iteration. The size of the codebook for some bin (say $r$) depends on the right node degree $|\mathcal{N}(r)|$. Based on the ``balls-and-bins'' construction, this means that $|\mathcal{N}(r)|$ is well concentrated around ${O}(N/K)$ with an exponential tail. Since each codeword imposes a search complexity of $P={O}(\log (N/K))$ by the maximum likelihood single-ton search, therefore across all ${O}(K)$ peeling iterations, this results in a total complexity of $T={O}(N/K)\times {O}(\log (N/K)) \times {O}(K) = {O}(N \log (N/K))$. 

\subsection{Noisy Bin Detection: Going below Linear Time}

The randomized construction is slow because it does not optimize its choice of codebook to facilitate the decoding procedure of {\bf Step (1)} in Definition \ref{defi_RBI_Algorithm}, which causes the high complexity. The question to ask is: {\it is it possible to maintain similar performances with a run-time complexity that is sub-linear in $N$}?  To reduce the complexity without compromising the measurement cost, the spirit of divide-and-conquer also applies. We use two codebooks, where one uses the randomized construction to deal with single-ton verifications, while the other codebook (introduced next) deals with the single-ton search, which is the key to our fast algorithm. 

\subsubsection{Motivating Example in the Noiseless Case}
To motivate our noisy design, we consider another coding scheme in the noiseless case, where the bin detection matrix is constructed as
\begin{align}
	\mathbf{S} = (-1)^{\mathbf{B}},
\end{align}
where $\mathbf{B}=\begin{bmatrix}\mathbf{b}_0 & \mathbf{b}_1 & \cdots & \mathbf{b}_{N-1}\end{bmatrix}$ is the binary expansion matrix with $n=\lceil\log_2 N\rceil$ such that each column $\mathbf{b}_k$ is an $n$-bit binary representation for all $k\in[N]$. In our running example $N=16$, the $4\times 16$ binary expansion matrix is 
\begin{align}\label{binary_exp_eq}
	\mathbf{B}=
	\begin{bmatrix}
	0 & 0 & 0 & 0 & 0 & 0 & 0 & 0 &\cdots & 1\\
	0 & 0 & 0 & 0 & 1 & 1 & 1 & 1 &\cdots & 1\\
	0 & 0 & 1 & 1 & 0 & 0 & 1 & 1 &\cdots & 1\\
	0 & 1 & 0 & 1 & 0 & 1 & 0 & 1 &\cdots & 1
	\end{bmatrix}
\end{align}
and the bin detection matrix is:
\begin{align}
	\mathbf{S}=
	\begin{bmatrix}
	(-1)^0 & (-1)^0 & (-1)^0 & (-1)^0 &\cdots & (-1)^1\\
	(-1)^0 & (-1)^0 & (-1)^0 & (-1)^0 &\cdots & (-1)^1\\
	(-1)^0 & (-1)^0 & (-1)^1 & (-1)^1 &\cdots & (-1)^1\\
	(-1)^0 & (-1)^1 & (-1)^0 & (-1)^1 &\cdots & (-1)^1
	\end{bmatrix}.
\end{align}

For simplicity, we assume that the values are all known $x[k]=1$ for $k\in\supp{\mathbf{x}}$ but the locations $k$ are unknown. Later we explain how to get rid of this assumption. Given this bin detection matrix and that all $x[k]=1$ by assumptions, right nodes $1$, $2$ and $3$ are associated with measurements $\mathbf{y}_1=\mathbf{0}$, 
\begin{align*}
	\mathbf{y}_2&=
	\begin{bmatrix}
	(-1)^0\\
	(-1)^0\\
	(-1)^0\\
	(-1)^1
	\end{bmatrix}
	+
	\begin{bmatrix}
	(-1)^0\\
	(-1)^1\\
	(-1)^0\\
	(-1)^1
	\end{bmatrix}
	+
	\begin{bmatrix}
	(-1)^1\\
	(-1)^1\\
	(-1)^0\\
	(-1)^1
	\end{bmatrix},
	~
	\mathbf{y}_3=
	\begin{bmatrix}
	(-1)^1\\
	(-1)^0\\
	(-1)^1\\
	(-1)^0
	\end{bmatrix}.
\end{align*}
Now, one can easily determine if a right node is a zero-ton, a single-ton or a multi-ton easily. Consider the right node $3$. A single-ton can be verified by checking if $|y_3[1]|=\cdots=|y_3[4]|$ and the unknown index can be obtained by taking the sign\footnote{The sign function is defined slightly different from the usual case:
	\begin{align}
		\sgn{x} = 
		\begin{cases}
			1, & x<0\\
			0, & x \geq 0.
		\end{cases}
	\end{align}} of each measurement $\sgn{y_3[p]}$ such that
	\begin{align}
		\widehat{k} &= \sum_{p=1}^{n} 2^{p-1} \times \sgn{y_3[p]}.
	\end{align} 
On the other hand, consider the measurement $\mathbf{y}_2$ from right node 2. Since it does not satisfy the above criterion, it can be concluded as a multi-ton. 

In the general noiseless case where $x[k]$ is unknown, we can easily modify the simple case by concatenating an extra ``all-one'' row vector with the bin detection matrix $\mathbf{S}$ as
\begin{align}
	\mathbf{S}=
	\begin{bmatrix}
	1        &      1    &       1  &    1     & \cdots & 1\\
	(-1)^0 & (-1)^0 & (-1)^0 & (-1)^0 &\cdots & (-1)^1\\
	(-1)^0 & (-1)^0 & (-1)^0 & (-1)^0 &\cdots & (-1)^1\\
	(-1)^0 & (-1)^0 & (-1)^1 & (-1)^1 &\cdots & (-1)^1\\
	(-1)^0 & (-1)^1 & (-1)^0 & (-1)^1 &\cdots & (-1)^1
	\end{bmatrix}.
\end{align}
Using this bin detection matrix, for the single-ton right node $3$, we would have $$\mathbf{y}_3= x[10]\times	
\begin{bmatrix}1, (-1)^1, (-1)^0, (-1)^1, (-1)^0 \end{bmatrix},$$ which gives us $y_3[0] = x[5]$ and the unknown index $k$ can be obtained as:
\begin{align}
	\widehat{k} &= \sum_{p=1}^{n} 2^{p-1} \times \sgn{y_3[p]}\oplus \sgn{y_3[0]}.
\end{align} 
However, in the presence of noise, these tests no longer work as an oracle. Next we explain how to robustify this coding scheme in the presence of noise.

\subsubsection{General Design in the Noisy Case}\label{sec:noisy_oracle}
In the noiseless case, each codeword in $\mathbf{S}$ is the bipolar $\{\pm 1\}$ image of the corresponding binary code $\mathbf{b}_k$ of the column index $k$, and hence it is not difficult to decode the transmitted message $\mathbf{b}_k$ and recover $k$. However, in the presence of noise, the codebook needs to be re-designed such that it can be robustly decoded.
\begin{defi}[\bf Bin Detection Matrix]\label{def_binary_sensing}
Let $\mathbf{B}$ be the $n \times N$ binary expansion matrix in \eqref{binary_exp_eq} with $n=\lceil\log_2N\rceil$, where the bin detection matrix is constructed as $\mathbf{S} = [\mathbf{S}_0^T,\mathbf{S}_1^T,\mathbf{S}_2^T]^T$, and 
\begin{itemize}
	\item $\mathbf{S}_0=\mathbf{1}_{P\times N}$ is an all-one codebook;
	\item $\mathbf{S}_1=(-1)^{\mathbf{C}}$ and $\mathbf{C}=[\mathbf{c}_0,\cdots,\mathbf{c}_{N-1}]$ is a $P\times N$ linear channel codebook constructed as $\mathbf{C}=\mathbf{G}\mathbf{B}$ by a $P\times n$ generator matrix with a block length $P$, as well as a decoding error probability of $e^{-\zeta P}$ for some error exponent $\zeta>0$; 
	\item $\mathbf{S}_2=[\mathbf{s}_{2,0},\cdots,\mathbf{s}_{2,N-1}]$ is a $P\times N$ random codebook consisting of i.i.d. Rademacher entries $\{\pm 1\}$.
\end{itemize}
\end{defi}

There exist many codes that satisfy the our requirement (strictly positive error exponent), but the challenge is the decoding time. It is desirable to have a decoding time that is linear in the block length $P={O}(n)$ so that the sample complexity and computational complexity can be maintained at ${O}(n)$ for each bin, same as the noiseless case. Excellent examples include the class of {\it expander codes} or (spatially coupled) {\it LDPC codes} that allow for linear time decoding. 
With this design, we obtain three measurement sets in each bin $\bdsb{y}=[\mathbf{u}_0^T,\mathbf{u}_1^T,\mathbf{u}_2^T]^T$:
\begin{align}
	\mathbf{u}_i = \mathbf{S}_i\mathbf{z} + \mathbf{w}_i,\quad i=0,1,2.
\end{align}
Each measurement set is used differently in the ``guess-and-check'' procedure mentioned in Definition \ref{defi_RBI_Algorithm}.

The {\bf single-ton verification} simply uses the measurement set $\mathbf{u}_2$ to confirm whether the bin is a single-ton, as summarized in \algref{alg:RBI_binary}, while the {\bf single-ton search} uses $\mathbf{u}_0$ and $\mathbf{u}_1$ differently. The single-ton search uses the measurement set $\mathbf{u}_0$ for obtaining the estimate $\widehat{\alpha}$ of $x[k]$, and the measurement set $\mathbf{u}_1$ for obtaining the estimate $\widehat{k}$ of the index $k$. If the underlying bin is indeed a single-ton with an index-value pair $(k,\alpha)$, then the measurement $\mathbf{u}_1$ is the noisy version of some coded message $\mathbf{c}_k=\mathbf{G}\mathbf{b}_k$
\begin{align}
	\mathbf{u}_1
	= 
	\alpha (-1)^{\mathbf{G}\mathbf{b}_k}+\mathbf{w}_1,
\end{align}
where $\mathbf{b}_k$ is the $k$-th column of the binary expansion matrix $\mathbf{B}$.

\begin{prop}\label{prop_BSC}
Given a single-ton bin with an index-value pair $(k,\alpha)$, the sign of the measurement set $\mathbf{u}_1$ satisfies
\begin{align}
	\sgn{\mathbf{u}_1}
	&= \mathbf{G}\mathbf{b}_k \oplus\sgn{\alpha} \oplus \mathbf{e},
\end{align}
where $\mathbf{e}$ is a binary vector containing $P$ bit flips with a cross probability upper bounded as $\Pe=e^{-\frac{|x[k]|^2}{2\sigma^2}}$.
\end{prop}
\begin{proof}
	The proof can be obtained by Gaussian tail bounds, and hence we omit it here due to lack of space.
\end{proof}

\begin{algorithm}[h]
  \caption{ Robust Bin Detection Algorithm}\label{alg:RBI_binary}
  \begin{algorithmic}
    \STATE ${\tt Input:}$ Observation $\bdsb{y}=[\mathbf{u}_0^T,\mathbf{u}_1^T,\mathbf{u}_2^T]^T$, $\SNRmin$ and $\sigma^2$.
    \STATE ${\tt Set:}$ $\gamma\in(0,1)$ and generator matrix $\mathbf{G}$.
    \STATE ${\tt Output:}$ the index-value pair $(\widehat{k},\widehat{x}[\widehat{k}])$
    \STATE {\bf obtain the coefficient} from $\mathbf{u}_0$:
	\begin{align}\label{x_MLE}
		\widehat{\alpha} = \min_{x\in\mathcal{X}} \left\|\mathbf{u}_0 - x \mathbf{1}_P \right\|^2
	\end{align}
    \STATE {\bf estimate} the index $\mathbf{b}_{\widehat{k}}$ via channel decoding over $\sgn{\mathbf{u}_1} \oplus\sgn{\widehat{\alpha}}  = \mathbf{G}\mathbf{b}_k\oplus \mathbf{e}$
    \STATE {\bf obtain} $\widehat{k}$ from $\mathbf{b}_{\widehat{k}}=[b_{\widehat{k}}[1],\cdots,b_{\widehat{k}}[n]]^T$ such that $\widehat{k} = \sum_{p=1}^{n} 2^{p-1} \times b_{\widehat{k}}[p]$.
	    \IF{$\|\mathbf{u}_2 - \widehat{\alpha} \mathbf{s}_{2,\widehat{k}} \|^2/P \leq (1+\gamma\SNRmin)\sigma^2$} 
	        \RETURN $(\widehat{k},\widehat{x}[\widehat{k}])$
            \ENDIF    
  \end{algorithmic}
\end{algorithm}

Although $\alpha$ is unknown, it can be estimated using $\mathbf{u}_0$ using \eqref{x_MLE} and therefore, we have $\sgn{\mathbf{u}_1} \oplus\sgn{\widehat{\alpha}}  = \mathbf{G}\mathbf{b}_k\oplus \mathbf{e}$. Because the index $k$ can be obtained from $\mathbf{b}_k$ directly, we only need to decode $\mathbf{b}_k$ reliably over a binary symmetric channel (BSC) with a cross probability $\Pe$. 

\begin{lem}[]\label{lem_binary_sensing}
Using the bin detection matrix $\mathbf{S}$ in Definition \ref{def_binary_sensing}, the algorithm in Definition \ref{defi_RBI_Algorithm} succeeds in identifying the presence of a single-ton and its index-value pair correctly with probability at least $1-O(1/K^2)$ as long as $K=O(N^\delta)$ and $P={O}(\log (N/K))$.
\end{lem}

\begin{proof}
See Appendix \ref{proof_lem_binary_sensing}, where the big-O constant for $P$ is analyzed. 
\end{proof}

\section{Noisy Recovery in the Continuous Alphabet Setting}\label{sec:noisy_continuous}
In this section, we provide details of the noisy recovery algorithm in the continuous alphabet setting. The major challenge with continuous alphabet is that, since it is impossible to obtain the exact values of the sparse coefficients in the presence of noise, the iterative decoding procedure may suffer from error propagation if we do not design and analyze the algorithm carefully. The key idea of our algorithm in the continuous alphabet setting is to use a truncated peeling algorithm so that the error propagation can be controlled. In the following, we first present the construction of the bin detection matrix, and then the modified peeling decoding algorithm.

\subsection{Bin Detection Matrix}\label{sec:bin_detection_matrix}
Similar to the quantized alphabet setting, we still use the regular graph ensemble $\mathcal{G}_{\rm reg}^N(R, {d})$ for constructing the coding matrix $\mathbf{H}$. Meanwhile, the design of the bin detection matrix $\mathbf{S}\in\{-1,1\}^{P\times N}$ is slightly modified in order to better fit the continuous alphabet setting. The matrix $\mat{S}$ consists of two parts, the \emph{location} matrix $\mat{S}_0\in\{-1,1\}^{P_0\times N}$ and the \emph{verification} matrix $\mat{S}_1\in\{-1,1\}^{P_1\times N}$, i.e., $\mat{S} = [\mat{S}_0^T,~\mat{S}_1^T]^T$, and thus, the number of measurements in each bin detection matrix is $P=P_0+P_1$. We denote by $\vect{s}_j$, $\vect{s}_{0,j}$, and $\vect{s}_{1,j}$ the $j$-th column $(j\in[N])$ of $\mat{S}$, $\mat{S}_0$, and $\mat{S}_1$, respectively. Similar to the quantized alphabet setting, we have the following generative model on the measurements in a particular bin (the bin index is omitted):
\begin{align}\label{equiv.model.bin_continuous}
	\bdsb{y} = \mathbf{S}\mathbf{z} + \mathbf{w}.
\end{align}
With the design of $\mat{S}$, the measurement $\bdsb{y}$ consist of two parts, i.e., $\bdsb{y} = [\vect{u}_0^T,\vect{u}_1^T]^T$, where $\vect{u}_i = \mathbf{S}_i \mathbf{z} + \mathbf{w}_i$, $i=0,1$.

Again, the bin detection matrix $\mat{S}$ is used to check whether a bin is a single-ton bin, and if it is, the bin detection matrix $\mat{S}$ finds the index-value pair of the sparse coefficient. Suppose that a particular bin is a single-ton and the sparse coefficient is located at $j$, $j\in[N]$, i.e., $\vect{z}=x[j]\vect{e}_j$, where $\vect{e}_j$ is the $j$-th vector of the standard basis. Then, the measurements of this bin is $\bdsb{y}=x[j]\vect{s}_j+\vect{w}$. As mentioned above, we can divide the measurements into two parts, location measurements $\vect{u}_0$ and verification measurements $\vect{u}_1$, which correspond to the location matrix and verification matrix, respectively. Namely, we have $\vect{u}_0=x[j]\mat{s}_{0,j}+\vect{w}_0$ and $\vect{u}_1=x[j]\mat{s}_{1,j}+\vect{w}_1$. 

The design of the verification matrix is relatively simple. The entries of the verification matrix $\mat{S}_1$ are i.i.d. Rademacher distributed, i.e., all the entries are independent and equally likely to be either $1$ or $-1$. The design of the location matrix $\mat{S}_0$ is more complicated. As we can see, if a bin is indeed a single-ton, then the location measurements $\vect{u}_0$ is a scaled version of $\mat{s}_{0,j}$ with additive Gaussian noise $\vect{w}_0$. Let $\zeta=\Phi(-| x[j] |/\sigma)$, where $\Phi(\cdot)$ is the CDF of standard Gaussian distribution. Taking the sign\footnote{In this section, we use the standard definition of sign, i.e., 
$$
\sgn{x} = \begin{cases}
1, & x \ge 0\\
-1, & x < 0.
\end{cases}
$$
} of all the location measurements and considering the randomness of the Gaussian noise, we can see that for each element $u_{0,k}$ in the location measurements, $k\in[P_0]$, we have
$$
\sgn{u_{0,k}}=
\begin{cases}
\sgn{x[j]}s_{0,k,j}\quad &\text{with probability } 1-\zeta \\
-\sgn{x[j]}s_{0,k,j}\quad &\text{with probability } \zeta.
\end{cases}
$$
Now the problem becomes a channel coding problem in a symmetric channel with symbols $\{+1,-1\}$. The channel is similar to the binary symmetric channel (BSC) except the fact that we are using $\{+1,-1\}$ rather than $\{0,1\}$. For simplicity we will still call this channel a BSC in the following context. Consider the $N$ possible locations of the sparse coefficient as $N$ \emph{messages}. We encode the $N$ messages by $P_0$-bit \emph{codewords} with symbols $\pm 1$, or equivalently, we design a map $f:[N]\rightarrow \{1,-1\}^{P_0}$, and the columns of the location matrix are the codewords of all the messages, i.e., $\vect{s}_{0,j}=f(j)$, $j\in [N]$. If $x[j]<0$, the codeword gets a global sign flip and then we get the modified codeword $\sgn{x[j]}\vect{s}_{0,j}$. Transmitting this modified codeword through a BSC with bit flip probability $\zeta$, we get the received sequence, $\sgn{\vect{u}_0}$. Then we need a decoding algorithm to decode the original codeword $\vect{s}_{0,j}$, up to a global sign flip, and then, there are at most two possible locations of the sparse coefficient. Then, one can use the verification measurements to check whether the bin is indeed a single-ton, find the correct location among the two possible choices, and estimate the value of the sparse coefficient.

Now we describe the encoding and decoding scheme of the location matrix. The code should satisfy four properties:
\begin{itemize}
\item[(i)] The block length of the codewords should be as small as possible. Since we need at least $O(\log(N))$ bits to encode $N$ messges, $P_0$ should be as close to $O(\log(N))$ as possible.
\item[(ii)] The decoding complexity should be as close to $O(\log(N))$ as possible.
\item[(iii)] The decoding algorithm succeeds with high probability; specifically, when there are $O(1)$ bits flipped, we need the probability of successful decoding to be $1-O(1/\poly(N))$.
\item[(iv)] The decoding algorithm should be \emph{universal}, i.e., it should not rely on the exact knowledge of the bit flipping probability.
\end{itemize}

Many of the state-of-the-art capacity achieving codes, such as LDPC codes and Polar codes, satisfy the first two properties. However, in order to have $1-O(1/\poly(N))$ error probability, the decoding algorithms in these codes need exact knowledge of the channel, meaning that these algorithms need the flip probability $\zeta$ as a known input parameter. However, in our problem, $\zeta=\Phi(-\ABSL{x[j]}/\sigma)$, where $\ABSL{x[j]}$ is unknown. This is the reason that we need universal decoding algorithm. In practice, since we have an upper bound of the bit flip probability, $\zeta \le \Phi(-\beta/\sigma)$, it is reasonable to believe that if we use the upper bound as the bit flip probability, the state-of-the-art capacity achieving codes still work well, although there is no theoretical guarantee. For theoretical interests, here we propose a \emph{concatenated} code which satisfies all the four properties provably. The results are given in Lemma \ref{lem:code}. This code is based on Justesen's concatenation scheme \cite{justesen1972class}, linear complexity expander codes \cite{spielman1995linear}, and the Wozencraft's ensemble \cite{massey1963threshold}.

\begin{lem}\label{lem:code}
There exists a concatenated code 
$$
f_c:[N]\rightarrow \{1,-1\}^{P_0}
$$
for BSC with block length $P_0=O(\log(N)\log\log(N))$ and universal decoding algorithm, which can successfully decode with probability $1-O(1/\poly(N))$. The decoding complexity is $O(\log^{1+r}(N))$, where $r>0$ is an arbitrarily small constant.
\end{lem}
\begin{proof}
See Appendix~\ref{sec:prf_lemma_code}.
\end{proof}
With this concatenated code, we can construct the location matrix $\mat{S}_0$ by setting the $j$-th column as the codeword of $j$, i.e., $\vect{s}_{0,j}=f_c(j)$. Meanwhile, we note that this concatenated code is designed mainly for theoretical purpose. In practice, we can use LDPC codes and Polar codes in the location matrix, and in the decoding algorithm use $ \Phi(-\beta/\sigma)$ as an estimate of the bit flip probability of the BSC channel. In fact, if we make the conjecture that there exists a code with block length $P_0=O(\log(N)$ and has uniform decoding algorithm, linear decoding complexity, and success probability $1-O(1/\poly(N))$, then we can remove the $\log\log(N)$ factor in the measurement cost, and reduce the $\log^{1+r}(N)$ factor in the run-time to $\log(N)$.

\subsection{Peeling Decoder with Truncation}\label{sec:peel}

Recall that the basic idea of the peeling decoder is to use the location matrix and verification matrix to identify single-ton bins, and estimate the index-value pairs of the sparse coefficients in the single-ton bins. After identifying a single-ton bin, the decoder peels the sparse coefficient (left node) from its neighborhood measurement bins (right nodes). Then, more bins become single-tons. The decoder continues the peeling process iteratively until no single-ton bin can be found. The major challenge in the continuous alphabet setting is that, the signal components are real-valued, and thus we cannot obtain the exact values of the sparse coefficients. Therefore, error propagation in the peeling process is inevitable. We propose a truncation peeling strategy in order to control the error propagation.

Here, we demonstrate the peeling algorithm with truncation strategy via a simple example in Figure \ref{fig:example2}. The main idea is to fix the maximum number of sparse coefficients that can be peeled from a measurement bin. Denote this maximum number by $D$, which is an input constant parameter of the algorithm. This means that when at least $D$ sparse coefficients have been peeled from a particular bin, we stop using this bin in following iterations, i.e., we ``truncate'' large multi-ton bins that are connected to more than $D$ sparse coefficients. We set $D=2$ in the example in Figure \ref{fig:example2}. 

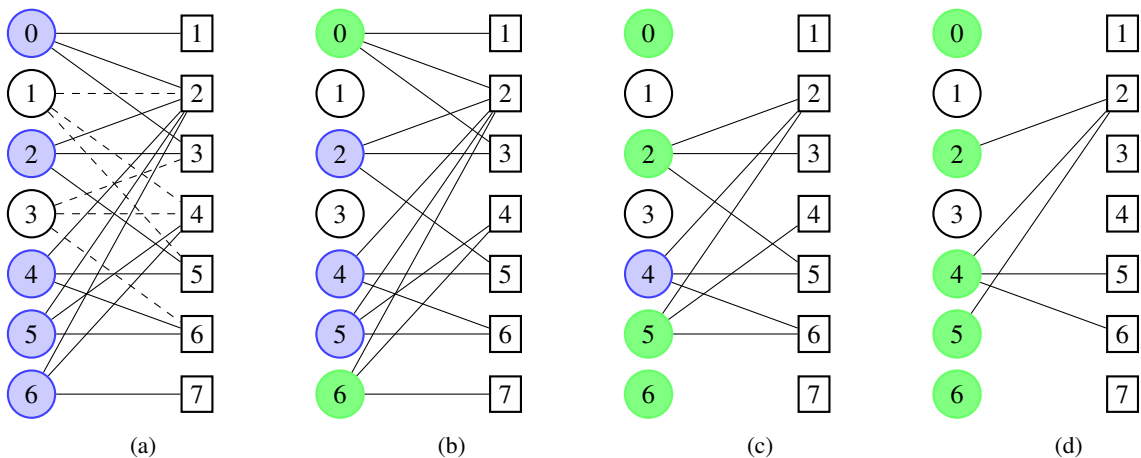
\begin{figure*}[t]
\centering
\begin{subfigure}[b]{0.2\textwidth}
\begin{tikzpicture}[node distance=0.7cm,>=stealth',bend angle=45,auto]
\tikzstyle{ball}=[circle,thick,draw=blue!75,fill=blue!20,minimum size=2.5mm]
\tikzstyle{greenball}=[circle,thick,draw=green!60,fill=green!50,minimum size=2.5mm]
\tikzstyle{emptyball}=[circle,thick,draw=black,fill=white,minimum size=2.5mm]
\tikzstyle{bin}=[rectangle,thick,draw=black,fill=white,minimum size=3mm]
\begin{scope}
\node [ball] (L1) {0};
\node [emptyball] (L2) [below of=L1, yshift=-0.10cm] {1};
\node [ball] (L3)  [below of=L2, yshift=-0.10cm] {2};
\node [emptyball] (L4) [below of=L3, yshift=-0.10cm] {3};
\node [ball] (L5) [below of=L4, yshift=-0.10cm] {4};
\node [ball] (L6) [below of=L5, yshift=-0.10cm] {5};
\node [ball] (L7) [below of=L6, yshift=-0.10cm] {6};	

\node [bin] (R1) [right of=L1, xshift=1.5cm] {1};
\node [bin] (R2) [right of=L2, xshift=1.5cm] {2};
\node [bin] (R3) [right of=L3, xshift=1.5cm] {3};
\node [bin] (R4) [right of=L4, xshift=1.5cm] {4};
\node [bin] (R5) [right of=L5, xshift=1.5cm] {5};
\node [bin] (R6) [right of=L6, xshift=1.5cm] {6};
\node [bin] (R7) [right of=L7, xshift=1.5cm] {7};

\path (L1) edge [left] (R1);
\path (L1) edge [left] (R2);
\path (L1) edge [left] (R3);
\path (L2) edge [left, dashed] (R2);
\path (L2) edge [left, dashed] (R4);
\path (L2) edge [left, dashed] (R5);
\path (L3) edge [left] (R2);
\path (L3) edge [left] (R3);
\path (L3) edge [left] (R5);
\path (L4) edge [left, dashed] (R3);
\path (L4) edge [left, dashed] (R4);
\path (L4) edge [left, dashed] (R6);
\path (L5) edge [left] (R2);
\path (L5) edge [left] (R5);
\path (L5) edge [left] (R6);
\path (L6) edge [left] (R2);
\path (L6) edge [left] (R4);
\path (L6) edge [left] (R6);
\path (L7) edge [left] (R2);
\path (L7) edge [left] (R4);
\path (L7) edge [left] (R7);
\end{scope}
\end{tikzpicture}
\label{subfig:bipartite}\caption{          }
\end{subfigure}
\quad
\begin{subfigure}[b]{0.2\textwidth}
\begin{tikzpicture}[node distance=0.7cm,>=stealth',bend angle=45,auto]
\tikzstyle{ball}=[circle,thick,draw=blue!75,fill=blue!20,minimum size=2.5mm]
\tikzstyle{greenball}=[circle,thick,draw=green!60,fill=green!50,minimum size=2.5mm]
\tikzstyle{emptyball}=[circle,thick,draw=black,fill=white,minimum size=2.5mm]
\tikzstyle{bin}=[rectangle,thick,draw=black,fill=white,minimum size=3mm]
\begin{scope}
\node [greenball] (L1) {0};
\node [emptyball] (L2) [below of=L1, yshift=-0.10cm] {1};
\node [ball] (L3)  [below of=L2, yshift=-0.10cm] {2};
\node [emptyball] (L4) [below of=L3, yshift=-0.10cm] {3};
\node [ball] (L5) [below of=L4, yshift=-0.10cm] {4};
\node [ball] (L6) [below of=L5, yshift=-0.10cm] {5};
\node [greenball] (L7) [below of=L6, yshift=-0.10cm] {6};	

\node [bin] (R1) [right of=L1, xshift=1.5cm] {1};
\node [bin] (R2) [right of=L2, xshift=1.5cm] {2};
\node [bin] (R3) [right of=L3, xshift=1.5cm] {3};
\node [bin] (R4) [right of=L4, xshift=1.5cm] {4};
\node [bin] (R5) [right of=L5, xshift=1.5cm] {5};
\node [bin] (R6) [right of=L6, xshift=1.5cm] {6};
\node [bin] (R7) [right of=L7, xshift=1.5cm] {7};

\path (L1) edge [left] (R1);
\path (L1) edge [left] (R2);
\path (L1) edge [left] (R3);
	
	
\path (L3) edge [left] (R2);
\path (L3) edge [left] (R3);
\path (L3) edge [left] (R5);
	
	
\path (L5) edge [left] (R2);
\path (L5) edge [left] (R5);
\path (L5) edge [left] (R6);
	
\path (L6) edge [left] (R2);
\path (L6) edge [left] (R4);
\path (L6) edge [left] (R6);
	
\path (L7) edge [left] (R2);
\path (L7) edge [left] (R4);
\path (L7) edge [left] (R7);
\end{scope}
\end{tikzpicture}
\label{subfig:peeling1}\caption{          }
\end{subfigure}
\quad
\begin{subfigure}[b]{0.2\textwidth}
\begin{tikzpicture}[node distance=0.7cm,>=stealth',bend angle=45,auto]
\tikzstyle{ball}=[circle,thick,draw=blue!75,fill=blue!20,minimum size=2.5mm]
\tikzstyle{greenball}=[circle,thick,draw=green!60,fill=green!50,minimum size=2.5mm]
\tikzstyle{emptyball}=[circle,thick,draw=black,fill=white,minimum size=2.5mm]
\tikzstyle{bin}=[rectangle,thick,draw=black,fill=white,minimum size=3mm]
\begin{scope}
\node [greenball] (L1) {0};
\node [emptyball] (L2) [below of=L1, yshift=-0.10cm] {1};
\node [greenball] (L3)  [below of=L2, yshift=-0.10cm] {2};
\node [emptyball] (L4) [below of=L3, yshift=-0.10cm] {3};
\node [ball] (L5) [below of=L4, yshift=-0.10cm] {4};
\node [greenball] (L6) [below of=L5, yshift=-0.10cm] {5};
\node [greenball] (L7) [below of=L6, yshift=-0.10cm] {6};	

\node [bin] (R1) [right of=L1, xshift=1.5cm] {1};
\node [bin] (R2) [right of=L2, xshift=1.5cm] {2};
\node [bin] (R3) [right of=L3, xshift=1.5cm] {3};
\node [bin] (R4) [right of=L4, xshift=1.5cm] {4};
\node [bin] (R5) [right of=L5, xshift=1.5cm] {5};
\node [bin] (R6) [right of=L6, xshift=1.5cm] {6};
\node [bin] (R7) [right of=L7, xshift=1.5cm] {7};
	
\path (L3) edge [left] (R2);
\path (L3) edge [left] (R3);
\path (L3) edge [left] (R5);	
\path (L5) edge [left] (R2);
\path (L5) edge [left] (R5);
\path (L5) edge [left] (R6);	
\path (L6) edge [left] (R2);
\path (L6) edge [left] (R4);
\path (L6) edge [left] (R6);
\end{scope}
\end{tikzpicture}
\label{subfig:peeling2}\caption{          }
\end{subfigure}
\quad
\begin{subfigure}[b]{0.2\textwidth}
\begin{tikzpicture}[node distance=0.7cm,>=stealth',bend angle=45,auto]
\tikzstyle{ball}=[circle,thick,draw=blue!75,fill=blue!20,minimum size=2.5mm]
\tikzstyle{greenball}=[circle,thick,draw=green!60,fill=green!50,minimum size=2.5mm]
\tikzstyle{emptyball}=[circle,thick,draw=black,fill=white,minimum size=2.5mm]
\tikzstyle{bin}=[rectangle,thick,draw=black,fill=white,minimum size=3mm]
\begin{scope}
\node [greenball] (L1) {0};
\node [emptyball] (L2) [below of=L1, yshift=-0.10cm] {1};
\node [greenball] (L3)  [below of=L2, yshift=-0.10cm] {2};
\node [emptyball] (L4) [below of=L3, yshift=-0.10cm] {3};
\node [greenball] (L5) [below of=L4, yshift=-0.10cm] {4};
\node [greenball] (L6) [below of=L5, yshift=-0.10cm] {5};
\node [greenball] (L7) [below of=L6, yshift=-0.10cm] {6};	

\node [bin] (R1) [right of=L1, xshift=1.5cm] {1};
\node [bin] (R2) [right of=L2, xshift=1.5cm] {2};
\node [bin] (R3) [right of=L3, xshift=1.5cm] {3};
\node [bin] (R4) [right of=L4, xshift=1.5cm] {4};
\node [bin] (R5) [right of=L5, xshift=1.5cm] {5};
\node [bin] (R6) [right of=L6, xshift=1.5cm] {6};
\node [bin] (R7) [right of=L7, xshift=1.5cm] {7};
	
\path (L5) edge [left] (R2);
\path (L5) edge [left] (R5);
\path (L5) edge [left] (R6);	
\path (L3) edge [left] (R2);
\path (L6) edge [left] (R2);
\end{scope}
\end{tikzpicture}
\caption{          }
\end{subfigure}\label{subfig:peeling3}
\caption{Peeling with truncation. The signal length is $7$ and we design $7$ measurement bins. In the bipartite graph, the left nodes and the right nodes correspond to the sparse coefficients and measurement bins, respectively. The sparse coefficients are shown with color (if a sparse coefficient is recovered, the left node is shown in blue, otherwise it is shown in green). (a) The bipartite graph. The support of the signal is $\{0, 2, 4, 5, 6\}$. The bipartite graph is $3$-left regular, and the connections between zero elements and the measurement bins are shown in dashed lines. (b) Bin $1$ and bin $7$ are single-ton bins, and the corresponding signal components $x[0]$ and $x[6]$ are recovered. (c) Peel $x[0]$ and $x[6]$ from the measurement bins. Since two sparse coefficients are peeled from bin $2$, in the following iterations, we stop using bin $2$. Bin $3$ and bin $4$ become single-ton bins, and the corresponding sparse coefficients are $x[2]$ and $x[5]$. (d) Peel $x[2]$ and $x[5]$ from the measurement bins, and bin $5$ and bin $6$ become single-ton bins. Then, $x[4]$ is recovered.}
\label{fig:example2}
\end{figure*}

We first assume that by the location measurements and verification measurements, we can perfectly identify whether a bin is a single-ton and find the exact location of the sparse coefficient. As we can see, in Figure \ref{fig:example2}, the bins $1$ and $7$ are single-ton bins and the corresponding sparse coefficients are $x[0]$ and $x[6]$, respectively. In the first iteration, the two sparse coefficients are found and we let $\widehat{x}[0]$ and $\widehat{x}[6]$ be the estimated values. Then, we do peeling, meaning that we subtract the measurements contributed by the two sparse coefficients from the measurements in other bins. We get the \emph{remaining} measurements of bins $2$, $3$, $4$, $5$, and $6$ after the first iteration:
\begin{align*}
\vect{y}_2^{(1)}&=\vect{y}_2-\widehat{x}[0]\vect{s}_0 - \widehat{x}[6]\vect{s}_6 \\
\vect{y}_3^{(1)}&=\vect{y}_3- \widehat{x}[0]\vect{s}_0 \\
\vect{y}_4^{(1)}&=\vect{y}_4- \widehat{x}[6]\vect{s}_6 \\
\vect{y}_5^{(1)}&=\vect{y}_5\\
\vect{y}_6^{(1)}&=\vect{y}_6.
\end{align*}
Here, we use $\vect{y}_i$ to denote the measurement in the $i$-th bin, and $\vect{y}_i^{(t)}$ to denote the remaining measurement in the $i$-th bin after the $t$-th iteration. Then we can see that bins $3$ and $4$ become single-ton bins, and the corresponding sparse coefficients are $x[2]$ and $x[5]$, respectively. We should also notice that since two sparse coefficients have been peeled from bin $2$, according to the truncated peeling strategy, we should stop using bin $2$ in the following iterations. Let $\widehat{x}[2]$ and $\widehat{x}[5]$ be the estimated values of the sparse coefficients. Then, the remaining measurements of bins $5$ and $6$ after the second iteration are:
\begin{align*}
\vect{y}_5^{(2)}&=\vect{y}_5^{(1)}-\widehat{x}[2]\vect{s}_2\\
\vect{y}_6^{(2)}&=\vect{y}_6^{(1)}-\widehat{x}[5]\vect{s}_5.
\end{align*}
Then, bins $5$ and $6$ become single-ton bins and the corresponding sparse coefficient is $x[4]$. We can estimate the value of $x[4]$ and get $\widehat{x}[4]$. So far, all the balls have been found, meaning that the all the sparse coefficients are found. We summarize the detailed procedure of peeling decoding algorithm with truncation strategy in Algorithm~\ref{alg:peeling}.

\begin{algorithm}[!h]
\caption{Peeling decoding with truncation strategy}\label{alg:peeling}
\begin{algorithmic} 
\STATE ${\tt Input:}$ Observation $\vect{y}_i$, $i\in[R]$, bin detection matrix $\mat{S}$, coding matrix $\mat{H}$, and truncation threshold $D$
\STATE ${\tt Output:}$ Estimated signal $\widehat{\vect{x}}$
\STATE $\widehat{\vect{x}}\leftarrow\vect{0}$, 
\STATE number of peeled sparse coefficients in each bin: $B_i\leftarrow 0$, $i\in[R]$, 
\STATE Indicator of utilizability of bins: $U_i\leftarrow\TRUE$, $i\in[R]$,
\STATE $\vect{y}_i^{(0)} \leftarrow \vect{y}_i$, $i\in[R]$, $\text{stop}\leftarrow\FALSE$, $t \leftarrow 1$
\WHILE {$\text{stop}=\FALSE$}
\STATE Find sparse coefficients in single-ton bins. 
\STATE $\mathcal{I}_t\leftarrow\{\text{indices of all single-ton bins found in the iteration $t$}\}$.
\STATE $U_i\leftarrow\FALSE$, for all $i\in\mathcal{I}_t$.
\STATE $\mathcal{J}_t\leftarrow\{\text{locations of sparse coefficient in single-tons found in iteration $t$}\}$. 
\STATE $\vect{y}_i^{(t)}\leftarrow\vect{y}_i^{(t-1)}$, $i\in[R]$.
\IF{$\mathcal{J}_t\neq\emptyset$}
\FORALL {$j\in\mathcal{J}_t$}
\STATE Estimate $\widehat{x}[j]$.
\FORALL {$i\in[R]$ such that $U_i=\TRUE$ and $h_{i,j}=1$}
\STATE $\vect{y}_i^{(t)}\leftarrow \vect{y}_i^{(t)}-\widehat{x}[j]\vect{s}_j$.
\STATE $B_i\leftarrow B_i+1$.
\IF {$B_i=D$}
\STATE $U_i\leftarrow\FALSE$
\ENDIF
\ENDFOR
\ENDFOR
\ELSE
\STATE $\text{stop}\leftarrow\TRUE$
\ENDIF
\STATE $t\leftarrow t+1$
\ENDWHILE
\RETURN $\widehat{\vect{x}}$
\end{algorithmic}
\end{algorithm}

The following result of the peeling procedure guarantees that when the peeling process stops, an arbitrarily large fraction of sparse coefficients are found. Similar to the results in the noiseless setting and quantized alphabet setting, the proof of Lemma~\ref{lem:peeling} is based on density evolution, and the only difference is in the truncation strategy.
\begin{lem}\label{lem:peeling} 
Assume that we can always find the correct location of the sparse coefficients in single-ton bins. For any $p>0$, when $K$ is large enough, there exist proper parameters $d=O(1)$ and $R=O(\log(1/p)K)$, such that using a random left regular graph $\mathcal{G}_{\rm reg}^N(R,{d})$, after $n_p$ iterations of truncated peeling, with probability $1-O(\exp\{-c_1(p)K^{c_2(p)}\})$, the fraction of non-zero signal elements that are not detected is less than $p$. Here, $c_1(p),c_2(p)>0$ are two quantities determined by $p$.
\end{lem} 
\begin{proof}
See Appendix~\ref{prf:peeling}.
\end{proof}

\subsection{Single-ton Detection and Signal Estimation}\label{sec:detect}

In Section \ref{sec:peel}, we have shown that if the single-ton bins are always perfectly detected, and the exact location of the sparse coefficients can always be found, an arbitrarily large fraction of non-zero signal elements can be recovered. Then, the remaining issue is to guarantee correct single-ton detection and accurate value estimation. 

Recall that in the first iteration, if a bin is indeed a single-ton bin, from the location measurements, one can decode the modified codeword corresponding to the location index of the sparse coefficient. Due to the sign ambiguity, there may be two possible locations and the true location is guaranteed to be one of them with high probability. We still need to find the correct location and estimate the values of the sparse coefficient. On the other hand, if the bin is not a single-ton, the decoding algorithm of the concatenated code still returns at most two possible locations and we have to make sure that these bins are not considered as single-ton bins. These problems are addressed by energy tests using the verification measurements, based on the same idea as in \cite{yin2015fast}.

\subsubsection{Signal Value Estimation}
Consider a particular bin at a particular iteration. For simplicity, in this part, we resume the notation in Section~\ref{sec:bin_detection_matrix}; more specifically, we omit the index of the bin and the iteration counter, and use $\vect{u}_1$ to denote the \emph{remaining} verification measurement at a particular iteration (this means that the contribution of the recovered sparse coefficients are already subtracted). Let $j$ be a possible location of the sparse coefficient that the decoding algorithm of the concatenated code suggests. We assume that the bin is indeed a single-ton with the single-ton ball located at $j$, and estimate $x[j]$ by the remaining verification measurements, i.e.,
\begin{equation}\label{eq:estimate}
\widehat{x}[j]=\frac{1}{P_1}\sum_{k=1}^{P_1} s_{1,k,j} u_{1,k}.
\end{equation}
Here, $s_{1,k,j}$ is the element at the $k$-th row and the $j$-th column of the verification matrix $\mat{S}_1$, and $u_{1,k}$ is the $k$-th element in $\vect{u}_1$. Intuitively, this estimation method is simply averaging over the measurements with corrected sign, meaning that we flip the sign if the corresponding entry in the verification matrix is $-1$. The theoretical guarantee of single-ton detection and estimation is presented in Lemma \ref{lem:estimation}.

\begin{lem}\label{lem:estimation}
For any $\epsilon>0$, with $P_1=O(\frac{\sigma^2}{\epsilon^2}\log(N))$ verification measurements in each bin, when $\beta> c\epsilon$ for some constant $c>0$, we can accurately detect any single-ton bin within a constant number of iterations. More specifically, we have:
\begin{itemize}
\item[(i)] the location measurements can find the correct location of the sparse coefficient in the single-ton bin with probability $1-O(1/\poly(N))$,
\item[(ii)] the estimated value of sparse coefficient $\widehat{x}[j]$ satisfies $| \widehat{x}[j] - x[j] | \le C_j\epsilon$ for some constant $C_j>0$ with probability $1-O( 1/\poly(N) )$. 
\end{itemize}
\end{lem}

\begin{proof}
See Appendix~\ref{prf:estimate}.
\end{proof}

We note that result (i) is a simple extension of the conclusion that we get in Section \ref{sec:bin_detection_matrix}, where we focused on the first iteration, and result (ii) shows that for any target accuracy level $\epsilon>0$, if the number of verification measurements is $P_1=O(\frac{\sigma^2}{\epsilon^2}\log(N))$, we can estimate the signal value within constant factor of $\epsilon$ with high probability. 

\subsubsection{Energy Test}

So far, we have seen that if a bin is indeed a single-ton, the location measurements can find the correct location of the sparse coefficient and the verification measurements can give accurate estimation of the value. However, there are still several things left. As we have mentioned, we need to clarify sign ambiguity, and rule out measurement bins that are not single-tons. These operations can be done by energy tests. 

Consider the $i$-th bin in the $t$-th iteration. Let $\vect{u}_1$ be the remaining verification measurements, and $\mathcal{B}$ be the set of location indices of sparse coefficients in the $i$-th bin that have been found before this iteration. Before using the location measurements to find the location of new sparse coefficients, we use an energy test to check if this bin is a zero-ton bin, i.e., check if $\supp{\vect{z}} = \mathcal{B}$. If it is, there is no need to run the decoding algorithm of the concatenated code. More specifically, we construct $\widehat{\vect{u}}_1=\sum_{g\in\mathcal{B}}\widehat{x}[g] \vect{s}_{1,g}$ and conduct the zero-ton energy test with threshold $\tau>0$:
\begin{align*}
&\text{if } \frac{1}{P_1}\| \vect{u}_1 - \widehat{\vect{u}}_1 \|_2^2<\tau, \text{ bin $i$ is a zero-ton bin;}\\
&\text{else bin $i$ is not a zero-ton bin.}
\end{align*}

If the bin is not a zero-ton bin, we use the location measurements to find a possible single-ton location $j$ and get the estimated the value $\widehat{x}[j]$. We need to verify if there is indeed $\supp{\vect{z}}=\mathcal{B}\cup\{j\}$. Similar to the zero-ton test, we construct $\widehat{\vect{u}}_1=\sum_{g\in\mathcal{B}\cup\{j\}}\widehat{x}[g]\vect{s}_{1,g}$ and conduct the single-ton energy test with threshold $\tau>0$:
\begin{align*}
&\text{if } \frac{1}{P_1}\| \vect{u}_1 - \widehat{\vect{u}}_1 \|_2^2<\tau, \\
&\text{\quad bin $i$ is a single-ton bin with sparse coefficient located at $j$;}\\
&\text{else} \\
&\text{\quad bin $i$ is not a single-ton bin with sparse coefficient located at $j$.}
\end{align*}

The intuition behind both energy tests is simple. We actually make a hypothesis that the true signal of a bin is $\widehat{\vect{z}}$ and construct the corresponding verification measurements $\widehat{\vect{u}}_1=\mat{S}_1\widehat{\vect{z}}$. If the support of $\widehat{\vect{z}}$ and $\vect{z}$ are the same and the values are accurately estimated, i.e., $\|\widehat{\vect{z}} - \vect{z} \|_\infty<C_0\epsilon$, for some constant $C_0>0$, then the energy of the difference between the actual measurements and the constructed measurements should be small; otherwise, the energy should be large. The theoretical guarantees of both energy tests are provided in Lemma~\ref{lem:energy}.

\begin{lem}\label{lem:energy}
When $\beta = \Omega((\sigma+\epsilon)^2)$, there exists a proper threshold $\tau>0$ such that any energy test succeeds with probability $1-O(1/\poly(N))$, when $P_1=O(\max\{\sigma^2/\epsilon^2, 1\}\log(N))$. 
\end{lem}
\begin{proof}
See Appendix~\ref{prf:energy}.
\end{proof}
With all these ingredients above, we are now ready to prove Theorem~\ref{thm_continuous_recovery}, which is our main result in the continuous alphabet setting with noise. The proof is a simple application of the total law of probability, similar to the ideas that we use in Appendix~\ref{sec:noisy_recovery_perf_analysis}. We relegate the brief proof to Appendix~\ref{prf:continuous_recovery}. We also mention that, since we use random left regular bipartite graph, the recovered $1-p$ fraction of the support is uniformly distributed over the full support of the unknown signal. Therefore, by running the algorithm $\log(K)$ times independently, each sparse coefficient can be recovered with high probability, and thus we can get the full support recovery guarantee.

\section{Numerical Experiments}\label{sec:numerical}
In this section, we provide the empirical performance of our design in the noiseless and noisy settings. Each data point in the simulation is generated by averaging over 200 experiments, where the signals $\mathbf{x}$ are generated once and kept fixed for all the subsequent experiments. In particular, the support of $\mathbf{x}$ are generated uniformly random from $[N]$. In the presence of noise, the signal-to-noise ratio (SNR) is defined as
\begin{align}
	\mathsf{SNR} 
	= \frac{\mathbb{E}\left[\left\|\mathbf{A}\mathbf{x}\right\|^2\right]}{\mathbb{E}\left[\left\|\mathbf{w}\right\|^2\right]}
	=\frac{\left\|\mathbf{x}\right\|^2}{\sigma^2}\frac{\bar{d}}{R}
\end{align}
where $\bar{d}$ is the average left node degree of the bipartite graph, $R$ is the number of right nodes in the graph, and the expectation is taken with respect to the noise, {\it random bipartite graph} and {\it bin detection matrix}. Then in noisy settings, we generate i.i.d. Gaussian noise with variance $\sigma^2$ according to the specified SNR. 

\subsection{Scalability of Measurement and Computational Costs for Noiseless Recovery}
In this case, we examine the measurement cost and run-time of our noiseless recovery algorithm. The measurement matrix $\mathbf{A}$ is constructed using the coding matrix $\mathbf{H}$ from the irregular graph ensemble $\mathcal{G}_{\rm irreg}^N(R, D)$ by fixing $R=1.1K$ and $D=100$. We show experiments with different sparsities where $K=200$, $400$ and $600$ and for each of the sparsity settings, we simulate our noiseless recovery algorithm for recovering sparse signals of dimension $N = 10^4$ to $N=7\times 10^4$. It can be seen in Fig. \ref{fig:noiseless_cost} that the measurement and computational costs remain constant irrespective of the growth in $N$.

\begin{figure}[h]
\begin{center}
\includegraphics[width=0.4\linewidth]{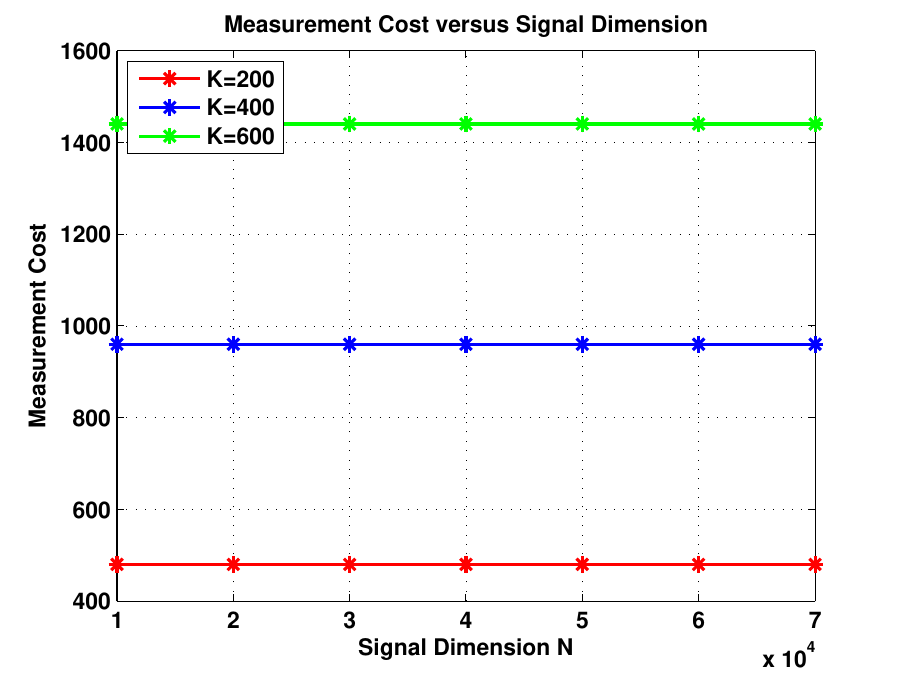}
\includegraphics[width=0.4\linewidth]{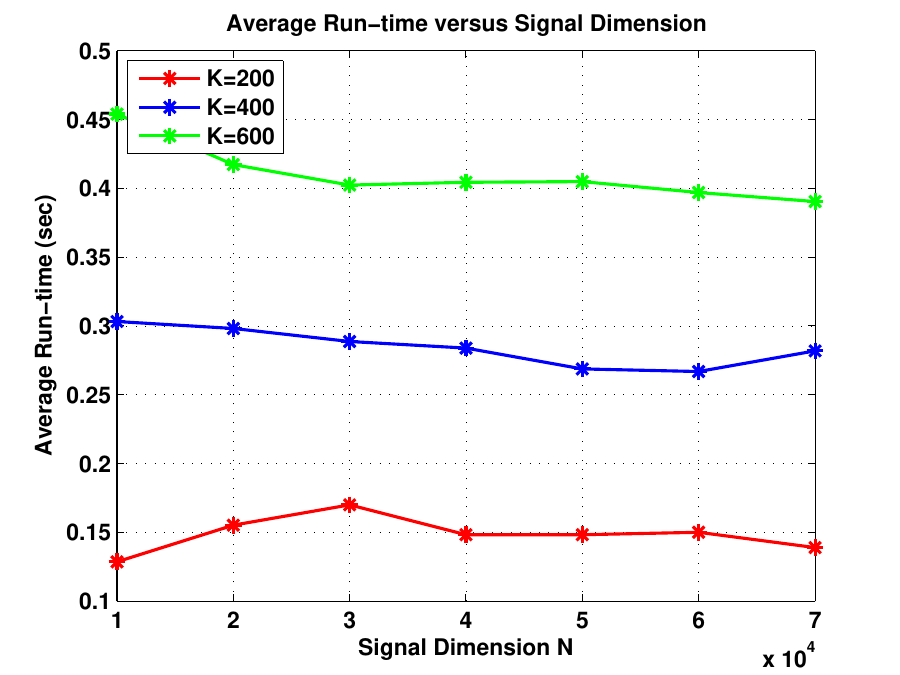}
\caption{Measurement and computational costs as functions of the signal dimension $N$ for noiseless recovery.  It can be seen that the measurement and computational costs remain constant irrespective of the growth in $N$.}
\label{fig:noiseless_cost}
\end{center}
\end{figure}	

\subsection{Noise Robustness and Scalabilty in the Quantized Alphabet Setting}\label{sec:noise_robustness}

In this subsection, we showcase the robustness and scalability of the noisy design in the quantized alphabet setting. The sparse coefficient are chosen from $\{-1,1\}$ uniformly at random. The measurement matrix $\mathbf{A}$ is constructed as follows:
\begin{itemize}
	\item the coding matrix $\mathbf{H}$ is constructed using the regular graph ensemble $\mathcal{G}_{\rm reg}^N(R, {d})$ with a regular degree ${d}=3$ and a redundancy $R=2K$;
	\item we choose a $P_1\times N$ random Rademacher matrix for the zero-ton and single-ton verifications with $P_1=\log N$, and a $P_2\times N$ coded binary matrix for the single-ton search with $P_2=2\log_2 N$. In particular, the coded binary matrix $\mathbf{C}=\mathbf{G}\mathbf{B}$ is chosen based on the $P_2\times \log_2N$ generator matrix $\mathbf{G}$ associated with a $(3,6)$-regular LDPC code, and the single-ton search utilizes the Gallager's bit flipping algorithm for decoding. 
\end{itemize}	

To demonstrate the noise robustness, the probability of success is plotted against a range of SNR from $0$dB to $16$dB for both designs. In each experiment, $50$-sparse signals $\mathbf{x}$ (i.e., $K=50$) with $N=10^5$ are generated. It is seen in Fig. \ref{fig:noise_robustness} that for a given measurement cost, there exists a threshold of SNR, above which our noisy recovery schemes succeed with probability $1$. It is also observed that the thresholds increase gracefully when the measurement cost is reduced.

\begin{figure}[h]
\begin{center}
\includegraphics[width=0.4\linewidth]{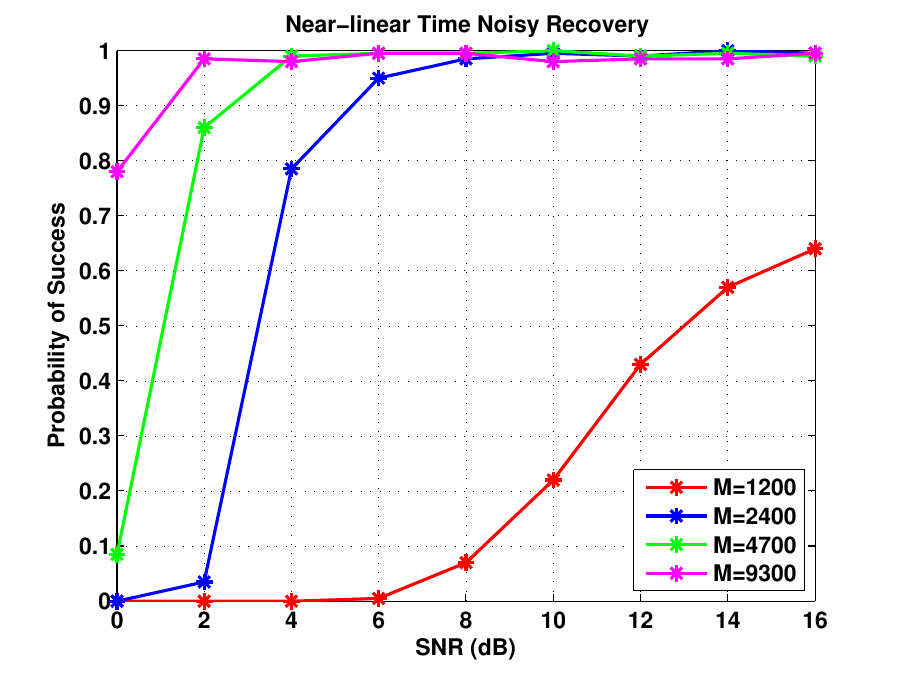}
\includegraphics[width=0.4\linewidth]{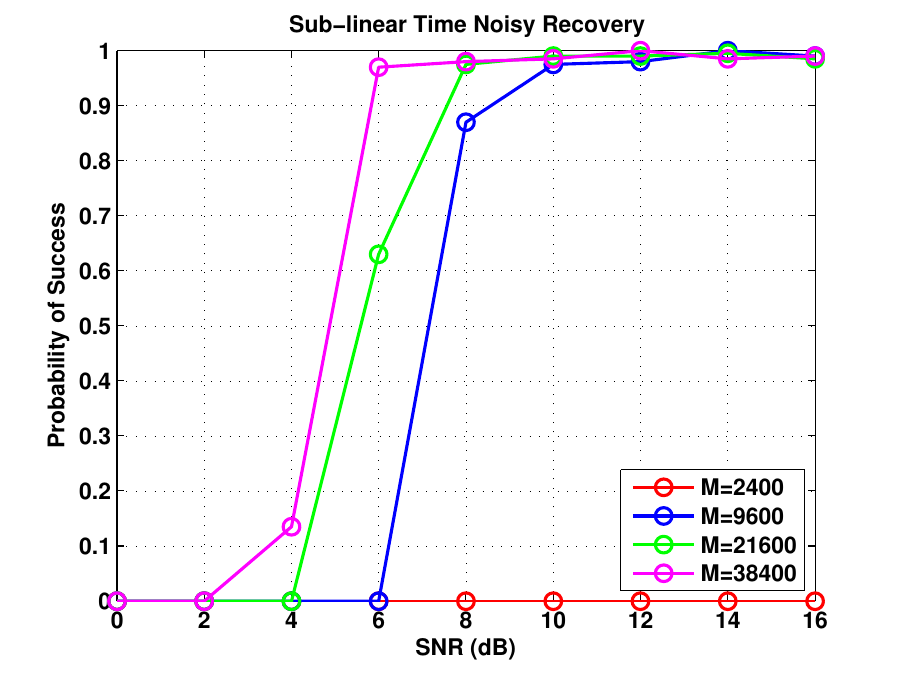}
\caption{Probability of success of near-linear time noisy recovery and sub-linear time noisy recovery against SNR for $N=0.1~\textrm{million}$ and $K=50$. We can see that for a given measurement cost, there exists a threshold of SNR, above which our noisy recovery schemes succeed with probability $1$. It is also observed that the thresholds increase gracefully when the measurement cost is reduced.}
\label{fig:noise_robustness}
\end{center}
\end{figure}

To showcase the scalability, we trace the average measurement cost and run-time for both designs. In each experiment, the sparsity of the $K$-sparse signals $\mathbf{x}$ is chosen as $K=N^\delta$ under different sparsity regimes $\delta = 1/6,1/3$ and $1/2$, while the ambient dimensions of the signals for each sparsity regime ranges from $N = 10^2$ to $N=10^7\approx 10~\textrm{million}$. The measurements are obtained under SNR = $20$dB. As we can see, for both the near-linear time recovery algorithm and the sub-linear time recovery algorithm, the measurement costs scale sub-linearly in the signal dimension $N$. As for the time complexity, the sub-linear algorithm scales as $O(N^{\delta})$, i.e., sub-linear in $N$.

\begin{figure}[!h]
\begin{subfigure}{0.49\textwidth}
\centering
\includegraphics[width=0.8\linewidth]{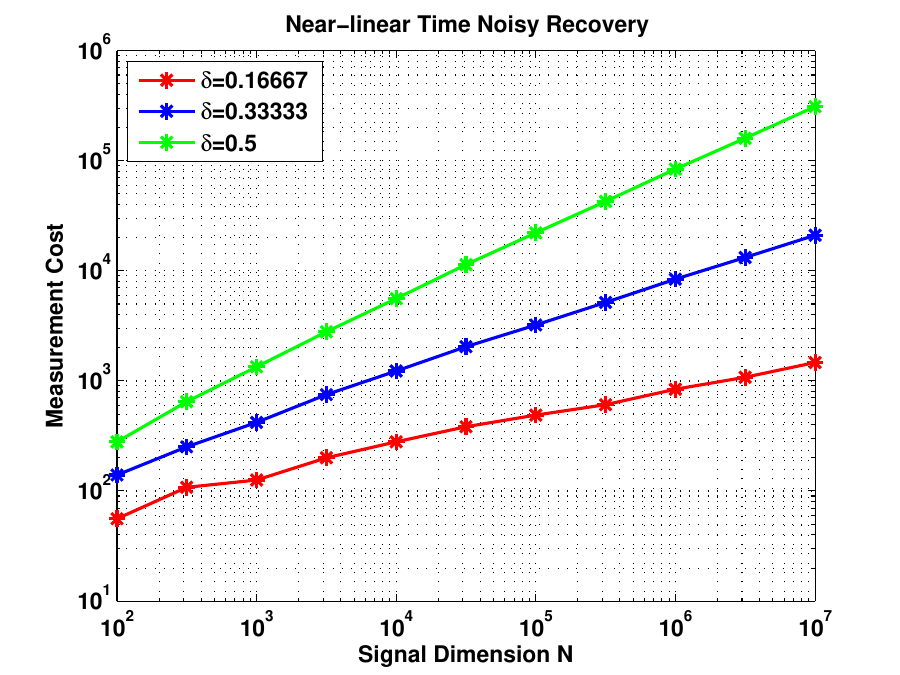}
\caption{Measurement cost for {\it near-linear time} noisy recovery}
\end{subfigure}
\begin{subfigure}{0.49\textwidth}
\centering
\includegraphics[width=0.8\linewidth]{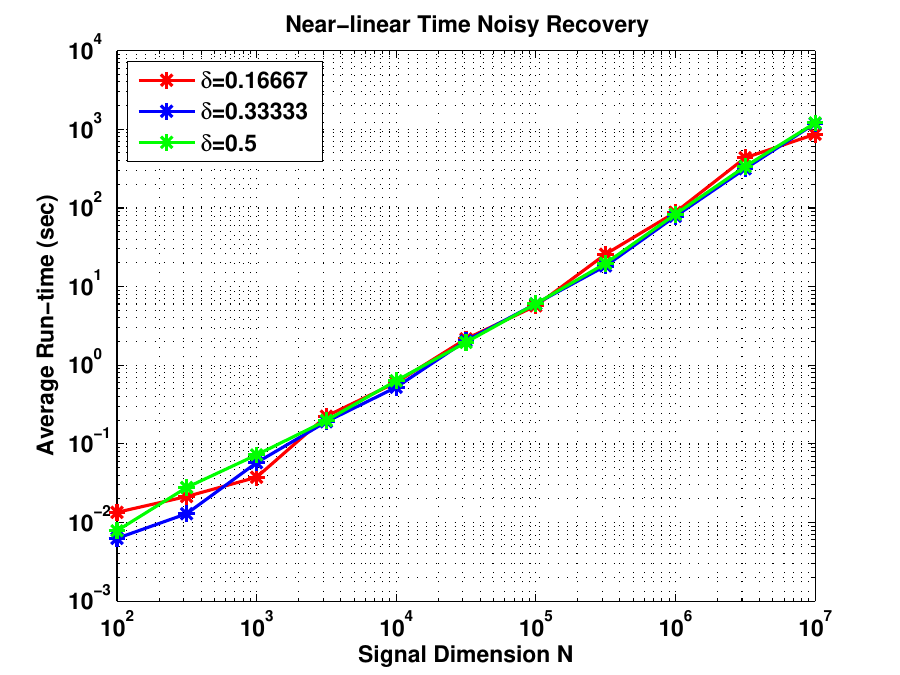}
\caption{Average run-time for {\it near-linear time} noisy recovery}
\end{subfigure}
\begin{subfigure}{0.49\textwidth}
\centering
\includegraphics[width=0.8\linewidth]{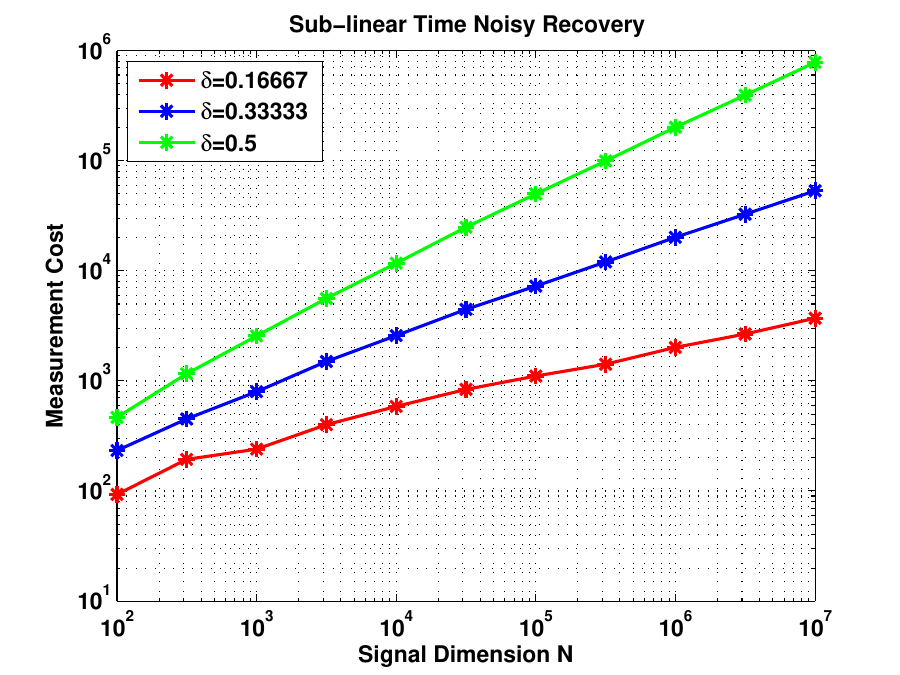}
\caption{Measurement cost for {\it sub-linear time} noisy recovery}
\end{subfigure}
\begin{subfigure}{0.49\textwidth}
\centering
\includegraphics[width=0.8\linewidth]{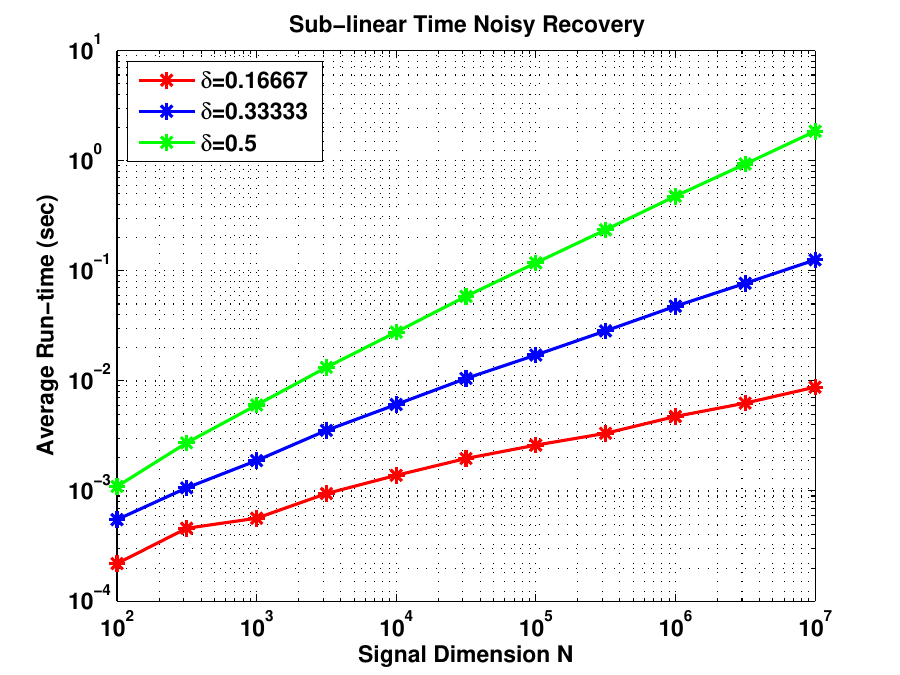}
\caption{Average run-time for {\it sub-linear time} noisy recovery}
\end{subfigure}
\caption{Measurement and computational costs as functions of the signal dimension $N$ for noisy recovery in the quantized alphabet setting. It can be seen that the measurement cost of the near-linear time and sub-linear time designs scale sub-linearly with respect to $N$. For instance, when $N=10$ million and $K=\sqrt{N}$ (the green curves on both plots), the measurement costs for both schemes are approximately $10^6$. We can also see that the run-time for the sub-linear time recovery algorithm indeed scales sub-linearly in $N$. For example, when choosing $K=N^{1/6}$, the red curve in (d) scales as $O(N^{1/6})$.}
\label{fig:noisy_meas_cost}
\end{figure}
 
\subsection{Noise Robustness and Scalabilty in the Continuous Alphabet Setting}\label{sec:noise_robustness_continuous}
For the noisy recovery algorithm in the continuous alphabet setting, we conduct two experiments to test the measurement cost and time complexity.

\begin{figure}[!h]
\begin{subfigure}{0.49\textwidth}
\centering
\includegraphics[width=0.8\linewidth]{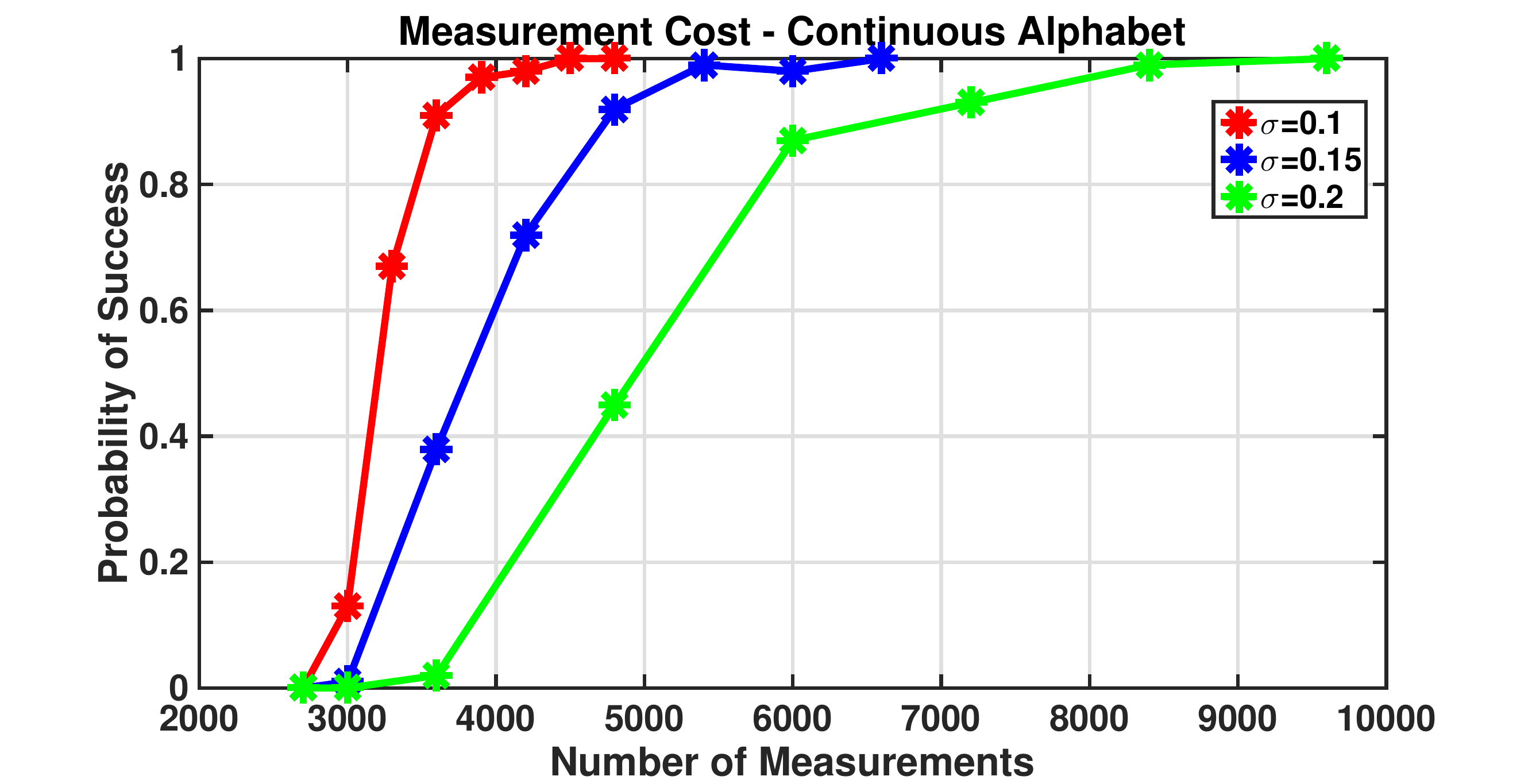}
\caption{Measurement cost in the continuous alphabet setting}
\end{subfigure}
\begin{subfigure}{0.49\textwidth} 
\centering
\includegraphics[width=0.8\linewidth]{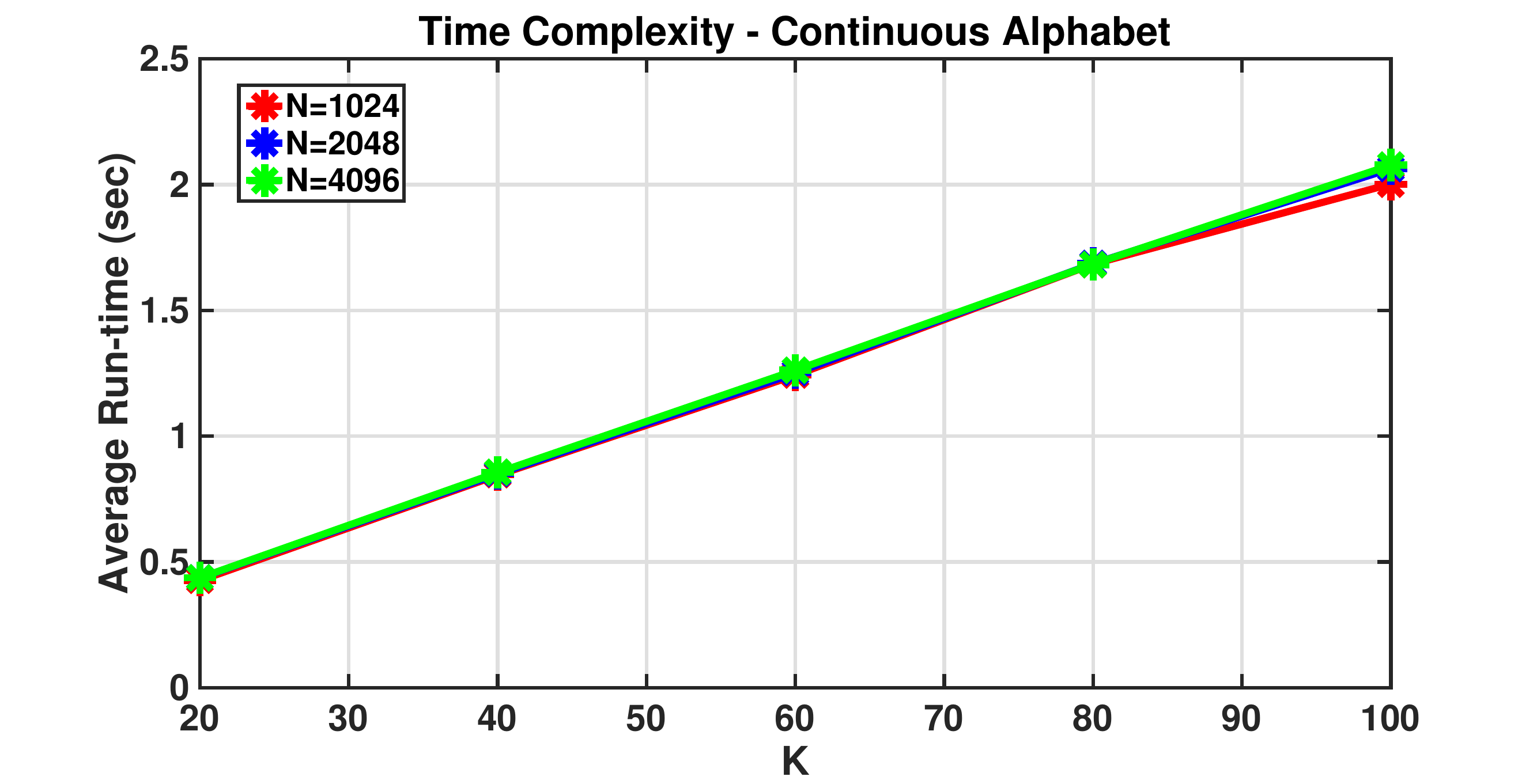}
\caption{Time complexity in the continuous alphabet setting}
\end{subfigure}
\caption{Measurement cost and time complexity of our noisy recovery algorithm in the continuous alphabet setting. It can be seen that when we have enough measurements, we can successfully recover the unknown signal with the $\ell_\infty$ norm guarantee. We can also see that the time complexity of the algorithm increases linear in $K$ but does not have significant dependence on $N$.}
\label{fig:continuous}
\end{figure}
 
In the both experiments, we set the left degree of the random bipartite graph to be $d=10$, and the number of bins to be $R=10K$. The maximum number of sparse coefficients that can be peeled from a bin is set to be $D=5$. The sparse coefficients of the signal are generated from a uniform distribution in $[-10,-3]\cup[3,10]$, and the locations of the sparse coefficients are uniformly chosen from the $N$ coordinates. The additive noise is i.i.d. Gaussian distributed with zero mean. The inner code that we use for the single-ton detection is a $(3, 6)$ regular LDPC code with rate $0.5$.

In the first experiment, we choose $N=4096, K=10$, and test the measurement cost of our algorithm. More specifically, we test how the empirical probability of successful recovery changes when we increase the number of verification measurements in each bin. We define the event of successful recovery as the cases when the supports of $\vect{x}$ and $\widehat{\vecx}$ are the same and $\| \widehat{\vecx} - \vect{x} \|_\infty \le 0.1$. The phase transition behavior under different noise power is shown in Fig.~\ref{fig:continuous} (a).

In the second experiment, we fix the variance of the noise to be $0.1$, and the number of verification measurements in each bin to be $2\log_2 N$. We test the average running time with different $(N, K)$ pairs. As we can see in Fig.~\ref{fig:continuous} (b), the time cost of our algorithm is linear in $K$ and do not have significant dependence on $N$, and this behavior justifies our theory.

\section{Conclusions}\label{sec:conclusion}
In this paper, we have addressed the support recovery problem for compressed sensing using sparse-graph codes. We have proposed a compressed sensing design framework for sub-linear time support recovery, by introducing a new family of measurement matrices and fast recovery algorithms. In the noiseless setting, our framework can recover any arbitrary $K$-sparse signal in $O(K)$ time using $2K$ measurements asymptotically with a vanishing error probability. In the noisy setting, when the sparse coefficients take values in a finite and quantized alphabet and the sparsity $K$ is sub-linear in the signal dimension $K=O(N^\delta)$ for some $0<\delta<1$, our framework can achieve the same goal in time $O(K\log(N/K))$ using $O(K\log(N/K))$ measurements obtained from measurement matrix with elements $\{-1,0,1\}$. In this setting, our results are order-optimal in terms of measurement costs and run-time. For continuous-valued sparse coefficients, our algorithm can recover an arbitrarily large fraction of the support of the sparse signal using $O(K\log(N/K)\log\log(N/K))$ measurements, and $O(K\log^{1+r}(N/K))$ run-time, where $r$ is an arbitrarily small constant. We also obtain recovery guarantees in the $\ell_\infty$ and $\ell_1$ norms. We note that our algorithm is the first algorithm that can achieve both sub-linear measurement cost and time complexity for compressed sensing problems. We also provide simulation results to corroborate our theoretical findings. Our theoretical and experimental results justify that our framework can potentially enable real-time or near-real-time processing for massive datasets featuring sparsity, which are relevant to a multitude of practical applications.

\bibliographystyle{ieeetr}
\bibliography{ref}

\begin{thebibliography}{10}

\bibitem{li2015sub}
X.~Li, S.~Pawar, and K.~Ramchandran, ``Sub-linear time compressed sensing using
  sparse-graph codes,'' in {\em Information Theory (ISIT), 2015 IEEE
  International Symposium on}, pp.~1645--1649, IEEE, 2015.

\bibitem{li2016recovering}
X.~Li and K.~Ramchandran, ``Recovering k-sparse n-length vectors in o (k log n)
  time: Compressed sensing using sparse-graph codes,'' in {\em Acoustics,
  Speech and Signal Processing (ICASSP), 2016 IEEE International Conference
  on}, pp.~4049--4053, IEEE, 2016.

\bibitem{yin2016compressed}
D.~Yin, R.~Pedarsani, X.~Li, and K.~Ramchandran, ``Compressed sensing using
  sparse-graph codes for the continuous-alphabet setting,'' in {\em
  Communication, Control, and Computing (Allerton), 2016 54th Annual Allerton
  Conference on}, pp.~758--765, IEEE, 2016.

\bibitem{donoho2006compressed}
D.~L. Donoho, ``Compressed sensing,'' {\em IEEE Trans. on Information Theory},
  vol.~52, no.~4, pp.~1289--1306, 2006.

\bibitem{lustig2007sparse}
M.~Lustig, D.~Donoho, and J.~M. Pauly, ``Sparse mri: The application of
  compressed sensing for rapid mr imaging,'' {\em Magnetic resonance in
  medicine}, vol.~58, no.~6, pp.~1182--1195, 2007.

\bibitem{candes2013phase}
E.~J. Candes, Y.~C. Eldar, T.~Strohmer, and V.~Voroninski, ``Phase retrieval
  via matrix completion,'' {\em SIAM Journal on Imaging Sciences}, vol.~6,
  no.~1, pp.~199--225, 2013.

\bibitem{elad2010sparse}
M.~Elad, {\em Sparse and redundant representations: from theory to applications
  in signal and image processing}.
\newblock Springer, 2010.

\bibitem{gilbert2010sparse}
A.~Gilbert and P.~Indyk, ``Sparse recovery using sparse matrices,'' Institute
  of Electrical and Electronics Engineers, 2010.

\bibitem{gastpar2000necessary}
M.~Gastpar and Y.~Bresler, ``On the necessary density for spectrum-blind
  nonuniform sampling subject to quantization,'' in {\em Acoustics, Speech, and
  Signal Processing, 2000. ICASSP'00. Proceedings. 2000 IEEE International
  Conference on}, vol.~1, pp.~348--351, IEEE, 2000.

\bibitem{wainwright2009information}
M.~J. Wainwright, ``Information-theoretic limits on sparsity recovery in the
  high-dimensional and noisy setting,'' {\em IEEE Trans. on Information
  Theory}, vol.~55, no.~12, pp.~5728--5741, 2009.

\bibitem{akccakaya2010shannon}
M.~Ak{\c{c}}akaya and V.~Tarokh, ``Shannon-theoretic limits on noisy
  compressive sampling,'' {\em IEEE Trans. on Information Theory}, vol.~56,
  no.~1, pp.~492--504, 2010.

\bibitem{baraniuk2007compressive}
R.~G. Baraniuk, ``Compressive sensing,'' {\em IEEE signal processing magazine},
  vol.~24, no.~4, 2007.

\bibitem{candes2009near}
E.~J. Cand{\`e}s, Y.~Plan, {\em et~al.}, ``Near-ideal model selection by ?1
  minimization,'' {\em The Annals of Statistics}, vol.~37, no.~5A,
  pp.~2145--2177, 2009.

\bibitem{fuchs2005recovery}
J.-J. Fuchs, ``Recovery of exact sparse representations in the presence of
  bounded noise,'' {\em IEEE Trans. on Information Theory}, vol.~51, no.~10,
  pp.~3601--3608, 2005.

\bibitem{greenshtein2006best}
E.~Greenshtein {\em et~al.}, ``Best subset selection, persistence in
  high-dimensional statistical learning and optimization under l1 constraint,''
  {\em The Annals of Statistics}, vol.~34, no.~5, pp.~2367--2386, 2006.

\bibitem{pawar2013computing}
S.~Pawar and K.~Ramchandran, ``Computing a k-sparse n-length discrete fourier
  transform using at most 4k samples and o (k log k) complexity,'' in {\em
  Information Theory Proceedings (ISIT), 2013 IEEE International Symposium on},
  pp.~464--468, IEEE, 2013.

\bibitem{candes2006robust}
E.~J. Cand{\`e}s, J.~Romberg, and T.~Tao, ``Robust uncertainty principles:
  Exact signal reconstruction from highly incomplete frequency information,''
  {\em IEEE Trans. on Information Theory}, vol.~52, no.~2, pp.~489--509, 2006.

\bibitem{candes2007dantzig}
E.~Candes and T.~Tao, ``The dantzig selector: Statistical estimation when p is
  much larger than n,'' {\em The Annals of Statistics}, pp.~2313--2351, 2007.

\bibitem{wu2012optimal}
Y.~Wu and S.~Verd{\'u}, ``Optimal phase transitions in compressed sensing,''
  {\em IEEE Trans. on Information Theory}, vol.~58, no.~10, pp.~6241--6263,
  2012.

\bibitem{davenport2011introduction}
M.~A. Davenport, M.~F. Duarte, Y.~C. Eldar, and G.~Kutyniok, ``Introduction to
  compressed sensing,'' {\em Preprint}, vol.~93, 2011.

\bibitem{reeves2008sampling}
G.~Reeves and M.~Gastpar, ``Sampling bounds for sparse support recovery in the
  presence of noise,'' in {\em Information Theory, 2008. ISIT 2008. IEEE
  International Symposium on}, pp.~2187--2191, IEEE, 2008.

\bibitem{aeron2010information}
S.~Aeron, V.~Saligrama, and M.~Zhao, ``Information theoretic bounds for
  compressed sensing,'' {\em IEEE Trans. on Information Theory}, vol.~56,
  no.~10, pp.~5111--5130, 2010.

\bibitem{wainwright2009sharp}
M.~J. Wainwright, ``Sharp thresholds for high-dimensional and noisy sparsity
  recovery using-constrained quadratic programming (lasso),'' {\em IEEE Trans.
  on Information Theory}, vol.~55, no.~5, pp.~2183--2202, 2009.

\bibitem{cai2011orthogonal}
T.~T. Cai and L.~Wang, ``Orthogonal matching pursuit for sparse signal recovery
  with noise,'' {\em IEEE Trans. on Information Theory}, vol.~57, no.~7,
  pp.~4680--4688, 2011.

\bibitem{fletcher2009necessary}
A.~K. Fletcher, S.~Rangan, and V.~K. Goyal, ``Necessary and sufficient
  conditions for sparsity pattern recovery,'' {\em IEEE Trans. on Information
  Theory}, vol.~55, no.~12, pp.~5758--5772, 2009.

\bibitem{wang2010information}
W.~Wang, M.~J. Wainwright, and K.~Ramchandran, ``Information-theoretic limits
  on sparse signal recovery: Dense versus sparse measurement matrices,'' {\em
  IEEE Trans. on Information Theory}, vol.~56, no.~6, pp.~2967--2979, 2010.

\bibitem{jin2011limits}
Y.~Jin, Y.-H. Kim, and B.~D. Rao, ``Limits on support recovery of sparse
  signals via multiple-access communication techniques,'' {\em IEEE Trans. on
  Information Theory}, vol.~57, no.~12, pp.~7877--7892, 2011.

\bibitem{hormati2009estimation}
A.~Hormati, A.~Karbasi, S.~Mohajer, and M.~Vetterli, ``An estimation theoretic
  approach for sparsity pattern recovery in the noisy setting,'' {\em arXiv
  preprint arXiv:0911.4880}, 2009.

\bibitem{haupt2011robust}
J.~Haupt and R.~Baraniuk, ``Robust support recovery using sparse compressive
  sensing matrices,'' in {\em Information Sciences and Systems (CISS), 2011
  45th Annual Conference on}, pp.~1--6, IEEE, 2011.

\bibitem{sarvotham2006sudocodes}
S.~Sarvotham, D.~Baron, and R.~G. Baraniuk, ``Sudocodes - fast measurement and
  reconstruction of sparse signals,'' in {\em Information Theory, 2006 IEEE
  International Symposium on}, pp.~2804--2808, IEEE, 2006.

\bibitem{khajehnejad2011summary}
M.~A. Khajehnejad, J.~Yoo, A.~Anandkumar, and B.~Hassibi, ``Summary based
  structures with improved sublinear recovery for compressed sensing,'' in {\em
  Information Theory Proceedings (ISIT), 2011 IEEE International Symposium on},
  pp.~1427--1431, IEEE, 2011.

\bibitem{gilbert2012approximate}
A.~C. Gilbert, Y.~Li, E.~Porat, and M.~J. Strauss, ``Approximate sparse
  recovery: optimizing time and measurements,'' {\em SIAM Journal on
  Computing}, vol.~41, no.~2, pp.~436--453, 2012.

\bibitem{gilbert2006algorithmic}
A.~C. Gilbert, M.~J. Strauss, J.~A. Tropp, and R.~Vershynin, ``Algorithmic
  linear dimension reduction in the l\_1 norm for sparse vectors,'' {\em arXiv
  preprint cs/0608079}, 2006.

\bibitem{tibshirani1996regression}
R.~Tibshirani, ``Regression shrinkage and selection via the lasso,'' {\em
  Journal of the Royal Statistical Society. Series B (Methodological)},
  pp.~267--288, 1996.

\bibitem{blumensath2009iterative}
T.~Blumensath and M.~E. Davies, ``Iterative hard thresholding for compressed
  sensing,'' {\em Applied and Computational Harmonic Analysis}, vol.~27, no.~3,
  pp.~265--274, 2009.

\bibitem{beck2009fast}
A.~Beck and M.~Teboulle, ``A fast iterative shrinkage-thresholding algorithm
  for linear inverse problems,'' {\em SIAM Journal on Imaging Sciences},
  vol.~2, no.~1, pp.~183--202, 2009.

\bibitem{donoho2009message}
D.~L. Donoho, A.~Maleki, and A.~Montanari, ``Message-passing algorithms for
  compressed sensing,'' {\em Proceedings of the National Academy of Sciences},
  vol.~106, no.~45, pp.~18914--18919, 2009.

\bibitem{tropp2007signal}
J.~A. Tropp and A.~C. Gilbert, ``Signal recovery from random measurements via
  orthogonal matching pursuit,'' {\em IEEE Trans. on Information Theory},
  vol.~53, no.~12, pp.~4655--4666, 2007.

\bibitem{needell2009cosamp}
D.~Needell and J.~A. Tropp, ``Cosamp: Iterative signal recovery from incomplete
  and inaccurate samples,'' {\em Applied and Computational Harmonic Analysis},
  vol.~26, no.~3, pp.~301--321, 2009.

\bibitem{needell2009uniform}
D.~Needell and R.~Vershynin, ``Uniform uncertainty principle and signal
  recovery via regularized orthogonal matching pursuit,'' {\em Foundations of
  computational mathematics}, vol.~9, no.~3, pp.~317--334, 2009.

\bibitem{donoho2012sparse}
D.~L. Donoho, Y.~Tsaig, I.~Drori, and J.-L. Starck, ``Sparse solution of
  underdetermined systems of linear equations by stagewise orthogonal matching
  pursuit,'' {\em IEEE Trans. on Information Theory}, vol.~58, no.~2,
  pp.~1094--1121, 2012.

\bibitem{tropp2005signal}
J.~Tropp and A.~C. Gilbert, ``Signal recovery from partial information via
  orthogonal matching pursuit,'' 2005.

\bibitem{welch1974lower}
L.~Welch, ``Lower bounds on the maximum cross correlation of signals
  (corresp.),'' {\em IEEE Transactions on Information theory}, pp.~397--399,
  1974.

\bibitem{howard2008fast}
S.~D. Howard, A.~R. Calderbank, and S.~J. Searle, ``A fast reconstruction
  algorithm for deterministic compressive sensing using second order
  reed-muller codes,'' in {\em Information Sciences and Systems, 2008. CISS
  2008. 42nd Annual Conference on}, pp.~11--15, IEEE, 2008.

\bibitem{applebaum2009chirp}
L.~Applebaum, S.~D. Howard, S.~Searle, and R.~Calderbank, ``Chirp sensing
  codes: Deterministic compressed sensing measurements for fast recovery,''
  {\em Applied and Computational Harmonic Analysis}, vol.~26, no.~2,
  pp.~283--290, 2009.

\bibitem{akccakaya2008frame}
M.~Ak{\c{c}}akaya and V.~Tarokh, ``A frame construction and a universal
  distortion bound for sparse representations,'' {\em Signal Processing, IEEE
  Transactions on}, vol.~56, no.~6, pp.~2443--2450, 2008.

\bibitem{dimakis2012ldpc}
A.~G. Dimakis, R.~Smarandache, and P.~O. Vontobel, ``Ldpc codes for compressed
  sensing,'' {\em Information Theory, IEEE Transactions on}, vol.~58, no.~5,
  pp.~3093--3114, 2012.

\bibitem{xu2007efficient}
W.~Xu and B.~Hassibi, ``Efficient compressive sensing with deterministic
  guarantees using expander graphs,'' in {\em Information Theory Workshop,
  2007. ITW'07. IEEE}, pp.~414--419, IEEE, 2007.

\bibitem{jafarpour2009efficient}
S.~Jafarpour, W.~Xu, B.~Hassibi, and R.~Calderbank, ``Efficient and robust
  compressed sensing using optimized expander graphs,'' {\em IEEE Trans. on
  Information Theory}, vol.~55, no.~9, pp.~4299--4308, 2009.

\bibitem{berinde2008combining}
R.~Berinde, A.~C. Gilbert, P.~Indyk, H.~Karloff, and M.~J. Strauss, ``Combining
  geometry and combinatorics: A unified approach to sparse signal recovery,''
  in {\em Communication, Control, and Computing, 2008 46th Annual Allerton
  Conference on}, pp.~798--805, IEEE, 2008.

\bibitem{indyk2008near}
P.~Indyk and M.~Ruzic, ``Near-optimal sparse recovery in the l1 norm,'' in {\em
  Foundations of Computer Science, 2008. FOCS'08. IEEE 49th Annual IEEE
  Symposium on}, pp.~199--207, IEEE, 2008.

\bibitem{berinde2008practical}
R.~Berinde, P.~Indyk, and M.~Ruzic, ``Practical near-optimal sparse recovery in
  the l1 norm,'' in {\em Communication, Control, and Computing, 2008 46th
  Annual Allerton Conference on}, pp.~198--205, IEEE, 2008.

\bibitem{parvaresh2008explicit}
F.~Parvaresh and B.~Hassibi, ``Explicit measurements with almost optimal
  thresholds for compressed sensing,'' in {\em Acoustics, Speech and Signal
  Processing, 2008. ICASSP 2008. IEEE International Conference on},
  pp.~3853--3856, IEEE, 2008.

\bibitem{pham2009sublinear}
H.~V. Pham, W.~Dai, and O.~Milenkovic, ``Sublinear compressive sensing
  reconstruction via belief propagation decoding,'' in {\em Information Theory,
  2009. ISIT 2009. IEEE International Symposium on}, pp.~674--678, IEEE, 2009.

\bibitem{bakshi2012sho}
M.~Bakshi, S.~Jaggi, S.~Cai, and M.~Chen, ``Sho-fa: Robust compressive sensing
  with order-optimal complexity, measurements, and bits,'' in {\em
  Communication, Control, and Computing (Allerton), 2012 50th Annual Allerton
  Conference on}, pp.~786--793, IEEE, 2012.

\bibitem{zhang2008compressed}
F.~Zhang and H.~D. Pfister, ``Compressed sensing and linear codes over real
  numbers,'' in {\em Information Theory and Applications Workshop, 2008},
  pp.~558--561, IEEE, 2008.

\bibitem{donoho2012information}
D.~L. Donoho, A.~Javanmard, and A.~Montanari, ``Information-theoretically
  optimal compressed sensing via spatial coupling and approximate message
  passing,'' in {\em Information Theory Proceedings (ISIT), 2012 IEEE
  International Symposium on}, pp.~1231--1235, IEEE, 2012.

\bibitem{zhang2012verification}
F.~Zhang and H.~D. Pfister, ``Verification decoding of high-rate ldpc codes
  with applications in compressed sensing,'' {\em Information Theory, IEEE
  Transactions on}, vol.~58, no.~8, pp.~5042--5058, 2012.

\bibitem{finiasz2012private}
M.~Finiasz and K.~Ramchandran, ``Private stream search at the same
  communication cost as a regular search: Role of ldpc codes,'' in {\em
  Information Theory Proceedings (ISIT), 2012 IEEE International Symposium on},
  pp.~2556--2560, IEEE, 2012.

\bibitem{du1993combinatorial}
D.-Z. Du and F.~K. Hwang, {\em Combinatorial group testing and its
  applications}.
\newblock World Scientific, 1993.

\bibitem{charikar2004finding}
M.~Charikar, K.~Chen, and M.~Farach-Colton, ``Finding frequent items in data
  streams,'' {\em Theoretical Computer Science}, vol.~312, no.~1, pp.~3--15,
  2004.

\bibitem{indyk2010efficiently}
P.~Indyk, H.~Q. Ngo, and A.~Rudra, ``Efficiently decodable non-adaptive group
  testing,'' in {\em Proceedings of the Twenty-First Annual ACM-SIAM Symposium
  on Discrete Algorithms}, pp.~1126--1142, Society for Industrial and Applied
  Mathematics, 2010.

\bibitem{cormode2006combinatorial}
G.~Cormode and S.~Muthukrishnan, ``Combinatorial algorithms for compressed
  sensing,'' in {\em Structural Information and Communication Complexity},
  pp.~280--294, Springer, 2006.

\bibitem{gilbert2007one}
A.~C. Gilbert, M.~J. Strauss, J.~A. Tropp, and R.~Vershynin, ``One sketch for
  all: fast algorithms for compressed sensing,'' in {\em Proceedings of the
  thirty-ninth annual ACM symposium on Theory of computing}, pp.~237--246, ACM,
  2007.

\bibitem{richardson2001capacity}
T.~J. Richardson and R.~L. Urbanke, ``The capacity of low-density parity-check
  codes under message-passing decoding,'' {\em IEEE Trans. on Information
  Theory}, vol.~47, no.~2, pp.~599--618, 2001.

\bibitem{luby2001efficient}
M.~G. Luby, M.~Mitzenmacher, M.~A. Shokrollahi, and D.~A. Spielman, ``Efficient
  erasure correcting codes,'' {\em IEEE Trans. on Information Theory}, vol.~47,
  no.~2, pp.~569--584, 2001.

\bibitem{pedarsani2017phasecode}
R.~Pedarsani, D.~Yin, K.~Lee, and K.~Ramchandran, ``Phasecode: Fast and
  efficient compressive phase retrieval based on sparse-graph codes,'' {\em
  IEEE Transactions on Information Theory}, 2017.

\bibitem{justesen1972class}
J.~Justesen, ``Class of constructive asymptotically good algebraic codes,''
  {\em IEEE Transactions on Information Theory}, vol.~18, no.~5, pp.~652--656,
  1972.

\bibitem{spielman1995linear}
D.~A. Spielman, ``Linear-time encodable and decodable error-correcting codes,''
  in {\em Proceedings of the twenty-seventh annual ACM symposium on Theory of
  computing}, pp.~388--397, ACM, 1995.

\bibitem{massey1963threshold}
J.~L. Massey, ``Threshold decoding,'' tech. rep., DTIC Document, 1963.

\bibitem{yin2015fast}
D.~Yin, K.~Lee, R.~Pedarsani, and K.~Ramchandran, ``Fast and robust compressive
  phase retrieval with sparse-graph codes,'' in {\em Information Theory (ISIT),
  2015 IEEE International Symposium on}, pp.~2583--2587, IEEE, 2015.

\bibitem{cheraghchi2009capacity}
M.~Cheraghchi, ``Capacity achieving codes from randomness conductors,'' in {\em
  IEEE International Symposium on Information Theory (ISIT)}, pp.~2639--2643,
  IEEE, 2009.

\bibitem{johnson1984extensions}
W.~B. Johnson and J.~Lindenstrauss, ``Extensions of lipschitz mappings into a
  hilbert space,'' {\em Contemporary mathematics}, vol.~26, no.~189-206, p.~1,
  1984.

\bibitem{baraniuk2008simple}
R.~Baraniuk, M.~Davenport, R.~DeVore, and M.~Wakin, ``A simple proof of the
  restricted isometry property for random matrices,'' {\em Constructive
  Approximation}, vol.~28, no.~3, pp.~253--263, 2008.

\bibitem{birge2001alternative}
L.~Birg{\'e}, ``An alternative point of view on lepski's method,'' {\em Lecture
  Notes-Monograph Series}, pp.~113--133, 2001.

\end{thebibliography}

\appendices

\section{Peeling Decoder in the Presence of Noise}\label{sec:noisy_recovery_perf_analysis}

Let ${E}_{\rm bin}$ be the event where our ``guess-and-check'' bin detection scheme makes a mistake. From the law of total probability, we have
\begin{align*}
	\Pf &= \Prob{\supp{\widehat{\mathbf{x}}}\neq\supp{\mathbf{x}}\big|{E}_{\rm bin}^c}\Prob{{E}_{\rm bin}^c}
	+\Prob{\supp{\widehat{\mathbf{x}}}\neq\supp{\mathbf{x}}\big|{E}_{\rm bin}}\Prob{{E}_{\rm bin}}\\
	&\leq \Prob{\supp{\widehat{\mathbf{x}}}\neq\supp{\mathbf{x}}\big|{E}_{\rm bin}^c}+\Prob{{E}_{\rm bin}}.
\end{align*}
Since it is known from Theorem \ref{thm_peeling_decoder} that $\Prob{\supp{\widehat{\mathbf{x}}}\neq\supp{\mathbf{x}}\big|{E}_{\rm bin}^c}={O}(1/K)$, then if further we have
\begin{align}\label{P_RBI_upperbound}
	\Prob{{E}_{\rm bin}} = {O}\left(\frac{1}{K}\right),
\end{align}
the overall failure probability can be upper bounded as
\begin{align*}	
	\Pf = {O}\left(\frac{1}{K}\right).
\end{align*}
Now it remains to show that \eqref{P_RBI_upperbound} holds. The main idea is to analyze the error probability of making at least an error on any bin measurement, followed by a union bound on all the $R={O}(K)$ bins. Denote the error event in any bin $j$ as $E_j$, then we have the following union bound across $R=\eta K$ measurement bins
\begin{align}
	\Prob{{E}_{\rm bin}} \leq \bigcup_{j=1}^{\eta K} \Prob{E_j},
\end{align}
where $\bar{d}$ is the average left degree of the bipartite graph. Without loss of generality, we drop the bin index such that
\begin{align}
	\Prob{{E}_{\rm bin}} \leq \eta K \Prob{E},
\end{align}
where $\Prob{E}$ is the error probability for an arbitrary bin. According to Lemma \ref{lem_random_sensing} and \ref{lem_binary_sensing}, the error probability per bin is at most $\Prob{E} = {O}(1/K^2)$, and therefore the overall probability of error is $\Prob{{E}_{\rm bin}} = {O}(1/K)$.

\section{Oracle-based Peeling Decoder using the Regular Ensemble $\mathcal{G}_{\rm reg}^N(R,d)$}\label{proof_lem_peeling_decoder}

\subsection{\bf Concentration Analysis}\label{sec:concentration_analysis}

\subsubsection{Proof of Mean Analysis on General Graphs}
From \eqref{sum_edges}, we have
\begin{align}\label{mean_analysis}
	\mathbb{E}\left[Z_{i}\right] = \sum_{e=1}^{K{d}} \mathbb{E}\left[Z_{i}^{(\textrm{e})}\right]  = K d\mathbb{E}\left[Z_{i}^{(\textrm{e})}\right].
\end{align}
From basic probability laws on conditional expectations
\begin{align*}
	\mathbb{E}\left[Z_{i}^{(\textrm{e})}\right]
	&=
	\mathbb{E}\left[Z_{i}^{(\textrm{e})}|\mathcal{T}_{i}\right]\Prob{\mathcal{T}_{i}}
	+\mathbb{E}\left[Z_{i}^{(\textrm{e})}|\mathcal{T}_{i}^c\right]\Prob{\mathcal{T}_{i}^c}.
\end{align*}
Recall from the density evolution analysis that $\mathbb{E}\left[Z_{i}^{(\textrm{e})}|\mathcal{T}_{i}\right] = p_i$, we have
\begin{align}
	\Prob{\mathcal{T}_{i}} \leq 1,\quad
	\mathbb{E}\left[Z_{\textrm{e}}|\mathcal{T}_{i}^c\right] \leq 1
\end{align}
and therefore the following holds:
\begin{align}
	{p}_{i} - \Prob{\mathcal{T}_{i}^c}
	\leq
	\mathbb{E}\left[Z_{i}^{(\textrm{e})}\right] 
	\leq
	{p}_{i} + \Prob{\mathcal{T}_{i}^c}.
\end{align}
If the probability of a general graph not behaving like a tree can be made arbitrarily small for any $\varepsilon>0$,
\begin{align}\label{prob_kot_tree}
	 \Prob{\mathcal{T}_{i}^c} < \frac{\varepsilon}{4},
\end{align}
then we can obtain the result in \eqref{mean_on_general_graph} by letting $p_i=\varepsilon/4$ in the density evolution analysis. Next, we show that \eqref{prob_kot_tree} holds for sufficiently large $K$.
\begin{lem}
For any given constant $\varepsilon>0$ and iteration $i>0$, there exists some absolute constant $K_0>0$ such that 
\begin{align}
	\Prob{\mathcal{T}_{i}^c}
	<
	c_0 \frac{\log^i K}{K}
\end{align}
for some constant $c_0>0$ as long as $K>K_0$.
\end{lem}
From this lemma, we can see that for an arbitrary $\varepsilon>0$, the result follows as long as $K>K_0$ where $K_0$ is the smallest constant that satisfies $K_0/\log^i K_0 > 4c_0/\varepsilon$ given $\varepsilon$ and $i$. In the following we give the proof of the lemma.

\begin{proof}
Let $C_j$ be the number of check nodes and $V_j$ be the number of variable nodes in the neighborhood $\mathcal{N}_{\textrm{e}}^{2j}$. In \cite{richardson2001capacity}, it has been shown that the directed neighborhood $\mathcal{N}_{\textrm{e}}^{2i}$ at depth $i$ is not a tree with probability at most ${O}(1/K)$. However, the proof therein largely rests on the regular degrees for the left and right nodes in the graph. Now, because the graph ensemble $\mathcal{G}_{\rm reg}^N(R, {d})$ follows Poisson distributions on the right, the results in \cite{richardson2001capacity} are not immediately applicable here. In this setting, the key idea is to prove that the size of the directed neighborhood $\mathcal{N}_{\textrm{e}}^{2i}$ unfolded up to depth $2i$ is bounded by ${O}(\log^i K)$ with high probability, and this neighborhood is not a tree with probability at most ${O}({\log^i K}/{K})$.

To show this, we unfold the neighborhood of an edge $e$ up to level $i$. Fix some constant $\kappa_1$, then at each level $j\leq i$ we upper bound the probability of a tree having more than ${O}(\log^j K)$ left nodes $V_j > \kappa_1 \log^j K$ and right nodes $C_j > \kappa_1 \log^j K$. Specifically, from the law of total probability, we upper bound the probability for some constant $\kappa_1>0$
\begin{align}\label{temp_eq4}
	\Prob{\mathcal{T}_{i}^c}
	&\leq 
	\Prob{V_j > \kappa_1 \log^j K}
	+ \Prob{C_j > \kappa_1 \log^j K}\\
	&~~~ + \Prob{\mathcal{T}_{i}^c | V_j < \kappa_1 \log^j K,C_j < \kappa_1 \log^j K}.
\end{align}
Denoting the first term in \eqref{temp_eq4} as $a_j = \Prob{V_j > \kappa_1 \log^j K}$, we bound $a_j$ using the total law of probability as follows
\begin{align}\label{temp_recursion}
		a_j
		&\leq
		a_{j-1} +
		\Prob{V_j > \kappa_1 \log^j K | V_{j-1} < \kappa_1  \log^{j-1} K}.
\end{align}
Since the left degree from the regular and irregular ensembles is upper bounded by constants ${d}$ and $(D+1)$ respectively, thus given $V_{j-1} < \kappa_1  \log^{j-1} K$ at depth $(j-1)$, the number of right neighbors is bounded by $C_{j-1} <\kappa_2 \log^{j-1} K$ for some $\kappa_2>0$. Therefore, the second term in \eqref{temp_recursion} can be bounded as
\begin{align}
	\Prob{V_j > \kappa_1 \log^j K | V_{j-1} < \kappa_1  \log^{j-1} K}
	&\leq
	\Prob{V_j > \kappa_1 \log^j K | C_{j-1} < \kappa_2 \log^{j-1} K}.
\end{align}
Now let the number of check nodes at exactly depth $(j-1)$ be $C_{j-1}'$ such that $C_{j-1} = C_{j-1}'+C_{j-2}$, and further let $d_\ell$ be the degree of each check node at this depth $\ell=1,\cdots,C_{j-1}'$, then the right hand side can be evaluated as
\begin{align}\label{temp_eq3}
	&\Prob{V_j > \kappa_1 \log^j K | C_{j-1} < \kappa_2 \log^{j-1} K}
	\leq
	\Prob{\sum_{\ell=1}^{C_{j-1}'} d_\ell \geq \kappa_3 \log^j K}
\end{align}
for some $\kappa_3>0$. Since each check node degree $d_\ell$ is an independent Poisson variable with rate $1/\eta$, the sum of $d_\ell$ over $\ell=1,\cdots,C_{j-1}'$ remains a Poisson variable with rate $C_{j-1}'/\eta$. Since obviously $C_{j-1}'<C_{j-1} < \kappa_1 \log^{j-1} K$ such that the sum rate is $C_{j-1}'/\eta = {O}(\log^{j-1} K)$. With this sum rate, the probability in \eqref{temp_eq3} can be upper bounded with the tail bound of a Possible variable $X$ with rate $\lambda$ as $\Prob{X\geq x}\leq (\lambda e/ x)^x$:
\begin{align*}
	\Prob{\sum_{\ell=1}^{C_{j-1}'} d_\ell \geq \kappa_3 \log^j K}
	&\leq \left(\frac{e C_{j-1}'/\eta}{\kappa_3 \log^j K}\right)^{\kappa_3\log^j K}
	= \left(\frac{e \times {O}(\log^{j-1} K)}{\kappa_3 \log^j K}\right)^{\kappa_3\log^j K}
	\leq \left(\frac{\kappa_4}{\log K}\right)^{\kappa_3\log^j K} \leq \frac{\kappa_5}{K}	
\end{align*}
for some sufficiently large constants $\kappa_4>0$ and $\kappa_5>0$. Therefore we have
\begin{align}
	\alpha_j \leq \alpha_{j-1} + \frac{\kappa_5}{K}
\end{align}
and thus the number of variable nodes exposed until the $i$-th iteration can be bounded by $\log^j K$ with high probability 
\begin{align}
	\Prob{V_j > \kappa_1 \log^j K} = {O}\left(\frac{1}{K}\right). 
\end{align}	
Similar technique can be used to show that the tail bound for the check nodes is 
\begin{align}
	\Prob{C_j > \kappa_1 \log^j K} = {O}\left(\frac{1}{K}\right).
\end{align}

Now that it has been shown that the number of nodes is well bounded by ${O}(\log^j K)$, we can proceed to bound the second term in \eqref{temp_eq4} by induction. Assuming that  the neighborhood $\mathcal{N}_{\textrm{e}}^{2j}$ at the $j$-th iteration $(j<i)$ is tree-like, we prove that $\mathcal{N}_{\textrm{e}}^{2(j+1)}$ is tree-like with high probability. First of all, we examine the neighborhood $\mathcal{N}_{\textrm{e}}^{2j+1}$. 
The probability that a certain edge from a variable node does not create a cycle in $\mathcal{N}_{\textrm{e}}^{2j+1}$ is the probability that it is connected to one of the check nodes that are not already included in the tree in $\mathcal{N}_{\textrm{e}}^{2j}$, which is lower bound by $1-C_{j}/(\eta K)$. Therefore, given that $\mathcal{N}_{\textrm{e}}^{2j}$ is tree-like, the probability that $\mathcal{N}_{\textrm{e}}^{2j+1}$ is tree-like is lower bounded by 
\begin{align}
	\left(1-\frac{C_{j}}{\eta K}\right)^{C_{j+1}-C_j} > \left(1-\frac{C_{i}}{\eta K}\right)^{C_{j+1}-C_j}.
\end{align}
Similarly, given that $\mathcal{N}_{\textrm{e}}^{2j+1}$ is tree-like, the probability that $\mathcal{N}_{\textrm{e}}^{2(j+1)}$ is tree-like is lower bounded by 
\begin{align}
	\left(1-\frac{V_{j}}{K}\right)^{V_{j+1}-V_j} > \left(1-\frac{V_{i}}{K}\right)^{V_{j+1}-V_j}.
\end{align}
Therefore, the probability that $\mathcal{N}_{\textrm{e}}^{2(j+1)}$ is tree-like is lower bounded by 
\begin{align*}
	\left(1-\frac{C_{i}}{\eta K}\right)^{C_{j}}\left(1-\frac{V_{i}}{K}\right)^{V_{j}}
	\geq 
	\left(1-\frac{C_{i}}{\eta K}\right)^{C_{i}}\left(1-\frac{V_{i}}{K}\right)^{V_{i}}
	\geq 1- \left(\frac{V_{i}^2}{K}+\frac{C_{i}^2}{\eta K}\right)
	\geq 1- {O}\left(\frac{\log^i K}{K}\right).
\end{align*}
Therefore the probability of not being tree-like is upper bounded by 
\begin{align}
	\Prob{\mathcal{T}_{i}^c}< c_0 \frac{\log^i K}{K}
\end{align}
for some absolute constant $c_0>0$.

\end{proof}

\subsubsection{\bf Proof of Concentration to Mean by Large Deviation Analysis}
Now it remains to show the concentration of $Z_{i}$ around its mean $\mathbb{E}[Z_{i}]$. According to \eqref{sum_edges}, the number of remaining edges is a sum of random variables $Z_{i}=\sum_{e=1}^{K{d}} Z_{i}^{\textrm{e}}$ while summands $Z_{i}^{\textrm{e}}$ are not independent with each other. Therefore, to show the concentration, we use a standard martingale argument and Azuma's inequality provided in \cite{richardson2001capacity} with some modifications to account for the irregular degrees of the right nodes. 

Suppose that we expose the whole set of $E=K{d}$ edges of the graph one at a time. We let
\begin{align}
	Y_\ell=\mathbb{E}\left[Z_{i}|Z_{i}^{1},\cdots,Z_{i}^{\ell}\right],\quad \ell=1,\cdots, K{d}.
\end{align}	 
By definition, $Y_0,Y_1,\cdots,Y_{K{d}}$ are a Doob's martingale process, where $Y_0=\mathbb{E}[Z_{i}]$ and $Y_{K{d}}=Z_{i}$. To use Azuma's inequality, it is required that $|Y_{\ell+1}-Y_\ell|\leq \Delta_\ell$ for some $\Delta_\ell>0$.  If the variable node has a regular degree ${d}$ and the check node has a regular degree $d_C$, then \cite{richardson2001capacity} shows that $\Delta_\ell = 8({d} d_C)^i$ with $i$ being the number of peeling iterations. However, the check node degree is not regular with degree $d_C$ and therefore requires further analysis.

\subsubsection*{Proof of Finite Difference $\Delta_\ell$}
To prove that the difference $\Delta_\ell$ is finite for check node degrees with Poisson distributions, we first prove that the degree of all the check nodes can be upper bounded by $d_C\leq {O}(K^{\frac{2}{4i+1}})$ with probability\footnote{Let $X$ be a Poisson variable with parameter $\lambda$, then the following holds
\begin{align*}
	\Prob{X>c K^{\frac{2}{4i+1}} }
	&\leq
	\left(\frac{e\lambda}{c K^{\frac{2}{4i+1}}} \right)^{c K^{\frac{2}{4i+1}}}
	\leq c_1 \exp\left(-c_2 K^{\frac{2}{4i+1}} \right)
\end{align*}
for some $c_1$ and $c_2$.} at least $$c_1K\exp\left(-c_2 K^{\frac{2}{4i+1}} \right)$$ for some constants $c_1$ and $c_2$. Let $\mathcal{B}$ be the event that at least one check node has more than ${O}\left(K^{\frac{2}{4i +1}}\right)$ edges, then for some $c_3>0$ we have
\begin{align}
	\Prob{\mathcal{B}} < c_3K\exp\left(-c_2 K^{\frac{2}{4i+1}} \right).
\end{align}
by applying a union bound on all the $R=\eta K$ check nodes of the graphs from $\mathcal{G}_{\rm reg}^N(R, {d})$. As a result, under the complement event $\mathcal{B}^c$, we have
\begin{align}
	\Delta_\ell^2 = {O}\left(K^{\frac{4i}{4i+1}}\right).
\end{align}

\subsubsection*{Large Deviation by Azuma's Inequality}
For any given $\varepsilon>0$, the tail probability of the event $Z_{i}>K{d}\varepsilon$ can be computed as
\begin{align*}
	\Prob{\left|Z_{i}-\mathbb{E}[Z_{i}]\right| > \frac{K{d} \varepsilon}{2}}
	&\leq
	\Prob{\left|Z_{i}-\mathbb{E}[Z_{i}]\right| > \frac{K{d} \varepsilon}{2}\Big |\mathcal{B}^c} + \Prob{\mathcal{B}}\\
	&\leq 2\exp\left(-\frac{K^2 \bar{d}^2 \varepsilon^2/4}{2\sum_{\ell=1}^{K{d}} \Delta_\ell^2 }\right) + c_3K\exp\left(-c_2 K^{\frac{2}{4i+1}} \right)\\
	&\leq 2\exp\left(-c_4 \varepsilon^2 K^{\frac{1}{4i+1}}\right),
\end{align*}
where $c_4$ is some constant depending on ${d}$, $\eta$ and all the other constants $c_1,c_2,c_3$. This concludes our proof for \eqref{concentration_DE}.


\subsection{\bf Proof of Graph Expansion Properties in Lemma \ref{lem_graph_expander}}\label{sec:expander_graph}
Let $\mathcal{S}_v$ denote the event that a variable node subset of size $v$ with at most $\bar{d}|\mathcal{S}_v|/2$ neighbors, whose probability can be obtained readily for any size $|\mathcal{S}_v|=v$ as 
\begin{align}
	\Prob{\mathcal{S}_v} \leq {K \choose v} { \eta K \choose \bar{d} v/2} \left(\frac{v \bar{d}}{2 \eta K}\right)^{\bar{d}v},
\end{align}
where we have used the fact that the number of check nodes is $\eta K$. Using the inequality ${a \choose b}\leq (a e/b)^b$, we have
\begin{align}
	\Prob{\mathcal{S}_v} \leq \left(\frac{v}{K}\right)^{(\bar{d}/2-1) v} c^v\leq \left(\frac{v c^2}{K}\right)^{v/2},
\end{align}
where $c=e(\bar{d}/2\eta)^{\bar{d}/2}$ is some constant. Then a union bound is applied over all possible values $v$ up to the remaining variable nodes $\varepsilon_\star K$. Choosing $\varepsilon_\star<1/(2c^2)$ yields
\begin{align}
	\sum_{v=2}^{\varepsilon_\star K}\Prob{\mathcal{S}_v} 
	\leq
	\sum_{v=2}^{\varepsilon_\star K}\left(\frac{v c^2}{K}\right)^{v/2}
	= O\left(\frac{1}{K}\right).
\end{align}
Therefore, asymptotically in $K$, the random graphs from both the regular and irregular ensembles are good expanders on small sets of variable nodes.

\section{Oracle-based Peeling Decoder using the Irregular Ensemble $\mathcal{G}_{\rm irreg}^N(R,D)$}\label{sec:proof_thm_peeling_irregular}

Based on the peeling decoder analysis in Section \ref{sec:peeling_decoder_analysis}, it can be easily shown that the concentration analysis and graph expansion property carry over to the irregular graph ensemble. Hence, we focus on the density evolution for the oracle-based peeling decoder over irregular ensemble. 

To study the probability $p_i$ of an edge being present in the pruned graph from the irregular ensemble after $i$ iterations, we need to first understand the right edge degree distributions $\rho_{j}$ of the graph. Using the degree sequence $\lambda_{j}$ of the irregular graph ensemble $\mathcal{G}_{\rm irreg}^N(R,D)$ in Definition \ref{lem_graph_ensemble_irregular}, it can be shown that the right degree sequence $\rho_{j}$ follows a Poisson distribution similar to \eqref{rho_d}
\begin{align*}
	\rho_{j} \approx \frac{\left(\bar{d}/(1+\epsilon) \right)^{{j}-1}e^{-\bar{d}/(1+\epsilon)}}{(j-1)!},
\end{align*}
where we have used $R=(1+\epsilon)K$ and $\bar{d}$ is the average degree of a left node in the irregular graph ensemble
\begin{align}\label{avg_degree}
		\bar{d} = \frac{1}{\sum_{{j}=2}^{D+1} {\lambda_{j}}/{j}} 
		= H(D)\left(1+\frac{1}{D}\right).
	\end{align}
Using the left and right degree sequence $(\lambda_{j},\rho_{j})$, we can readily obtain the left and right degree generating polynomials $\lambda(x)=\sum_{{d}=1}^\infty \lambda_{j} x^{{j}-1}$ and $\rho(x)=\sum_{j=1}^\infty \rho_{j} x^{{j}-1}$ 
\begin{align*}
	\lambda(x)&= \frac{1}{H(D)}\sum_{{j}=2}^{D+1} \frac{1}{({j}-1)}x^{{j}-1},\quad \rho(x)=e^{-\frac{\bar{d}}{1+\epsilon}(1-x)}.
\end{align*}		

As a result, the associated density evolution equation can be written using the degree generating polynomials similar to that in \eqref{density_evolution} 
		\begin{align}\label{density_evolution_irregular}
			p_i 
			&= f(p_{i-1}) = \lambda(1-\rho(1-{p}_{i-1})),\quad i = 1,2,3,\cdots.			
		\end{align}
The density evolution analysis suggests that if the fraction $p_i$ in \eqref{density_evolution_irregular} can be made arbitrarily small if the density evolution recursion is contracting 
\begin{align}\label{contraction_mapping}
	\lambda(1-\rho(1-x)) < x,\quad \forall~x\in[0,1]. 
\end{align}
Examples of this density evolution using different values of $D$ and $\epsilon$ are given in \figref{fig:DE_irregular}.
\begin{figure}[p]
\centering
\begin{subfigure}{0.4\textwidth}
\includegraphics[width=1\linewidth]{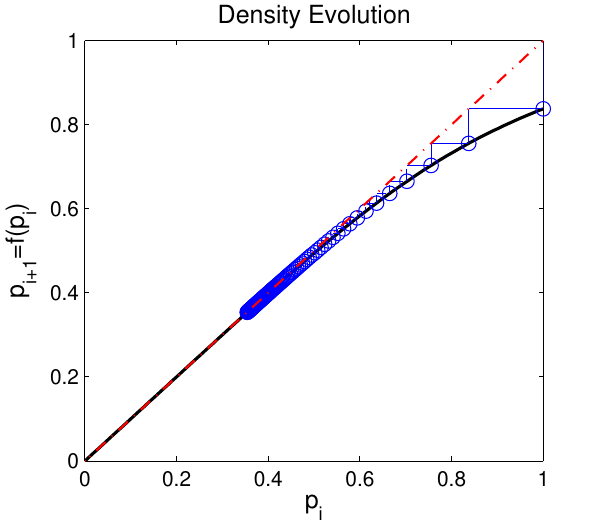}
\vspace{-0.6cm}
\caption{$\epsilon=0.1$ and $D=10$}
\end{subfigure}
\begin{subfigure}{0.4\textwidth}
\includegraphics[width=1\linewidth]{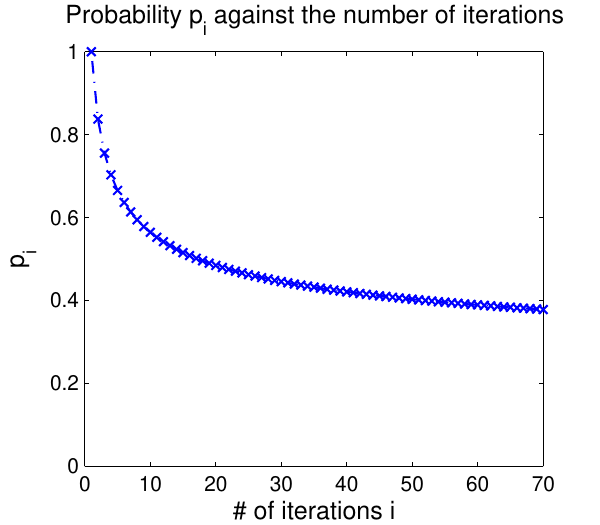}
\vspace{-0.6cm}
\caption{$\epsilon=0.1$ and $D=10$}
\end{subfigure}
\begin{subfigure}{0.4\textwidth}
\includegraphics[width=1\linewidth]{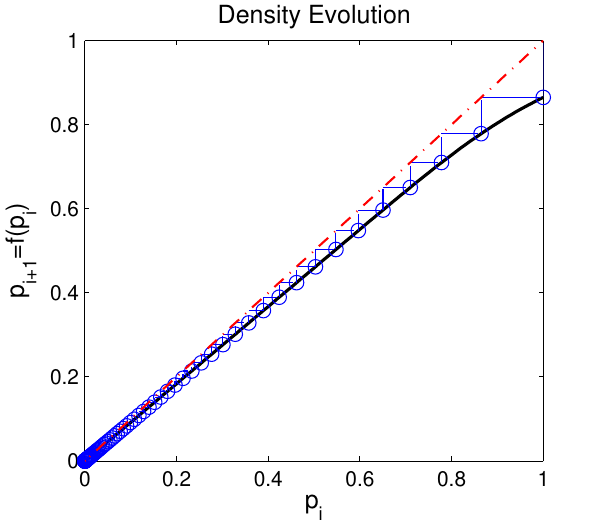}
\vspace{-0.6cm}
\caption{$\epsilon=0.1$ and $D=100$}
\end{subfigure}
\begin{subfigure}{0.4\textwidth}
\includegraphics[width=1\linewidth]{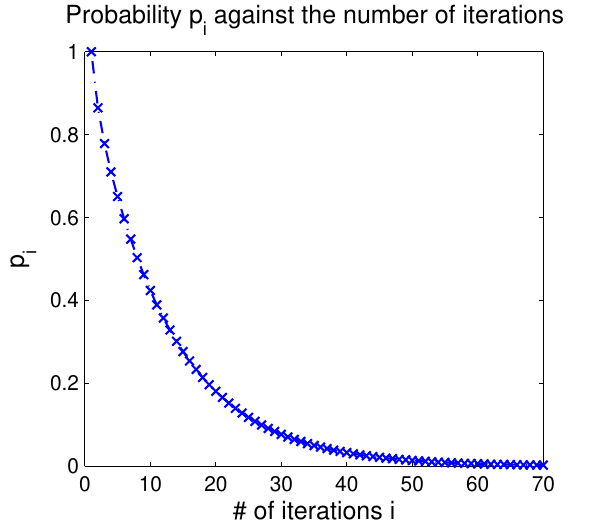}
\vspace{-0.6cm}
\caption{$\epsilon=0.1$ and $D=100$}
\end{subfigure}
\begin{subfigure}{0.4\textwidth}
\includegraphics[width=1\linewidth]{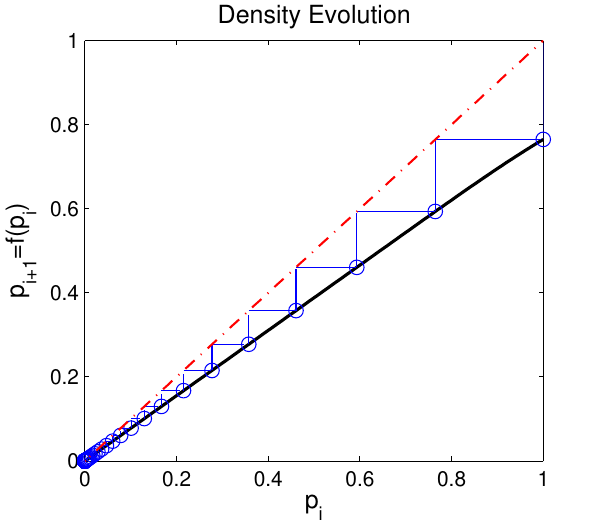}
\vspace{-0.6cm}
\caption{$\epsilon=0.3$ and $D=100$}
\end{subfigure}
\begin{subfigure}{0.4\textwidth}
\includegraphics[width=1\linewidth]{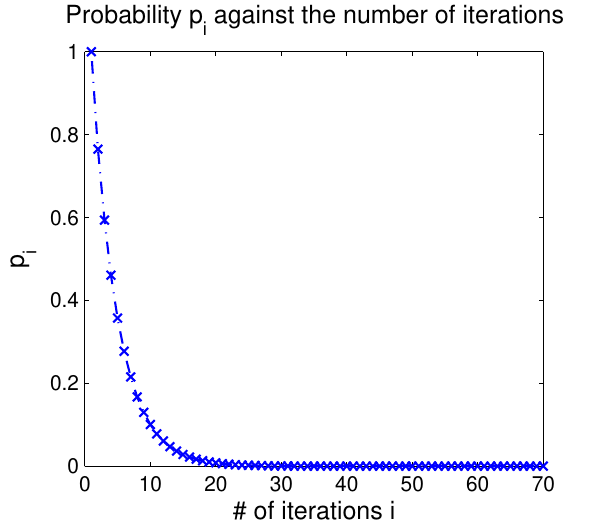}
\vspace{-0.6cm}
\caption{$\epsilon=0.3$ and $D=100$}
\end{subfigure}
\caption{The density evolution $f(p_i)$ and the probability $p_i$ at each iteration $i$, where we have shown cases with $\epsilon=0.1$ and $D=10$ and $D=100$, as well as the case with $\epsilon=0.3$ and $D=100$. In the density evolution figures (a)-(c)-(e), the red line is the line $p_{i+1}=p_i$ while the black line is the density evolution $f(p_i)$ against $p_i$. The blue circles that ``zig-zag'' between the red line and the black line are the specific $p_i$'s at each peeling iteration. It can be seen from (a) and (c) that when $\epsilon$ is small (i.e. $\epsilon =0.1$), the density evolution requires a large maximum left degree $D$ to reach density $0$. On the other hand, when $\epsilon$ is large (i.e. $\epsilon=0.3$), the density $p_i$ reaches $0$ very quickly in (e) with the same maximum left degree $D=100$. The values of $p_i$ marked by the blue circles in (a)-(c)-(e) are further plotted against the peeling iterations $i$ in (b)-(d)-(f), where in the case with $\epsilon=0.3$ and $D=100$ the density $p_i$ approaches $0$ after less than $20$ iterations.}\label{fig:DE_irregular}
\end{figure}
Clearly, when $\epsilon=0.1$, the density evolution equation becomes a contraction mapping when $D=100$ but not when $D=10$. Now we study how to choose $D$ for any given $\epsilon>0$. Since $\lambda(x)$ is a non-decreasing function, we can apply $x = \lambda^{-1}({p}_{i-1})$ on both sides of \eqref{contraction_mapping}, then the contraction condition is equivalent to
	\begin{align}
		\rho(1-\lambda(x)) > 1-x,\quad \forall x\in[0,1].
	\end{align}
	By substituting the right generating polynomial $\rho(x)$ into the above recursion, we have
	\begin{align}\label{temp1}
		\rho(1-\lambda(x)) = e^{-\frac{\bar{d}}{(1+\epsilon)}\lambda(x)}.
	\end{align}
	To simplify our expressions, we further bound $\lambda(x)$ for the irregular graph ensemble $\mathcal{G}_{\rm irreg}^N(R,D)$ as $\lambda(x) > -\frac{1}{H(D)}\log(1-x)$. This is because $\lambda(x)$ is a $D$-term approximation of the Taylor expansion for $\log(1-x)$, scaled by the normalization constant $H(D)$. By substituting this bound into \eqref{temp1}, we have
		\begin{align*}
			\rho(1-\lambda(x)) > e^{\frac{\bar{d}}{(1+\epsilon)}\frac{1}{H(D)}\log(1-x)} = (1-x)^{\frac{\bar{d}}{(1+\epsilon) H(D)}}.
		\end{align*}
		It can be seen that the right hand side is no less than $1-x$ as long as $H(D)\geq \frac{\bar{d}}{(1+\epsilon)}$. Substituting the average degree $\bar{d}$ from \eqref{avg_degree} back to this condition, then for any $\epsilon>0$, we can choose $D>1/\epsilon$ as in Definition \ref{lem_graph_ensemble_irregular} to render the recursion a contracting mapping. 
		
Finally, together with the concentration analysis and graph expansion properties of the irregular graphs, the oracle-based peeling decoder successfully decodes all the edges in the graph with probability at least $1-{O}(1/K)$.

\section{Proof of Lemma \ref{lem_random_sensing}}\label{proof_lem_random_sensing}

\begin{defi}\label{def_error_category}
Denoting by $\Prob{E}$ the error probability of the robust bin detection algorithm for an arbitrary bin, we can bound $\Prob{E}$ as
\begin{align}
	\Prob{E} 
	&\leq \Prob{\mathcal{H}_{\textrm{S}}(k,x[k])}+ \sum_{\mathcal{F}\in\{\mathcal{H}_{\textrm{Z}},\mathcal{H}_{\textrm{M}}\}}\Prob{\mathcal{H}_{\textrm{S}}(k,x[k])\leftarrow\mathcal{F}}
\end{align}
where $\mathcal{F}$ is either a zero-ton $\mathcal{H}_{\textrm{Z}}$  or a multi-ton $\mathcal{H}_{\textrm{M}}$ and
\begin{enumerate}
	\item $\Prob{\mathcal{H}_{\textrm{S}}(k,x[k])}$ is called the {\bf missed verification} rate in which 
	the single-ton verification fails even when the underlying bin is a single-ton $\bdsb{y}\sim\mathcal{H}_{\textrm{S}}(k,x[k])$ for some $k\in[N]$ and $x[k]$.  
	\item $\Prob{\mathcal{H}_{\textrm{S}}(k,x[k])\leftarrow\mathcal{F}}$ is called the {\bf false verification} rate in which the single-ton verification is passed for some single-ton $\mathcal{H}_{\textrm{S}}(\widehat{k},\widehat{x}[\widehat{k}])$ with an index-value pair $(\widehat{k},\widehat{x}[\widehat{k}])$ when the ground truth is $\mathcal{F}\in\{\mathcal{H}_{\textrm{Z}},\mathcal{H}_{\textrm{M}}\}$.
\end{enumerate}
\end{defi}

Now we compute the probability mentioned above in the following propositions. 

\begin{prop}[False Verification Rate]\label{prop_false_detection}
For some constant $\gamma\in(0,1)$, the false verification rate can be upper bounded as
\begin{align*}
		&\Prob{\mathcal{H}_{\textrm{S}}(\widehat{k},\widehat{x}[\widehat{k}])\leftarrow\mathcal{H}_{\textrm{Z}}}< e^{-\frac{P}{4}(1-\gamma)^2\left(\frac{\SNRmin}{1+\SNRmin}\right)^2}\\
		&\Prob{\mathcal{H}_{\textrm{S}}(\widehat{k},\widehat{x}[\widehat{k}])\leftarrow\mathcal{H}_{\textrm{M}}}<
		e^{-\frac{P}{4}(3-\gamma)^2 \left(\frac{\SNRmin}{1+3\SNRmin}\right)^2}.
\end{align*}
\end{prop}
\begin{proof}
	See Appendix \ref{proof_prop_false_detection}.
\end{proof}


\begin{prop}[Missed Verification Rate]\label{prop_miss_detection}
For some constant $\gamma\in(0,1)$, the missed verification rate can be upper bounded as
\begin{align*}
		&\Prob{\mathcal{H}_{\textrm{S}}(k,x[k])}< 
		e^{-\frac{P}{4}\left(\sqrt{1+2\gamma\SNRmin} - 1\right)^2} + 2(N-1)\left(e^{-\frac{P}{4}\SNRmin} + e^{-\frac{P}{16}}\right).
\end{align*}
\end{prop}
\begin{proof}
	See Appendix \ref{proof_prop_miss_detection}.
\end{proof}
Without loss of generality, let us choose $\gamma=1/2$ and thus all the error probabilities vanish at a rate $O(1/N^c)$ as long as $P \geq \alpha \log N$, where $\alpha$ satisfies:
\begin{align}\label{req_alpha}
	\begin{cases}
	\alpha \geq 16c\left(1+\frac{1}{\SNRmin}\right)^2\\
	\alpha \geq \frac{16c}{9} \left(1+\frac{3}{\SNRmin}\right)^2\\
	\alpha \geq \frac{4c}{\left(\sqrt{1+\SNRmin} - 1\right)^2}\\
	\alpha \geq \frac{16(c+1)}{\SNRmin}\\
	\alpha \geq 16(c+1)
	\end{cases}.
\end{align}
Therefore, it is sufficient to have $\alpha \geq 16(c+1)(1+\SNRmin^{-1})$ at high SNR regime (i.e. $\SNRmin \gg 1$) and $\alpha \geq 16(c+1)/\SNRmin^2$ at low SNR regime (i.e. $\SNRmin \ll 1$). Letting $c=2\delta$ such that $O(1/N^c) = O(1/K^2)$, we have the claimed result.

\subsection{Proof of False Verification Rates in Proposition \ref{prop_false_detection}}\label{proof_prop_false_detection}
The false verification events occur if the zero-ton or single-ton verifications fail when the ground truth is either a zero-ton or a multi-ton 
\begin{align}
	\bdsb{y} = \mathbf{S}\mathbf{z} +\mathbf{w}
\end{align}
with $\mathbf{z}$ being a zero-ton $\mathbf{z}=\mathbf{0}$ or a multi-ton $\left|\supp{\mathbf{z}}\right| > 1$.

\subsubsection{Detecting a Zero-ton as a Single-ton}
This event happens when a zero-ton $\bdsb{y}=\mathbf{w}$ passes the single-ton verification:
\begin{align}
	\Prob{\mathcal{H}_{\textrm{S}}(\widehat{k},\widehat{x}[\widehat{k}])\leftarrow\mathcal{H}_{\textrm{Z}}}
	=
	\Prob{\frac{1}{P}\left\| \bdsb{y} - \widehat{x}[k]\mathbf{s}_{\widehat{k}}\right\|^2 \leq (1+\gamma \SNRmin) \sigma^2}
\end{align}
Substituting $\bdsb{y}=\mathbf{w}\sim\mathcal{N}(\mathbf{0},\sigma^2 \mathbf{I})$, clearly we have
\begin{align}\label{eq:special_case}
	\bdsb{y}-\widehat{x}[\widehat{k}]\mathbf{s}_{\widehat{k}}
	\sim
	\mathcal{N}(\mathbf{0}, (\rho^2 + \sigma^2) \mathbf{I}).
\end{align}
Therefore, the probability can be bounded by a chi-squared tail:	
\begin{align}
	\Prob{\mathcal{H}_{\textrm{S}}(\widehat{k},\widehat{x}[\widehat{k}])\leftarrow\mathcal{H}_{\textrm{Z}}}
	=\Prob{\frac{1}{P} \chi_P^2 \leq  (1+\gamma\SNRmin)\frac{\sigma^2}{\rho^2 + \sigma^2}}
	\leq 
	e^{-\frac{(1-\gamma)^2}{4}\left(\frac{\SNRmin}{1+\SNRmin}\right)^2\times P}.
\end{align}

\subsubsection{Detecting a Multi-ton as a Single-ton}
By definition, the error probability can be evaluated under the multi-ton model 
\begin{align}
	\bdsb{y}  
	= \mathbf{S}\mathbf{z} + \mathbf{w}
\end{align}
when it passes the single-ton verification step for some index-value pair $(\widehat{k},\widehat{x}[\widehat{k}])$
\begin{align*}
	&\Prob{\mathcal{H}_{\textrm{S}}(\widehat{k},\widehat{x}[\widehat{k}])\leftarrow\mathcal{H}_{\textrm{M}}}
	=\Prob{\frac{1}{P}\left\|\bdsb{y}-\widehat{x}[\widehat{k}]\mathbf{s}_{\widehat{k}}\right\|^2 \leq  (1+\gamma \SNRmin)\sigma^2}
\end{align*}
for some $\widehat{k}$ and $\widehat{x}[\widehat{k}]$. Clearly, according to the bin detection matrix given in Definition \ref{def_random_sensing}, we have
\begin{align}\label{eq:general_case}
	\bdsb{y}-\widehat{x}[\widehat{k}]\mathbf{s}_{\widehat{k}}
	\sim
	\mathcal{N}(\mathbf{0}_{P\times 1}, \sigma_u^2 \mathbf{I}_{P\times P}),\quad \sigma_u^2= \left\|\mathbf{z} -\widehat{x}[\widehat{k}]\mathbf{e}_{\widehat{k}}\right\|^2 + \sigma^2.
\end{align}
Therefore, the probability can be bounded by a chi-squared tail:	
\begin{align}
	\Prob{\mathcal{H}_{\textrm{S}}(\widehat{k},\widehat{x}[\widehat{k}])\leftarrow\mathcal{H}_{\textrm{M}}}
	=\Prob{\frac{1}{P} \chi_P^2 \leq  (1+\gamma\SNRmin)\frac{\sigma^2}{\sigma_u^2}}.
\end{align}
As long as $\gamma \SNRmin < \sigma_u^2/\sigma^2$, this tail can be obtained from Lemma \ref{general_tail} as:
\begin{align}
	\Prob{\mathcal{H}_{\textrm{S}}(\widehat{k},\widehat{x}[\widehat{k}])\leftarrow\mathcal{H}_{\textrm{M}}}
	\leq
	\exp\left(-\frac{P}{4} \left(1-(1+\gamma\SNRmin)\frac{\sigma^2}{\sigma_u^2}\right)^2\right).
\end{align}
We further bound this quantity with the worst case where the underlying multi-ton consists of two coefficients. Thus, we have $\left\|\mathbf{z} -\widehat{x}[\widehat{k}]\mathbf{e}_{\widehat{k}}\right\|^2 = 3\rho^2$ and $\sigma_u^2 = 3\rho^2+\sigma^2$. As a result, we have
\begin{align}
	\Prob{\mathcal{H}_{\textrm{S}}(\widehat{k},\widehat{x}[\widehat{k}])\leftarrow\mathcal{H}_{\textrm{M}}}
	\leq
	e^{-\frac{(3-\gamma)^2}{4} \left(\frac{\SNRmin}{1+3\SNRmin}\right)^2\times P}.
\end{align}


\subsection{Proof of Missed Verification Rates in Proposition \ref{prop_miss_detection}}\label{proof_prop_miss_detection}

The missed verification events occur if the zero-ton or single-ton verifications pass when the ground truth is a single-ton $\mathcal{H}_{\textrm{S}}(k,x[k])$ for some $k\in[N]$:
\begin{align}
	\bdsb{y} = \mathbf{S}\mathbf{z} +\mathbf{w} = \mathbf{s}_{k}x[k] + \mathbf{w}.
\end{align}

This event occurs when the ground truth is a single-ton $\mathcal{H}_{\textrm{S}}(k,x[k])$ with an index-value pair $(k,x[k])$, but the single-ton verification fails for some index-value pair $(\widehat{k},\widehat{x}[\widehat{k}])$ obtained from the single-ton search:
\begin{align*}
	\Prob{\mathcal{H}_{\textrm{S}}(k,x[k])}
	&=\Prob{\frac{1}{P}\left\|\bdsb{y}-\widehat{x}[\widehat{k}]\mathbf{s}_{\widehat{k}}\right\|^2\geq (1+\gamma\SNRmin)\sigma^2}.
\end{align*}
Since the single-ton search may or may not return the correct index-value pair, this probability is obtained by the total law of probability as
\begin{align*}
	&\Prob{\frac{1}{P}\left\|\bdsb{y}-\widehat{x}[\widehat{k}]\mathbf{s}_{\widehat{k}}\right\|^2\geq (1+\gamma\SNRmin)\sigma^2}\\
	&=\Prob{\frac{1}{P}\left\|\bdsb{y}-\widehat{x}[\widehat{k}]\mathbf{s}_{\widehat{k}}\right\|^2\geq (1+\gamma\SNRmin)\sigma^2\Big|\widehat{k} \neq k \textrm{~or~} \widehat{x}[\widehat{k}]\neq x[k]}
	\times\Prob{\widehat{k} \neq k \textrm{~or~} \widehat{x}[\widehat{k}]\neq x[k] }\\
	&~~~+\Prob{\frac{1}{P}\left\|\bdsb{y}-\widehat{x}[\widehat{k}]\mathbf{s}_{\widehat{k}}\right\|^2\geq (1+\gamma\SNRmin)\sigma^2\Big|\widehat{k} = k\textrm{~and~}\widehat{x}[\widehat{k}] = x[k]}
	\times\Prob{\widehat{k} = k\textrm{~and~}\widehat{x}[\widehat{k}] = x[k] }\\
	&\leq \Prob{\widehat{k} \neq k \textrm{~or~} \widehat{x}[\widehat{k}]\neq x[k] }
	+ \Prob{\frac{1}{P}\left\|\bdsb{y}-\widehat{x}[\widehat{k}]\mathbf{s}_{\widehat{k}}\right\|^2\geq (1+\gamma\SNRmin)\sigma^2\Big|\widehat{k} = k\textrm{~and~}\widehat{x}[\widehat{k}] = x[k]}.
\end{align*}
Note that the second term is the probability of some noise samples $\mathbf{w}$ exceeding the single-ton verification threshold $(1+\gamma\SNRmin)\sigma^2$, which can be easily bounded by a chi-squared tail:
\begin{align}\label{eq:term1}
	\Prob{\frac{1}{P}\left\|\bdsb{y}-\widehat{x}[\widehat{k}]\mathbf{s}_{\widehat{k}}\right\|^2\geq (1+\gamma\SNRmin)\sigma^2\Big|\widehat{k} = k\textrm{~and~}\widehat{x}[\widehat{k}] = x[k]}
	\leq
	e^{-\frac{P}{4}\left(\sqrt{1+2\gamma\SNRmin}-1\right)^2}
\end{align}
Now we focus on obtaining a tail bound for the single-ton search error $\Prob{\widehat{k} \neq k \textrm{~or~} \widehat{x}[\widehat{k}]\neq x[k]}$. Since in the randomized design, we exploit a maximum likelihood estimator, the error probability can be obtained as:
\begin{align}\label{temp_eq6}
	\Prob{\widehat{k} \neq k \textrm{~or~} \widehat{x}[k]\neq x[k]}
	&\leq
	(N-1)\Prob{\left\|\bdsb{y} - \widehat{x}[k]\mathbf{s}_{\widehat{k}} \right\|^2<\left\|\bdsb{y} - x[k]\mathbf{s}_{k} \right\|^2},
\end{align}
where a union bound over all the $N-1$ codewords. Next, we bound the pair-wise error probability: 
\begin{align*}
	\Prob{\left\|\bdsb{y} - \widehat{x}[k]\mathbf{s}_{\widehat{k}} \right\|^2<\left\|\bdsb{y} - x[k]\mathbf{s}_{k} \right\|^2}
	&=
	\Prob{(x[k]\mathbf{s}_{k}^T-\widehat{x}[\widehat{k}]\mathbf{s}_{\widehat{k}}^T)\mathbf{w} < - \frac{\left\|x[k]\mathbf{s}_{k} - \widehat{x}[\widehat{k}]\mathbf{s}_{\widehat{k}}\right\|^2}{2}}\\
	&=\Prob{\mathcal{N}(0,1) > \frac{\left\|x[k]\mathbf{s}_{k} - \widehat{x}[\widehat{k}]\mathbf{s}_{\widehat{k}}\right\|}{2\sigma}}.
\end{align*}
Since $\mathbf{s}_k$ and $\mathbf{s}_{\widehat{k}}$ are also random, we calculate the above probability as follows:
\begin{align*}
	\Prob{\mathcal{N}(0,1) > \frac{\left\|x[k]\mathbf{s}_{k} - \widehat{x}[\widehat{k}]\mathbf{s}_{\widehat{k}}\right\|}{2\sigma}}
	&\leq
	\Prob{\mathcal{N}(0,1) > \frac{\left\|x[k]\mathbf{s}_{k} - \widehat{x}[\widehat{k}]\mathbf{s}_{\widehat{k}}\right\|}{2\sigma} \Big| \left\|x[k]\mathbf{s}_{k} - \widehat{x}[\widehat{k}]\mathbf{s}_{\widehat{k}}\right\|^2 \geq P\rho^2} \\
	&~~~~ + \Prob{\left\|x[k]\mathbf{s}_{k} - \widehat{x}[\widehat{k}]\mathbf{s}_{\widehat{k}}\right\|^2 < P\rho^2}\\
	&\leq
	2e^{-\frac{P\rho^2}{4\sigma^2}} + e^{-\frac{P}{16}} = 2e^{-\frac{\SNRmin}{4}\times P} + + e^{-\frac{1}{16}\times P},
\end{align*}
where we have used the fact that $x[k]\mathbf{s}_{k} - \widehat{x}[\widehat{k}]\mathbf{s}_{\widehat{k}}\sim\mathcal{N}(\mathbf{0},2\rho^2\mathbf{I})$. Together with \eqref{eq:term1}, the result follows.

\section{Proof of Lemma \ref{lem_binary_sensing}}\label{proof_lem_binary_sensing}
The analysis of the noisy design in Definition \ref{def_binary_sensing} is structurally similar to that of Lemma \ref{lem_random_sensing}, except that the bounding techniques are slightly different. In the following, we provide the false verification and missed verification rate for this design.

\begin{prop}[False Verification Rate]\label{prop_false_detection_binary}
For some constant $\gamma\in(0,1)$, the false verification rate can be upper bounded as
\begin{align*}
		&\Prob{\mathcal{H}_{\textrm{S}}(\widehat{k},\widehat{x}[\widehat{k}])\leftarrow\mathcal{H}_{\textrm{Z}}}< e^{-\frac{P}{4}\frac{(1-\gamma)^2\SNRmin^2}{1+2\SNRmin}}\\
		&\Prob{\mathcal{H}_{\textrm{S}}(\widehat{k},\widehat{x}[\widehat{k}])\leftarrow\mathcal{H}_{\textrm{M}}}<e^{-\frac{P}{4}\frac{(\frac{1}{2}-\gamma)^2\SNRmin^2}{1+\SNRmin}}+2e^{-c P}
\end{align*}
with some constant $c$.
\end{prop}
\begin{proof}
	See Appendix \ref{proof_prop_false_detection_binary}.
\end{proof}


\begin{prop}[Missed Verification Rate]\label{prop_miss_detection_binary}
For some constant $\gamma\in(0,1)$, the missed verification rate can be upper bounded as
\begin{align*}
		&\Prob{\mathcal{H}_{\textrm{S}}(k,x[k])}< 
		e^{-\frac{P}{4}\left(\sqrt{1+2\gamma\SNRmin}-1\right)^2}+e^{-\zeta P}+ 2e^{-2\SNRmin P}
\end{align*}
for some constant $\zeta>0$ associated with the error exponent of the channel code $\mathbf{C}$ used in the single-ton search.
\end{prop}
\begin{proof}
	See Appendix \ref{proof_prop_miss_detection_binary}.
\end{proof}
Without loss of generality, let us choose $\gamma=1/4$ and thus all the error probabilities vanish at a rate $O(1/N^q)$ as long as $P \geq \alpha \log N$, where $\alpha$ satisfies:
\begin{align}\label{req_alpha}
	\begin{cases}
	\alpha \geq  \frac{64q}{9} \times \frac{1+2\SNRmin}{\SNRmin^2}\\
	\alpha \geq  \max\left\{64q\times \frac{1+\SNRmin}{\SNRmin^2}, \frac{q}{c} \right\}\\
	\alpha \geq \max\left\{\frac{4q}{\left(\sqrt{1+\SNRmin/2} - 1\right)^2}, \frac{q}{\zeta}, \frac{q}{2\SNRmin}\right\}
	\end{cases}.
\end{align}
Therefore, we have some sufficiently large constant $\alpha$ that satisfies all the above requirements. Since $K=O(N^\delta)$ for some $\delta\in(0,1)$, we have $P=(\alpha/(1-\delta))\log (N/K)$. Finally, letting $q=2\delta$ such that $O(1/N^q) = O(1/K^2)$, we have the claimed result. It can be seen that as $\SNRmin\rightarrow\infty$, the bottleneck in determining the error probability is the error exponent of the channel code $\zeta>0$, which approaches zero when the code rate approaches the channel capacity.

\subsection{Proof of False Verification Rates in Proposition \ref{prop_false_detection_binary}}\label{proof_prop_false_detection_binary}
The false verification events occur if the zero-ton or single-ton verifications fail when the ground truth is either a zero-ton or a multi-ton 
\begin{align}
	\bdsb{y} = \mathbf{S}\mathbf{z} +\mathbf{w}
\end{align}
with $\mathbf{z}$ being a zero-ton $\mathbf{z}=\mathbf{0}$ or a multi-ton $\left|\supp{\mathbf{z}}\right| > 1$.

\subsubsection{Detecting a Zero-ton as a Single-ton}
This event happens when a zero-ton $\bdsb{y}=\mathbf{w}$ passes the single-ton verification:
\begin{align}
	\Prob{\mathcal{H}_{\textrm{S}}(\widehat{k},\widehat{x}[\widehat{k}])\leftarrow\mathcal{H}_{\textrm{Z}}}
	=
	\Prob{\frac{1}{P}\left\| \bdsb{y} - \widehat{x}[k]\mathbf{s}_{\widehat{k}}\right\|^2 \leq (1+\gamma \SNRmin) \sigma^2}.
\end{align}
Since $\bdsb{y}=\mathbf{w}$ and $\|\widehat{x}[\widehat{k}]\mathbf{s}_{\widehat{k}}\|^2=P\rho^2$, it can be easily bounded by Lemma \ref{general_tail} as:
\begin{align}
	 \Prob{\mathcal{H}_{\textrm{S}}(\widehat{k},\widehat{x}[\widehat{k}])\leftarrow\mathcal{H}_{\textrm{Z}}}
	 \leq e^{-\frac{P}{4}\frac{(1-\gamma)^2\SNRmin^2}{1+2\SNRmin}}
\end{align}

\subsubsection{Detecting a Multi-ton as a Single-ton}
By definition, the error probability can be evaluated under the multi-ton model for some $L$-sparse vector $\mathbf{z}$:
\begin{align}
	\bdsb{y}  
	= \mathbf{S}\mathbf{z} + \mathbf{w}
\end{align}
when it passes the single-ton verification step for some index-value pair $(\widehat{k},\widehat{x}[\widehat{k}])$
\begin{align*}
	&\Prob{\mathcal{H}_{\textrm{S}}(\widehat{k},\widehat{x}[\widehat{k}])\leftarrow\mathcal{H}_{\textrm{M}}}
	=\Prob{\frac{1}{P}\left\|\bdsb{y}-\widehat{x}[\widehat{k}]\mathbf{s}_{\widehat{k}}\right\|^2 \leq  (1+\gamma \SNRmin)\sigma^2}
\end{align*}
for some $\widehat{k}$ and $\widehat{x}[\widehat{k}]$. Since $\mathbf{s}_{\widehat{k}}$ is not Gaussian, and thus we bound this probability with respect to $\mathbf{s}_{\widehat{k}}$ and $\mathbf{w}$ separately. Substituting $\bdsb{y}= \mathbf{S}\mathbf{z} + \mathbf{w}$ and replacing $\mathbf{u}= \mathbf{S}\mathbf{z} -\widehat{x}[\widehat{k}]\mathbf{s}_{\widehat{k}}$, we have:
\begin{align*}
	&\Prob{\frac{1}{P}\left\|\mathbf{u} + \mathbf{w}\right\|^2\leq (1+\gamma\SNRmin)\sigma^2}\\
	&=\Prob{\frac{1}{P}\left\|\mathbf{u} + \mathbf{w}\right\|^2\leq (1+\gamma\SNRmin)\sigma^2\Big| \frac{1}{P}\left\|\mathbf{u}\right\|^2\geq \frac{\SNRmin}{2} \sigma^2}
	\times\Prob{\frac{1}{P}\left\|\mathbf{u}\right\|^2\geq \frac{\SNRmin}{2} \sigma^2}\\
	&~~~+\Prob{\frac{1}{P}\left\|\bdsb{y}-\widehat{x}[\widehat{k}]\mathbf{s}_{\widehat{k}}\right\|^2\leq (1+\gamma\SNRmin)\sigma^2\Big|\frac{1}{P}\left\|\mathbf{u}\right\|^2\leq \frac{\SNRmin}{2} \sigma^2}
	\times\Prob{\frac{1}{P}\left\|\mathbf{u}\right\|^2\leq \frac{\SNRmin}{2} \sigma^2}\\
	&\leq \Prob{\frac{1}{P}\left\|\mathbf{u} + \mathbf{w}\right\|^2\leq (1+\gamma\SNRmin)\sigma^2\Big| \frac{1}{P}\left\|\mathbf{u}\right\|^2\geq \frac{\SNRmin}{2} \sigma^2} + \Prob{\frac{1}{P}\left\|\mathbf{u}\right\|^2\leq \frac{\SNRmin}{2} \sigma^2}.
\end{align*}
The first term can be bounded easily by Lemma \ref{general_tail} as:
\begin{align}
	 \Prob{\frac{1}{P}\left\|\mathbf{u} + \mathbf{w}\right\|^2\leq (1+\gamma\SNRmin)\sigma^2\Big| \frac{1}{P}\left\|\mathbf{u}\right\|^2\geq \frac{\SNRmin}{2} \sigma^2}
	 \leq e^{-\frac{P}{4}\frac{(\frac{1}{2}-\gamma)^2\SNRmin^2}{1+\SNRmin}}.
\end{align} 
Now it remains to bound $\Prob{\frac{1}{P}\left\|\mathbf{u}\right\|^2\leq \frac{\SNRmin}{2} \sigma^2}$, where $\mathbf{u}=\mathbf{S}\tilde{\mathbf{z}}$ and $\tilde{\mathbf{z}}=\mathbf{z}-\widehat{x}[\widehat{k}]\mathbf{e}_{\widehat{k}}$.

\begin{lem}\label{lem_boundedness}
Given $\bdsb{\phi}_p\defn\mathbf{S}_{(p,:)}^T$ and $\tilde{\mathbf{z}}$, the variable $\xi_p = |U[p]|^2 = |\bdsb{\phi}_p^T\tilde{\mathbf{z}}|^2$
is sub-exponential with mean $\bar{\xi}=\|\tilde{\mathbf{z}}\|^2$ and an Orlicz-norm (i.e. the $\psi_1$-norm of sub-exponential variables) for some absolute constant $c_5>0$
\begin{align}
	\xi_{\psi_1} = c_5 \bar{\xi}.
\end{align}
\end{lem}
\begin{proof}
Note that one can re-write the variable as $\xi_p=\bdsb{\phi}_p^H\mathbf{Q}\bdsb{\phi}_p$ with $\mathbf{Q}=\tilde{\mathbf{z}}^\ast\tilde{\mathbf{z}}^T$.
It is clear that $\xi_p$ is bounded and hence it is sub-exponential with mean
\begin{align}
	\bar{\xi} = \mathbb{E}\left[\bdsb{\phi}_p^H\mathbf{Q}\bdsb{\phi}_p\right] = \mathrm{Tr}(\mathbf{Q}) = \left\|\tilde{\mathbf{z}}\right\|^2.
\end{align}
To compute its Orlicz-norm, we only need to find the constant $\xi_{\psi_1}$ such that the following holds:
\begin{align*}
	\Prob{|\xi_p - \bar{\xi}|>t}< 2\exp\left(-\frac{t}{\xi_{\psi_1}}\right).
\end{align*}
Since $\left\|\mathbf{Q}\right\|_F=\left\|\tilde{\mathbf{z}}\right\|^2$, we can readily obtain the Orlicz-norm of the variable $\xi_{\psi_1} = c_5 \bar{\xi}$. Since $\bdsb{\phi}_p$ contains i.i.d. sub-gaussian variables, we can apply the Hanson-Wright inequality to obtain
\begin{align*}
	\Prob{|\xi_p - \bar{\xi}|>t}
	&=
	\Prob{\left|\bdsb{\phi}_p^H\mathbf{Q}\bdsb{\phi}_p-\mathbb{E}\left[\bdsb{\phi}_p^H\mathbf{Q}\bdsb{\phi}_p\right]\right|>t}
	\leq 2\exp\left(-\frac{t}{c_5\left\|\mathbf{Q}\right\|_F}\right)
\end{align*}
for some $c_5>0$. 
\end{proof}

By Lemma \ref{lem_boundedness}, the variable $\xi_p=|U[p]|^2$ is sub-exponential with mean $\bar{\xi}=\left\|\tilde{\mathbf{z}}\right\|^2$ and an Orlicz-norm $\xi_{\psi_1} = c_5\bar{\xi}$. Using the Bernstein-type inequality, then for any $t>0$ we have
\begin{align*}
	\Prob{ \left|\frac{1}{P} \sum_{p\in[P]}(\xi_p - \bar{\xi}) \right| \geq t}
	&\leq 2\exp\left(-c_6\frac{Pt}{\bar{\xi}}\right)
\end{align*}
for some constant $c_6$. By taking $t=\bar{\xi}-\SNRmin\sigma^2/2$, we have
\begin{align}
	\Prob{\frac{1}{P} \sum_{p\in[P]}(\xi_p - \bar{\xi}) \leq  - (\bar{\xi}-\SNRmin\sigma^2/2)}
	&\leq 2\exp\left(-c_6P\frac{(\bar{\xi}-\SNRmin\sigma^2/2)}{\bar{\xi}}\right)\\
	&=2\exp\left[-c_6P\left(1-\frac{\SNRmin\sigma^2}{2\bar{\xi}}\right)\right].
\end{align}
Since the probability is monotonically decreasing with respect to $\bar{\xi}$, we can substitute the minimum 
$\bar{\xi}= \left\|\tilde{\mathbf{z}}\right\|^2\geq 3\rho^2$ for any multi-ton into the above tail bound and obtain
\begin{align*}
	\Prob{\frac{1}{P} \sum_{p\in[P]}\xi_p \leq  \frac{\SNRmin}{2}\sigma^2}
	\leq 2e^{-c P}
\end{align*}
for some $c$.


\subsection{Proof of Missed Verification Rates in Proposition \ref{prop_miss_detection_binary}}\label{proof_prop_miss_detection_binary}

The missed verification events occur if the zero-ton or single-ton verifications pass when the ground truth is a single-ton $\mathcal{H}_{\textrm{S}}(k,x[k])$ for some $k\in[N]$:
\begin{align}
	\mathbf{u}_2 = \mathbf{S}\mathbf{z} +\mathbf{w} = x[k]\mathbf{s}_{k} + \mathbf{w}.
\end{align}

This event occurs when the ground truth is a single-ton $\mathcal{H}_{\textrm{S}}(k,x[k])$ with an index-value pair $(k,x[k])$, but the single-ton verification fails for some index-value pair $(\widehat{k},\widehat{x}[\widehat{k}])$ obtained from the single-ton search:
\begin{align*}
	\Prob{\mathcal{H}_{\textrm{S}}(k,x[k])}
	&=\Prob{\frac{1}{P}\left\|\mathbf{u}_2-\widehat{x}[\widehat{k}]\mathbf{s}_{\widehat{k}}\right\|^2\geq (1+\gamma\SNRmin)\sigma^2}.
\end{align*}
Since the single-ton search may or may not return the correct index-value pair, this probability is obtained by the total law of probability as
\begin{align*}
	&\Prob{\frac{1}{P}\left\|\mathbf{u}_2-\widehat{x}[\widehat{k}]\mathbf{s}_{\widehat{k}}\right\|^2\geq (1+\gamma\SNRmin)\sigma^2}\\
	&=\Prob{\frac{1}{P}\left\|\mathbf{u}_2-\widehat{x}[\widehat{k}]\mathbf{s}_{\widehat{k}}\right\|^2\geq (1+\gamma\SNRmin)\sigma^2\Big|\widehat{k} \neq k \textrm{~or~} \widehat{x}[\widehat{k}]\neq x[k]}
	\times\Prob{\widehat{k} \neq k \textrm{~or~} \widehat{x}[\widehat{k}]\neq x[k] }\\
	&~~~+\Prob{\frac{1}{P}\left\|\mathbf{u}_2-\widehat{x}[\widehat{k}]\mathbf{s}_{\widehat{k}}\right\|^2\geq (1+\gamma\SNRmin)\sigma^2\Big|\widehat{k} = k\textrm{~and~}\widehat{x}[\widehat{k}] = x[k]}
	\times\Prob{\widehat{k} = k\textrm{~and~}\widehat{x}[\widehat{k}] = x[k] }\\
	&\leq \Prob{\widehat{k} \neq k \textrm{~or~} \widehat{x}[\widehat{k}]\neq x[k] }
	+ \Prob{\frac{1}{P}\left\|\mathbf{u}_2-\widehat{x}[\widehat{k}]\mathbf{s}_{\widehat{k}}\right\|^2\geq (1+\gamma\SNRmin)\sigma^2\Big|\widehat{k} = k\textrm{~and~}\widehat{x}[\widehat{k}] = x[k]}.
\end{align*}
Note that the second term is the probability of some noise samples $\mathbf{w}$ exceeding the single-ton verification threshold $(1+\gamma\SNRmin)\sigma^2$, which can be easily bounded by a chi-squared tail:
\begin{align}\label{eq:term1}
	\Prob{\frac{1}{P}\left\|\mathbf{u}_2-\widehat{x}[\widehat{k}]\mathbf{s}_{\widehat{k}}\right\|^2\geq (1+\gamma\SNRmin)\sigma^2\Big|\widehat{k} = k\textrm{~and~}\widehat{x}[\widehat{k}] = x[k]}
	\leq
	e^{-\frac{P}{4}\left(\sqrt{1+2\gamma\SNRmin}-1\right)^2}
\end{align}

Now we focus on obtaining a tail bound for the single-ton search error $\Prob{\widehat{k} \neq k \textrm{~or~} \widehat{x}[\widehat{k}]\neq x[k]}$. Since we obtain the coefficient and index using $\mathbf{u}_0$ and $\mathbf{u}_1$ separately, the error probability can be obtained as:
\begin{align}\label{temp_eq6}
	\Prob{\widehat{k} \neq k \textrm{~or~} \widehat{x}[k]\neq x[k]}
	&\leq
	\Prob{\widehat{k} \neq k} + \Prob{\widehat{x}[k]\neq x[k]}.
\end{align}
Clearly, the first term is equivalent to the decoding error probability of the channel code $\mathbf{C}$, which is $\Prob{\widehat{k} \neq k} = e^{-\zeta P}$. On the other hand, given $\mathbf{u}_0=x[k] \mathbf{1}_P+\mathbf{w}_0$ and $x[k] \in \{\pm \rho\}$, the probability of wrongly estimating the coefficient when the bin is a single-ton can be upper bounded easily as
\begin{align}
	 \Prob{\widehat{x}[k]\neq x[k]}
	 &=
	 \Prob{\left\|\mathbf{u}_0 + x[k] \mathbf{1}_P \right\|^2 \leq \left\|\mathbf{u}_0 - x[k] \mathbf{1}_P \right\|^2}\\
	 &\leq
	 \Prob{\left\|2x[k] \mathbf{1}_P + \mathbf{w}_0 \right\|^2 \leq \left\|\mathbf{w}_0 \right\|^2}\\
	 &=
	 \Prob{\mathcal{N}(0, 1) \geq \frac{2\rho \sqrt{P}}{\sigma}}
	 \leq
	 2e^{-2\SNRmin P}.
\end{align}

\section{Proof of Lemma~\ref{lem:code}}\label{sec:prf_lemma_code}
In this section, we prove Lemma \ref{lem:code}. The construction of the concatenated code in Lemma \ref{lem:code} is based on Justesen's concatenation scheme \cite{justesen1972class} and similar method is also analyzed in \cite{cheraghchi2009capacity}. 
The concatenated code consists of an outer code $f_{\text{out}}$ and an ensemble of inner codes $\mathcal{I}$. For the outer codes, we use an expander-based code proposed in \cite{spielman1995linear}. The outer code maps the message to a codeword with length $p$ on an alphabet with size $2^k$, i.e., $f_{\text{out}}:[N]\rightarrow [2^k]^p$. Recall that by definition, the rate of the outer code is $R_{\text{out}}=\lceil \log(N) \rceil/p$. We make essential use of the Theorem in \cite{spielman1995linear}.
\begin{thm}\label{thm:expander}
For every integer $k > 0$ and every absolute constant $R^\prime< 1$, there is an explicit family of expander-based linear codes with alphabet $[2^k]$ and rate $R_{\text{out}}=R^\prime$ that is error-correcting for a $O(1)$ fraction of errors. The running time of the encoder and the decoder is linear in the block length of the codewords.
\end{thm}
Note that here, the $O(1)$ fraction of error can be adversarially chosen, and that the decoding algorithm of the outer code does not rely on the knowledge of the channel. Now let $(c_1,c_2,\ldots,c_p)\in[2^k]^p$ be the codeword that we obtained from the outer code, and we call it the outer codeword. As we have mentioned, we use an ensemble of inner codes $\mathcal{I}$, which means that $\mathcal{I}=\{g_1,\ldots, g_p\}$ is a collection of $p$ codes which encode the symbols in the outer codeword as a new $q$-bit codeword with alphabet $\{1,-1\}$. Specifically, each code $g_i$ in $\mathcal{I}$ is a map $g_i: [2^k] \rightarrow \{1, -1\}^q$, and we encode the $i$-th symbol in the outer codeword by the $i$-th code in $\mathcal{I}$. This gives us the final codeword $(g_1(c_1),g_2(c_2),\ldots, g_p(c_p))\in\{1,-1\}^{qp}$, which also implies that the block length of the concatenated code is $P_0=qp$.

Then we show the details of the inner code ensemble. We choose the inner code ensemble to be the Wozencraft's ensemble \cite{massey1963threshold}. The  
Wozencraft's ensemble satisfies the property that all but a $o(1)$ fraction of the codes in the ensemble are capacity achieving, where the asymptotic is with respect to the block length $q$. Specifically, for the capacity achieving codes in the ensemble, the probability of decoding error is exponentially small in the block length $q$, i.e., $e^{-\alpha q}$ for some constant $\alpha>0$, as long as the rate of the codes $R_{\text{in}}=k/q$ is below the capacity of the BCS. Here, we should notice that we do need an upper bound of the bit flip probability in the design of the inner code since we need to get a lower bound of the capacity of the BSC, however, we do not need the exact value of the bit flip probability. Then, it is shown in \cite{cheraghchi2009capacity} that using brute force maximum likelihood decoder for the inner code and the decoding algorithm of the expander-based outer code, the error probability is exponentially small in the block length of the concatenated code, i.e., $e^{-\alpha^\prime P_0}$ for some constant $\alpha^\prime>0$. 

Now we analyze the block length and decoding complexity of the concatenated code. The number of codes in the Wozencraft's ensemble is $2^q$, meaning that $p=2^q$. Since rate of the outer code is a constant $R_{\text{out}}=\lceil \log(N) \rceil/p$ which can be arbitrarily close to 1, we know that $p=O(\log(N))$. Then $q=O(\log\log(N))$ and the block length of the concatenated code is $P_0=qp=O(\log(N)\log\log(N))$. Consequently the error probability is $e^{-\alpha^\prime P_0}=O(\frac{1}{\poly(N)})$, where $\poly(N)$ is a polynomial of $N$ which can have arbitrarily large degree. Consider the decoding complexity. For the inner code, the complexity of testing each possible message is $O(q)$ and there are $2^k=2^{qR_{\text{in}}}$ messages. Therefore, for each inner code, the computational complexity of the brute force maximum likelihood decoding is $O(2^{qR_{\text{in}}}q)$. Since there are $p$ inner codes, the complexity of decoding all the inner codes is $O(2^{qR_{\text{in}}}qp)=O(p^{1+R_{\text{in}}} q) = O(\log^{1+R_{\text{in}}}(N)\log\log(N))$. Since we do not require the inner code to be capacity achieving, $R_{\text{in}}$ can be arbitrarily close to $0$, we can conclude that complexity of decoding all the inner codes is $O(\log^{1+r}(N))$, where $r>0$ can be arbitrarily small. Since the complexity of decoding the outer code is linear in its block length, which is $O(p)=O(\log(N))$, we know that the decoding complexity of the concatenated code is $O(\log^{1+r}(N))$.

\section{Proof of Lemma \ref{lem:peeling}}\label{prf:peeling}
The proof of Lemma \ref{lem:peeling} is based on density evolution, and the basic idea is to get a recursive equation to analyze the fraction of sparse coefficients that are not recovered in a particular iteration. We provide a brief proof here and focus on the truncation peeling strategy, which is main difference from the previous results.

We do not consider the connection between the zero elements and the measurement bins, meaning that we only focus on the $d$-left regular random bipartite graph with $K$ left nodes and $R$ right nodes. We let $R=\eta K$ for some constant $\eta>0$. Using Poisson approximation, we get the expected fraction of edges which are connected to right nodes with degree $i$ is 
$$
\rho_i \approx \frac{(d/\eta)^{i-1}e^{-d/\eta}}{(i-1)!}.
$$
We then consider the peeling process as a message passing process on the bipartite graph. According to our peeling decoding algorithm, a single-ton can send a ``peeling'' message to a left node connected to it, and a peeled left node sends ``peeled'' message to all the bins that are connected to it. In a particular iteration, a bin sends a ``peeling'' message to a left node through an edge if other edges connected to this bin all send ``peeled'' messages in the previous iteration and a left node sends a ``peeled'' message to a bin through an edge if if at least one of the bins that is connected to it sends a ``peeling'' message to it. We should also notice that the bins with degree greater than $D$ never send ``peeling'' message to the left nodes due to the truncation strategy. 

As in previous proofs, we still need to first assume that the neighborhood of each edge with a constant depth is a tree (tree-like assumption). Let $p_j$ be the probability that in the $j$-th iteration, a randomly chosen edge is \emph{not} peeled, i.e., sending a ``not peeled'' message. Then, under the tree-like assumption, we have the density evolution equation:
$$
p_{j+1} =F(p_j)= \left(1-\sum_{i=1}^D \rho_i(1-p_j)^{i-1} \right)^{d-1}.
$$
Similar to the analysis in~\cite{pedarsani2017phasecode}, we need to consider the fix point of $F(t)$, i.e., the point such that $F(t) = t$, and show that the fix point can be arbitrarily small by choosing proper parameters. We have
\begin{align}
F(t) &=  \left( 1-\sum_{i=1}^D \frac{(d/\eta(1-t))^{i-1}e^{-d/\eta}}{(i-1)!} \right)^{d-1}  \nonumber \\
& = \left( 1-\sum_{i=0}^{D-1} \frac{(d/\eta (1-t))^i e^{-d/\eta}}{i!} \right)^{d-1}  \nonumber \\
& = \left( 1- e^{-d/\eta}(e^{d(1-t)/\eta} - \frac{e^\xi(d(1-t)/\eta)^D}{D!}) \right)^{d-1}, \nonumber
\end{align}
where $0<\xi<d(1-t)/\eta$. We can choose $D$ to be large enough such that $\frac{(d(1-t)/\eta)^D}{D!}<\frac{1}{2}$. Then we have
$$
F(t) < \left( 1-\frac{1}{2}e^{-dt/\eta} \right)^{d-1} := G(t).
$$
Then we know that the fix point of $F(t)$ should be upper bounded by that of $G(t)$. Further, if we keep $d/\eta$ to be a constant and enlarge $d$, the fix point of $G(t)$ can be arbitrarily small, and consequently, the fix point of $F(t)$ can be arbitrarily small. More specifically, let $p^\star\in(0,1)$ be the fix point of $F(t)$, then for any $p>0$, there exist parameters $d$ and $\eta$ such that $p^\star<p$. Here, we briefly analyze the relationship between $\eta$ and the fix point of $G(t)$, denoted by $t^\star$. Since $t^\star=G(t^\star)=( 1-\frac{1}{2}e^{-dt^\star/\eta} )^{d-1}$, and $t^\star$ is close to 0, we have $e^{-dt^\star/\eta}\approx 1$ and thus $t^\star\approx (\frac{1}{2})^{d-1}$. Therefore, $d=O(\log(1/t^\star))$, and further, since we keep $d/\eta$ as a constant, $\eta=O(\log(1/t^\star))$. Since the fix point of $F(t)$, $p^\star$ is upper bounded by $t^\star$, we have $\eta=O(\log(1/p^\star))$. We can choose parameters such that $p^\star=O(p)$ and then, $\eta=O(\log(1/p))$.
Using the same argument as in \cite{pedarsani2017phasecode}, we can show that for any $p>0$, there exist a constant $n$ and proper parameters $d$ and $\eta$ such that $p_n<p$.

By the same martingale argument as in previous analysis, and taking the event that the tree-like assumption does not hold into consideration, we can show that the fraction of sparse coefficients which are not peeled is highly concentrated around $p_n$. Let $Z$ be the fraction of sparse coefficients which are not peeled after $n$-th iteration, when $K$ is large enough, we have for any $\delta>0$,
$$
\Prob{ | Z-p_n | >\delta}<2\exp\{-C\delta^2 K^{1/(4n+1)}\},
$$
where $C>0$ is a universal constant. The proof of Lemma \ref{lem:peeling} is completed by choosing $n$ such that $p_n<p$.

\section{Proof of Lemma \ref{lem:estimation}}\label{prf:estimate}
Without loss of generality, we omit the bin index, but we still keep an iteration counter in the notation. More specifically, we use $\vect{u}_0^{(t)}$ and $\vect{u}_1^{(t)}$ to denote the \emph{remaining} location and verification measurements in a particular bin (say bin $i$) at the $t$-th iteration, respectively. We also use $\vect{z}$ to denote the signal that has actual contribution to the measurements, and $\vect{w}_0$ and $\vecw_1$ to denote the noise in the location and verification measurements, respectively.

Consider $t=1$. In the first iteration, we know that $\vect{u}_0^{(1)}$ and $\vect{u}_1^{(1)}$ are exactly the original measurements, i.e., 
\begin{align*}
\vect{u}_0^{(1)} &= \mat{S}_0\vect{z} + \vecw_0 \\
\vect{u}_1^{(1)} &= \mat{S}_1\vect{z} + \vecw_1.
\end{align*}
In Lemma~\ref{lem:code}, we have shown that if a bin is indeed a single-ton and the sparse coefficient is located at $j$, the concatenated code that we designed in the location matrix can find the location\footnote{As we have mentioned, due to sign ambiguity, the decoding algorithm can return up to two locations, but one of them is guaranteed to be $j$ with high probability.}  $j$ with probability $1-O(1/\poly(N))$. According to the estimation method in (\ref{eq:estimate}), we have
$$
\widehat{x}[j]=\frac{1}{P_1} \sum_{k=1}^{P_1} s_{1,k,j} u_{1,k}^{(1)} = \frac{1}{P_1} \sum_{k=1}^{P_1} x[j]+\tilde{w}_{1,k},
$$
where $\tilde{w}_{1,k} = s_{1,k,j}w_{1,k}$. Then we have $\tilde{\vect{w}}_1=\{ s_{1,k,j}w_{1,k} \}_{k=1}^{P_1}\sim\mathcal{N}(\vect{0}, \sigma^2\mat{I})$. By Chernoff bound, we have for any $\epsilon>0$,
$$
\Prob{\ABSL{\widehat{x}[j] - x[j]} \ge \epsilon} \le \exp \{ -\frac{P_1\epsilon^2}{2\sigma^2} \},
$$
therefore, by choosing $P_1=O(\frac{\sigma^2}{\epsilon^2}\log(N))$, we can get $\widehat{x}[j]$ such that $\ABSL{\widehat{x}[j] - x[j]} < \epsilon$ with probability $1-O(1/\poly(N))$. 

Consider the $t$-th iteration, $t>1$. Since $t>1$, bin $i$ is not a single-ton bin in the first iteration, and thus, $\ABSL{\supp{\vect{z}}}>1$. Let $\mathcal{B}=\supp{\vect{z}}\setminus\{j\}$, i.e., $\mathcal{B}$ is the set of location indices of the sparse coefficients which are peeled off from bin $i$ before the $t$-th iteration. According to the truncation peeling strategy, we have $\ABSL{\mathcal{B}}\le D-1$. Assume that for any $g\in\mathcal{B}$, we have $\ABSL{\widehat{x}[g]-x[g]}<C_g\epsilon$ with probability $1-O(1/\poly(N))$ for some constant $C_g>0$. We let $C_B=\sum_{g\in\mathcal{B}}C_g$.

Now we show that if there exists appropriate constant $C$ such that $\beta\ge C\epsilon$, then the decoding algorithm can find the location of the sparse coefficient at $j$ with probability $1-O(1/\poly(N))$. Recall that to conduct the decoding algorithm of the concatenated code, we need to take the sign of the remaining location measurements, i.e., getting $\mathsf{sgn}[\vect{u}_0^{(t)}]$. According to the peeling algorithm, we have 
$$
\vect{u}_0^{(t)} = \vect{u}_0^{(1)} - \sum_{g\in\mathcal{B}} \widehat{x}[g]\vect{s}_{0,g},
$$
which yields
$$
\vect{u}_0^{(t)} =\sum_{g\in\mathcal{B}}(x[g]-\widehat{x}[g])\vect{s}_{0,g}+x[j]\vect{s}_{0,j}+\vect{w}_0.
$$
Let $\tilde{\vect{s}} = \sum_{g\in\mathcal{B}}(x[g]-\widehat{x}[g])\vect{s}_{0,g}+x[j]\vect{s}_{0,j}$. We assume that $\beta$ is large enough such that in any constant iteration $\beta\ge 2C_B\epsilon$. Then, for each entry in $\tilde{\vect{s}}$,  we have $\SGN{\tilde{s}_k}=\SGN{x[j]s_{0,k,j}}$, and we can think of $\SGN{u_{0,k}^{(t)}}$ as a received symbol by transmitting $\SGN{x[j]s_{0,k,j}}$ through a BSC with bit flip probability upper bounded by $\Phi(-\frac{\beta}{2\sigma})$. Then, the decoding algorithm of the concatenated code still works since we have a constant upper bound of the bit flip probability. 

Then, we show that the value estimation method still works in the $t$-th iteration.
Since
$$
s_{1, k, j}u_{1,k}^{(t)} = x[j]+\sum_{g\in\mathcal{B}}(x[g]-\widehat{x}[g])s_{1, k,j}s_{1,k,g}+s_{1, k,j}w_{1,k}^{(t)},
$$
and $ \widehat{x}[j]=\frac{1}{P_1}\sum_{k=1}^{P_1} s_{1, k, j}u_{1,k}^{(t)} $, we know that conditioned on $\mat{S}_{1}$, $\widehat{x}[g]$ and the event that $\ABSL{\widehat{x}[g]-x[g]}<C_g \epsilon$ for all $g\in\mathcal{B}$, $\widehat{x}[j]\sim\mathcal{N}(x[j]+\bar{x}, \frac{\sigma^2}{P_1})$, where $\bar{x}=\frac{1}{P_1}\sum_{k=1}^{P_1}\sum_{g\in\mathcal{B}}(x[g]-\widehat{x}[g])s_{1,k,j}s_{1,k,g}$. We can see that $\ABSL{\bar{x}}<C_B\epsilon$, and by Chernoff bound,
\begin{equation}\label{eq:bound}
\Prob{\ABSL{\widehat{x}[j]-x[j]-\bar{x}}\ge \epsilon \bigm| \mat{S}_1, \ABSL{\widehat{x}[g]-x[g]}<C_g\epsilon }\le \exp\{-\frac{P_1\epsilon^2}{2\sigma^2}\}.
\end{equation}
Since (\ref{eq:bound}) is true for all $\mat{S}_1$, we can remove the condition on $\mat{S}_1$. Considering the fact that $\ABSL{\bar{x}}<C_B\epsilon$, we get
$$
\Prob{ | \widehat{x}[j]-x[j] | \ge (C_B+1)\epsilon \bigm| \ABSL{\widehat{x}[g]-x[g]}<C_g\epsilon} \le \exp\{-\frac{P_1\epsilon^2}{2\sigma^2}\}.
$$
Then, by law of total probability and union bound, we get
$$
\Prob{\ABSL{\widehat{x}[j]-x[j]}\ge (C_B+1)\epsilon} \le \exp\{-\frac{P_1\epsilon^2}{2\sigma^2}\} + \sum_{g\in\mathcal{B}} \Prob{\ABSL{\widehat{x}[g]-x[g]}\ge C_g\epsilon}\le O(\frac{1}{\poly(N)}),
$$
when $P_1=O(\frac{\sigma^2}{\epsilon^2}\log(N))$, which completes the proof.

\section{Proof of Lemma \ref{lem:energy}}\label{prf:energy}
We make essential use of the Johnson-Lindenstrauss Lemma~\cite{johnson1984extensions}; more specifically, we use the form stated in~\cite{baraniuk2008simple}.
\begin{lem}\label{lem:concentration}
~\cite{baraniuk2008simple} Let $\mat{S}_1\in\{-1,1\}^{P_1\times N}$ be a matrix with i.i.d. Rademacher entries. For any $\theta\in(0,1)$ and any $\vect{v}\in\R^{N}$, we have
$$
\Prob{\ABS{\frac{1}{P_1}\TWONL{\mat{S}_1\vect{v}}^2 - \TWONL{\vect{v}}^2} \ge \theta\TWONL{\vect{v}}^2 } \le 2\exp\{-P_1(\frac{\theta^2}{4} - \frac{\theta^3}{6})\}.
$$
\end{lem}

In the following, we omit the bin index and iteration counter, and let $\vect{u}_1$ be the actual verification measurements of bin $i$ and $\vect{w}_1$ be the corresponding noise. Let $\vect{z}$ be the signal that has actual contribution to the measurements in this bin, i.e.,
$$
\vect{u}_1=\mat{S}_1\vect{z} + \vect{w}_1,
$$
and $\widehat{\vect{z}}$ be the hypothesis signal. Then, we have $\vect{u}_1-\mat{S}_1\widehat{\vect{z}} = \mat{S}_1\tilde{\vect{z}}+\vect{w}_1$, where $\tilde{\vect{z}}=\vect{z}-\widehat{\vect{z}}$. By Lemma~\ref{lem:concentration}, we have
$$
\Prob{\sqrt{1-\theta} \TWONL{\tilde{\vect{z}}}\le \frac{1}{\sqrt{P_1}}\TWONL{\mat{S}_1\tilde{\vect{z}}} \le \sqrt{1+\theta}\TWONL{\tilde{\vect{z}}}} \ge 1-2\exp\{-P_1(\frac{\theta^2}{4} - \frac{\theta^3}{6})\}.
$$
By triangle inequality, $\TWONL{\mat{S}_1 \tilde{\vect{z}} } - \TWONL{\vect{w}_1} \le \TWONL{\vect{u}_1 - \mat{S}_1\tilde{\vect{z}} } \le \TWONL{\mat{S}_1 \tilde{\vect{z}} } + \TWONL{\vect{w}_1}$.

Then, on the one hand, we have
$$
\Prob{\frac{1}{\sqrt{P_1}}\TWONL{\vect{u}_1 - \mat{S}_1\tilde{\vect{z}}} \ge \sqrt{1-\theta}\TWONL{\tilde{\vect{z}}} - \frac{1}{\sqrt{P_1}} \TWONL{\vect{w}_1} }  \ge 1-2\exp\{-P_1(\frac{\theta^2}{4} - \frac{\theta^3}{6})\}.
$$
By the concentration inequality of $\chi^2$ distribution, for any $\phi\in(0,3)$, we have
$$
\Prob{\frac{1}{P_1}\TWONL{\vect{w}_1}^2 \ge \sigma^2(1+\phi)} \\
\le \exp\{-P_1\frac{\phi^2}{18}\}.
$$
By union bound, we get
$$
\Prob{\frac{1}{\sqrt{P_1}}\TWONL{\vect{u}_1 - \mat{S}_1\tilde{\vect{z}}} \ge \sqrt{1-\theta}\TWONL{\tilde{\vect{z}}} - \sigma\sqrt{1+\phi}} \ge 1-2\exp\{-P_1(\frac{\theta^2}{4} - \frac{\theta^3}{6})\}-\exp\{-P_1\frac{\phi^2}{18}\}.
$$
Suppose that the supports of the hypothesis signal and the true signal are different, i.e., $\supp{\widehat{\vect{z}}}\neq \supp{\vect{z}}$, then by our assumption of the signal, $\|\tilde{\vect{z}}\|_2 \ge \sqrt{\beta}$. If $\sqrt{1-\theta}\sqrt{\beta}- \sigma\sqrt{1+\phi}>0$, we can get a valid threshold, which means that if $\beta>\sigma^2(\frac{1+\phi}{1-\theta})$, when $P_1=O(\log(N))$, 
\begin{equation}\label{eq:test1}
\Prob{\frac{1}{P_1}\TWONL{\vect{u}_1 - \mat{S}_1\tilde{\vect{z}}}^2 \ge \tau} \ge 1-O(\frac{1}{\poly(N)}),
\end{equation}
for any $\tau\in(0, (\sqrt{1-\theta}\sqrt{\beta}- \sigma\sqrt{1+\phi})^2)$.

On the other hand, we also have
$$
\Prob{\frac{1}{\sqrt{P_1}}\TWONL{\vect{u}_1 - \mat{S}_1\tilde{\vect{z}}}\le \sqrt{1+\theta}\TWONL{\tilde{\vect{z}}} + \frac{1}{\sqrt{P_1}}\TWONL{\vect{w}_1}} \ge 1-2\exp\{-P_1(\frac{\theta^2}{4} - \frac{\theta^3}{6})\}.
$$
Consider the case when $\supp{\widehat{\vect{z}}}=\supp{\vect{z}}$. In this case, we have found the correct support, or equivalently, all the locations of the singleton balls are found. By Lemma \ref{lem:estimation}, we know that $\|\tilde{\vect{z}}\|_\infty<\tilde{C}\epsilon$ for some constant $\tilde{C}$ with probability $1-O(1/\poly(N))$, when $P_1=O(\frac{\sigma^2}{\epsilon^2}\log(N))$. According to the truncation strategy, we also have $\ABSL{ \supp{\tilde{\vect{z}}} } \le D$, and thus $\TWONL{\tilde{\vect{z}}} \le \sqrt{D}\tilde{C}\epsilon:=C^\prime\epsilon$. Using this fact and union bound, we get
$$
\Prob{\frac{1}{\sqrt{P_1}}\TWONL{\vect{u}_1 - \mat{S}_1\tilde{\vect{z}}} \le \sqrt{1+\theta}C^\prime\epsilon+\sigma\sqrt{1+\phi}} 
\ge 1-O(\frac{1}{\poly(N)}),
$$
and thus, for any $\tau>(\sqrt{1+\theta}C^\prime\epsilon+\sigma\sqrt{1+\phi})^2$,
\begin{equation}\label{eq:test2}
\Prob{\frac{1}{P_1}\TWONL{\vect{u}_1 - \mat{S}_1\tilde{\vect{z}}} \le \tau} \ge 1-O(\frac{1}{\text{poly}(N)}).
\end{equation}
We can see that to get a valid threshold for both tests (\ref{eq:test1}) and (\ref{eq:test2}), we need 
$$
\sqrt{1-\theta}\sqrt{\beta}- \sigma\sqrt{1+\phi}>\sqrt{1+\theta}C^\prime\epsilon+\sigma\sqrt{1+\phi},
$$
and since $\theta$ and $\phi$ are constants, the proof is completed.

\section{Proof of Theorem~\ref{thm_continuous_recovery}}\label{prf:continuous_recovery}
We provide the brief final proof of Theorem~\ref{thm_continuous_recovery}. First, we analyze the error probability. There are three possible error events, 
\begin{itemize}
\item[(i)] $E_1$: the peeling algorithm does not find at least $1-p$ fraction of sparse coefficients.
\item[(ii)] $E_2$: error in decoding algorithm of concatenated code (location decoding).
\item[(iii)] $E_3$: error in value estimation or energy test.
\end{itemize}
Here, by error in value estimation, we mean there exists a sparse coefficient $x[j]$ and its estimate $\widehat{x}[j]$ such that $|x[j] - \widehat{x}[j] | \ge O(\epsilon)$.
We have shown that $\Prob{E_1|E_2^c,E_3^c}=O(\exp\{-c_1(p)K^{-c_2(p)}\})$. Since we need to conduct $O(K)$ times of location decoding and energy tests, using union bound, we know that $\Prob{E_2}=O(1/{\poly(N)})$ and $\Prob{E_3}=O(1/\poly(N))$. Then by union bound and law of total probability, we get the error probability 
\begin{align*}
\Prob{E_1 \cup E_2 \cup E_3} &\le \Prob{E_1} + \Prob{E_2} + \Prob{E_3} \\
& = \Prob{E_1|E_2^c,E_3^c} \Prob{E_2^c,E_3^c} + \Prob{E_1|E_2\cup E_3} \Prob{E_2\cup E_3} + \Prob{E_2} + \Prob{E_3} \\
& \le \Prob{E_1|E_2^c,E_3^c} + 2(\Prob{E_2} + \Prob{E_3} ) \\
& \le O(\exp\{-c_1(p)K^{-c_2(p)}\}) + O(1/{\poly(N)}) \\
& = O(1/{\poly(N)}),
\end{align*}
where the last inequality is due to the fact that $K=O(N^\delta)$ for some constant $\delta\in(0,1)$. The time complexity of the algorithm can be analyzed by the same method as in the quantized alphabet setting, and we omit the analysis here. 

Then, we turn to the $\ell_1$ norm recovery guarantee. Let $|x_{(1)}|, |x_{(2)}|, \ldots, |x_{(K)}|$ be the magnitudes of the $K$ sparse coefficients, ordered increasingly. Recall that we assume $| x_{(K)} | \le O(K^c)$ for some $c\in(0,1)$. Partition the $K$ sparse coefficients to $g = K^{(1+c)/2} $ subgroups as follows\footnote{Here, we simply assume that $K$ is an integer multiple of $g$.}:
$$
(|x_{(1)}|,\ldots, |x_{(K/g)}|), (|x_{(K/g+1)}|, \ldots, x_{(2K/g)}), \ldots, (|x_{(K-K/g+1)}|, \ldots, |x_{(K)}|).
$$
Let $b_i$ be the largest number in subgroup $i$. By Hoeffding's inequality, the probability that more than $(p+t)K/g$ elements are missed in a subgroup is upper bounded by $2e^{-2t^2K/g}$. Taking $t = 1/\log(K)$ and using union bound, we have
\begin{equation}\label{eq:ell1eq1}
\|\widehat{\vect{x}} - \vect{x} \|_1 \le \sum_{i=1}^g [ b_i (p+1/\log(K)) K/g + O(K\epsilon /g) ] = O(K\epsilon) + \sum_{i=1}^g b_i (p+1/\log(K)) K/g,
\end{equation}
with probability $1-O(ge^{-\frac{2K}{g\log^2(K)}} + \frac{1}{\poly(N)})$. Further,
\begin{equation}\label{eq:ell1eq2}
\begin{aligned}
\sum_{i=1}^g b_i K/g & \le (|x_{(1)}| + \sum_{i=1}^g b_i) K/g \\
&\le  \|\vecx\|_1 + b_g K/g \\
&\le  \|\vecx\|_1(1+O(\frac{K^c}{g})) \\
&= \|\vecx\|_1 (1+O(K^{-\frac{1-c}{2}})).
\end{aligned}
\end{equation}
Then, combining~\eqref{eq:ell1eq1} and~\eqref{eq:ell1eq2}, we can see that with probability at least $1-O(K^{\frac{1+c}{2}}e^{-\frac{2K^{(1-\gamma)/2}}{\log^2(K)}} + \frac{1}{\poly(N)})$,
$$
\|\widehat{\vect{x}} - \vect{x} \|_1 \le \|\vecx\|_1 (p+1/\log(K)) (1+O(K^{-\frac{1-c}{2}})) + O(K\epsilon) = p\|\vecx\|_1(1+o(1)) + O(K\epsilon).
$$
Since $K = O(N^\delta)$, $\frac{1}{\poly(N)}$ is the dominant term in the error probability.
In addition, since $\|\vecx\|_1 \ge K\beta$, we obtain
$$
\| \widehat{\vect{x}} - \vect{x} \|_1 \le p\|\vecx\|_1(1+o(1)) + O(\frac{\epsilon}{\beta} \| \vecx \|_1) := \kappa \| \vecx \|_1.
$$
Here, $\kappa$ can be arbitrarily small since $p$ and $\epsilon$ can be arbitrarily small. Thus, we conclude that with probability at least $1-O(\frac{1}{\poly(N)})$, we have $\|\widehat{\vect{x}} - \vect{x} \|_1 \le \kappa \|\vecx\|_1$.

\section{Tail Bounds}\label{sec:general_tail_chi}
Here we derive some tail bounds that are useful in our analysis.

\begin{lem}[Non-central Chi-Square Tail Bounds in \cite{birge2001alternative}]\label{lem_general_tail_chi}
Let $Z\sim\chi_D^2$ be a non-central chi square variable with $D$ degrees of freedom and non-centrality parameter $\nu\geq 0$. Then for all $z\geq 0$, the following tail bounds hold:
\begin{align*}
	&\Prob{Z \geq (D+\nu)+2\sqrt{(D+2\nu) z} + 2z} \leq \exp(-z)\\
	&\Prob{Z \leq (D+\nu)-2\sqrt{(D+2\nu) z}} \leq \exp(-z)	
\end{align*}
\end{lem}

\begin{lem}\label{general_tail}
Given $\mathbf{u}=[u[0],\cdots,u[P-1]]^T$ and a vector $\mathbf{w}=[w[0],\cdots,w[P-1]]^T$ with i.i.d. Gaussian variables $w[p]\sim\mathcal{N}(0,\theta^2)$ for all $p\in[P]$, the following tail bound holds:
\begin{align}\label{eq:general_tail}
	&\Prob{\frac{1}{P}\left\|\mathbf{u}+\mathbf{w}\right\|^2 \geq \tau_1} \leq e^{-\frac{P}{4}\left(\sqrt{2\tau_1/\theta^2-1} - \sqrt{1+2\nu_0}\right)^2}\\
	&\Prob{\frac{1}{P}\left\|\mathbf{u}+\mathbf{w}\right\|^2 \leq \tau_2} \leq e^{-\frac{P}{4}\frac{\left(1+\nu_0 - \tau_2/\theta^2\right)^2}{1+2\nu_0}}
\end{align}
for any $\tau_1$ and $\tau_2$ that satisfy
\begin{align}\label{tau_requirement}
	\tau_1 &\geq \theta^2(1+\nu_0),\quad
	\tau_2 \leq \theta^2(1+\nu_0),
\end{align}
where $\nu_0$ is the {\it normalized non-centrality parameter} given by
\begin{align}\label{nu0_def}
	\nu_0 \defn \frac{\left\|\mathbf{u}\right\|^2}{P\theta^2}.
\end{align}	
\end{lem}

\begin{proof}
The quantity $\left\|\mathbf{u}+\mathbf{w}\right\|^2$ can be written element-wise as
\begin{align}
	\left\|\mathbf{u}+\mathbf{w}\right\|^2 
	=
	\sum_{p=0}^{P-1} \left(u[p]+w[p]\right)^2 
\end{align}
where each summand is a normal random variable with mean $u[p]$ and variance $\theta^2$. Therefore, according to the definition of non-central chi-square variables, the quantity
\begin{align}
	\frac{\left\|\mathbf{u}+\mathbf{w}\right\|^2}{\theta^2} \sim \chi_P^2
\end{align}
is a non-central $\chi^2$ random variable of $P$ degrees of freedom with a non-centrality parameter 
\begin{align}\label{nu_def}
	\nu = \sum_{p=0}^{P-1}\frac{|u[p]|^2}{\theta^2} = \frac{\left\|\mathbf{u}\right\|^2}{\theta^2}.
\end{align}
For notational convenience, we use the normalized non-centrality parameter $\nu_0$ in \eqref{nu0_def} such that $\nu = P\nu_0$. Without loss of generality, let the thresholds $\tau_1$ and $\tau_2$ take the following form with respect to $z_1$ and $z_2$:
\begin{align*}
	\tau_1 &= \frac{\theta^2}{P}\left[(P+P\nu_0)+2\sqrt{(P+2P\nu_0) z_1} + 2z_1\right]\\
	\tau_2 &= \frac{\theta^2}{P}\left[(P+P\nu_0)-2\sqrt{(P+2P\nu_0) z_2}\right],
\end{align*}
then the tail bounds in Lemma \ref{lem_general_tail_chi} can be obtained easily with respect to $z_1$ and $z_2$. Using \eqref{nu_def}, the corresponding $z_1$ and $z_2$ can be solved as
\begin{align*}
	z_1 &= \frac{P}{4}\left(\sqrt{2\tau_1/\theta^2-1} - \sqrt{1+2\nu_0}\right)^2\\
	z_2 &= \frac{P}{4}\frac{\left(1+\nu_0 - \tau_2/\theta^2\right)^2}{1+2\nu_0}
\end{align*}
as long as the thresholds $\tau_1$ and $\tau_2$ satisfy \eqref{tau_requirement}. Thus according to Lemma \ref{lem_general_tail_chi}, we have the tail bounds in \eqref{eq:general_tail}.
\end{proof}

\begin{cor}\label{general_tail_worst}
Suppose that the normalized non-centrality parameter $\nu_0$ in Lemma \ref{general_tail} is bounded between
\begin{align}
	0\leq \nu_{\min} \leq \nu_0\leq \nu_{\max},
\end{align}
then the following worst case tail bounds hold:
\begin{align*}
	&\Prob{\frac{1}{P}\left\|\mathbf{u}+\mathbf{w}\right\|^2 \geq \tau_1} \leq e^{-\frac{P}{4}\left(\sqrt{2\tau_1/\theta^2-1} - \sqrt{1+2\nu_{\max}}\right)^2}\\
	&\Prob{\frac{1}{P}\left\|\mathbf{u}+\mathbf{w}\right\|^2 \leq \tau_2} \leq e^{-\frac{P}{4}\frac{\left(1+\nu_{\min} - \tau_2/\theta^2\right)^2}{1+2\nu_{\min}}}
\end{align*}
for any $\tau_1$ and $\tau_2$ that satisfy
\begin{align}\label{tau_requirement2}
	\tau_1 &\geq \theta^2(1+\nu_{\max}),\quad
	\tau_2 \leq \theta^2(1+\nu_{\min}).
\end{align}
\end{cor}
\begin{proof}
The first tail bound can be easily obtained since $\tau_1 \geq \theta^2(1+\nu_{\max})$, the exponent is monotonically decreasing with respect to $\nu_0$, and therefore substituting it with $\nu_{\max}$ leads to an upper bound. 

The second tail bound depends on the monotonicity with respect to $\nu_0$. The tail bound is monotonic with respect to the exponent, so in the following we examine the monotonicity of the exponent with respect to $\nu_0$. The exponent can be re-written as a form of the $x+1/x$ function:
\begin{align}
	\frac{\left(1+\nu_0 - \tau_2/\theta^2\right)^2}{1+2\nu_0}
	&=\left(\nu_0 + \frac{1}{2}\right) + \frac{\left(\frac{1}{2}-\frac{\tau_2}{\theta^2}\right)^2}{\left(\nu_0 + \frac{1}{2}\right)} + 2\left(\frac{1}{2}-\frac{\tau_2}{\theta^2}\right),
\end{align}
which has a minimum at
\begin{align}
	\nu_0^\star = \left|\frac{1}{2}-\frac{\tau_2}{\theta^2}\right| - \frac{1}{2},
\end{align}
and monotonically increasing for any $\nu_0>\nu_0^\star$. Now it remains to see whether $\nu_0^\star$ is within the interval $[\nu_{\min},\nu_{\max}]$, which needs to be discussed separately depending on the choice of $\tau_2$:
\begin{enumerate}
	\item $\theta^2/2\leq \tau_2\leq \theta^2(1+\nu_{\min})$: in this case, we have
	\begin{align}
		\nu_0^\star = \frac{\tau_2}{\theta^2} - 1\leq \nu_{\min}.
	\end{align}	
	\item $0 < \tau_2 < \theta^2/2$: in this case, we have
	\begin{align}
		\nu_0^\star = - \frac{\tau_2}{\theta^2} \leq 0 \leq \nu_{\min}.
	\end{align}		
\end{enumerate}
Therefore, it has been shown that as long as $\tau_2$ satisfies \eqref{tau_requirement2}, the exponent is monotonically increasing with respect to $\nu_0\in[\nu_{\min},\nu_{\max}]$ and therefore the minimum exponent is achieved by substituting $\nu_0$ with $\nu_{\min}$.
\end{proof}

\end{document}